\newtheorem{theorem}{Theorem}
\newtheorem{proposition}{Proposition}
\newtheorem{lemma}{Lemma}
\newtheorem{corollary}{Corollary}
\newtheorem{remark}{Remark}
\newtheorem{assumption}{Assumption}
\newtheorem{proof}{Proof}
\newtheorem{proof of Theorem 1}{Proof of Theorem 1}
\begin{document}
%

\title{Distributed algorithms for solving a class of convex feasibility problems}
%
%
%

\author{Kaihong~Lu,~
       Gangshan~Jing,~
       and~Long~Wang~
\thanks{This work was supported by National Science Foundation of China (Grant Nos. 61533001 61375120, and 61603288). (\emph{Corresponding author: Long Wang})}
\thanks{K. Lu and G. Jing are with the Center for Complex Systems, School of Mechano-electronic Engineering, Xidian University, Xi'an 710071, China
 (e-mail:khong\_lu@163.com; nameisjing@gmail.com)
 }
\thanks{L. Wang is with the Center for Systems and Control, College of Engineering, Peking
University, Beijing 100871, China
 (e-mail: longwang@pku.edu.cn)}
}

\maketitle

\begin{abstract}

In this paper, a class of convex feasibility problems (CFPs) are studied for multi-agent systems through
local interactions. The objective is to search a feasible solution to the convex inequalities with some set constraints in a distributed manner. The
distributed control algorithms, involving subgradient and projection, are proposed for both continuous-
and discrete-time systems, respectively. Conditions associated with connectivity of the directed communication graph are given to ensure convergence of the algorithms. It is shown that under mild conditions, the states of all agents reach consensus asymptotically and the consensus state is located in the solution set of the CFP. Simulation examples are presented to demonstrate the effectiveness of the theoretical results.
\end{abstract}

\begin{keywords}
Multi-agent systems; Consensus; Convex inequalities; Subgradient; Projection.
\end{keywords}

%
\IEEEpeerreviewmaketitle
\section{Introduction}
\IEEEPARstart{D}{istributed} coordination control of multi-agent systems (MASs) has been intensively investigated in various areas including engineering, natural science, and social science \cite{Y. Guan1}-\cite{MaJ3}. As a fundamental coordination problem, the consensus which requires that a group of autonomous agents achieve a common state has attracted much attention, see \cite{Jing4}-\cite{Y. Tang 1}. This is due to its wide applications in distributed control and estimation \cite{S. Kar9}, distributed optimization \cite{Lin11}-\cite{Rahili13} and distributed methods for solving linear equations \cite{S. Mou15, H. Cao17}.

Researches on consensus can be roughly categorized depending on whether the agents have continuous- or discrete- time dynamics. Noticeable works focusing on the multi-agent systems include \cite{Olfati-Saber5, F. Xiao18, Martin22, Hendrickx23} for the
continuous-time case and \cite{W. Ren6}, \cite{Hendrickx23}-\cite{L. Moreau21} for the
discrete-time case. In the aforementioned works, the agents interact with each other through a network and each agent adjusts its own state by using only local information from its neighbors. Within this framework,
connectivity of the communication graph plays a key role in achieving consensus, and consequently several
 conditions of the connectivity have been established. For example, the communication graph must have
a spanning tree when the topology is fixed \cite{Olfati-Saber5}, while the union of the communication graphs
should have a spanning tree frequently enough as the system
evolves when the topology is switching \cite{F. Xiao18}, \cite{L. Moreau21}. In addition, infinitely-joint connectedness, i.e., the infinitely occurring communication graphs are jointly connected, is necessary to make the agents reach consensus when the topology is time-varying \cite{Martin22}, \cite{Hendrickx23}.

In recent years, the constrained consensus problem that seeks to reach state agreement in the intersection of a
number of convex sets has been widely investigated. In \cite{A. Nedi24}, a projection-based consensus algorithm was proposed when the communication graph is balanced. This algorithm with time delays was studied in \cite{P. Lin25}, where the union of the communication graphs within a period was assumed to be strongly connected. The problem was extended to the continuous-time case in \cite{G. Shi26}, where each set serves as an optimal solution set of a local objective function, and the global optimal solution is achieved as long as the intersection of the constrained sets is computed. By taking the advantages of the property that the solution set of linear equations is an affine set, the projection-based consensus algorithm in \cite{G. Shi26} was successfully applied to solving linear equations in \cite{G. Shi27}, where the projection operator in \cite{G. Shi26} was replaced with a special affine projection operator. Unlike the distributed algorithm for solving linear equations in \cite{S. Mou15}, the projection-based consensus algorithm in \cite{G. Shi27} does not need to restrict each agent's initial state within the solution set of its corresponding equations. The methods in \cite{A. Nedi24}-\cite{G. Shi27} are useful for the computation of the intersection when the projections onto the local sets are easily calculated. However, in general, the application of the projected
method usually requires the solution of an auxiliary minimization
problem associated with the projection onto the local set at each time. This might lead to a limitation on its applications.

Comparing with computing the intersection and solving linear equations, a more general problem is solving CFPs, which usually needs to solve linear equations and convex inequalities simultaneously, and ensure the solution to be in the intersection of some simple convex sets. Applications of solving CFPs arise in different fields, such as pattern recognition \cite{Cover T M38}, signal processing \cite{Yamada I39} and image restoration  \cite{Y. Censor28}, \cite{G. T. Herman29}. It is also well known that some convex programming problems can be transformed into an equivalent CFP through the Karush-Kuhn-Tucker condition \cite{Han S P}. For example, the linear program problem in \cite{D. Richert30} can be transformed into a set of linear equations and inequalities. Inspired by the distributed methods for solving linear equations \cite{S. Mou15, G. Shi27}, distribute methods for CFPs will be studied in this paper. Different from linear equations, the solution set of a CFP is usually not a simple affine set due to the existence of inequalities which can even be nonlinear, thus it is necessary to develop alternative methods for solving this problem.

In this paper, distributed algorithms, involving subgradient and projection, are proposed for multi-agent systems to solve the CFP involving convex inequalities. Here the distributed control algorithms are designed for the continuous- and the discrete-time systems, respectively. Our aim is to obtain the graphic criteria for the convergence of these algorithms. One of the challenge is that, the subgradient and projection operations lead to nonlinearity of the algorithms. To deal with this problem, the control inputs are decomposed into a linear part involving the traditional consensus term and a nonlinear part involving the subgradient and projection operations. The linear part is analyzed by using the graph theory and some basic theories of stability associated with linear systems, while the nonlinear part is done by Lyapunov theory.  The contributions of this paper are summarized as follows:

(1) Both continuous- and discrete-time distributed algorithms are provided for solving CFPs. Different from the distributed algorithms for solving linear equations in \cite{S. Mou15, H. Cao17}, in which the algorithms need to restrict each agent's initial state within the solution set of its corresponding equations, the CFPs can be solved by the presented algorithms under arbitrary initial states.

(2) The continuous-time distributed gradient-based algorithm has also been investigated in \cite{Touri B34}, where convergence of the algorithm relies on a time-varying parameter. Our algorithm does not involve a time-varying parameter and it does not require the assumption on boundedness of the subgradient as in \cite{Touri B34}. We prove that, if the directed graph is fixed and strongly connected, all agents' states will reach a common point asymptotically and the point is located in the solution set of the CFP. Moreover, we find that the CFP can be solved if the $\delta-$graph associated with a time-varying graph is strongly connected.

(3) Discrete-time distributed subgradient-based algorithms have been studied in \cite{A. Nedi24}, where the communication graph is balanced. Unlike \cite{A. Nedi24, S.S. Ram 07}, in our algorithm, only relative information between the agents is required and the convergence can also be ensured when the communication graph is unbalanced. We prove that the effectiveness of the presented algorithm can be guaranteed when the directed graph is strongly connected.

This paper is organized as follows. In Section \ref{se2}, we present some notions in graph theory and state the problem studied in this paper. In Section \ref{se3}, centralized algorithms in both continuous- and discrete-time cases for the CFP are focused on and the convergence of them is analyzed. In Section \ref{se4},  the distributed control algorithm in continuous-time case is presented for the MAS to solve the CFP, and the convergence is analyzed under both fixed and time-varying communication graphs. The discrete-time case is studied in Section \ref{se5}. In Section \ref{se6}, a distributed gradient-based algorithm is designed for a CFP involving linear inequalities. Simulation examples are presented in Section \ref{se7}. Section \ref{se8} concludes the whole paper.

{\bf Notation:} Throughout this paper, we use $|a|$ to represent the absolute value of scalar $a$. $\mathbb{R}$ and $\mathbb{C}$ denote the set of real number and the set of complex number,respectively. Let $\mathbb{R}^{m}$ be the $m$-dimensional real vector space and $\mathbb{C}^{m}$ be the complex one. For a given vector $x\in\mathbb{R}^{m}$, $x>0(\geq 0)$ implies that each entry of vector $x$ is greater than (not less than) zero. $\|x\|$ denotes the standard Euclidean norm, i.e., $\|x\|=\sqrt{x^Tx}$. For a function $g(\cdot):\mathbb{R}^{m}\rightarrow\mathbb{R}$, we denote its plus function by $g^{+}(\cdot)=\max [g(\cdot), 0]$. $\textbf{1}_n$ denotes the $n$-dimensional vector with elements being all ones. $I_n$ denotes the $n\times n$ identity matrix. The transposes of matrix $A$ and vector $x$ are denoted as $A^T$ and $x^T$, respectively. For any two vectors $u$ and $v$, the operator $\langle u, v\rangle$ denotes the inner product of $u$ and $v$. For matrices $A$ and $B$, the Kronecker product is denoted by $A\otimes B$.

\section{Preliminary and problem formulation}\label{se2}

\subsection{Graph theory}
The communication topology is denoted by $\mathcal{G}(\mathcal{A}(t))=(\mathcal{V},\mathcal{E}(t),\mathcal{A}(t))$, $\mathcal{V}$ is a set of vertices, $\mathcal{E}(t)\subset\mathcal{V}\times\mathcal{V}$ is an edge set, and the weighted matrix $\mathcal{A}(t)=(a_{ij}(t))_{n\times n}$ is a non-negative matrix for adjacency weights of edges. If node $i$ can receive the information from node $j$, then node $j$ is called
 as node $i$'s neighbor and it is denoted by $(j,i)\in \mathcal{E}(t)$ and $a_{ij}(t)>0$. Otherwise, $a_{ij}(t)=0$. Denote $N_{i}(t)=\{j\in\mathcal{V}|(j,i)\in\mathcal{E}(t)\}$ to represent the neighbor set of node $i$ at time $t$. The Laplacian
matrix of the graph is defined as $L(t)=(l_{ij}(t))_{n\times n}$, where $l_{ij}(t)=-a_{ij}(t)$ if $i\neq j$ and $l_{ij}(t)=\sum\limits_{j = 1}^na_{ij}(t)$ if $i=j$ for any $i=1, \cdots, n$. For a fixed and directed graph $\mathcal{G}(\mathcal{A})$, a path of length $r$ from node $i_1$ to node $i_{r+1}$ is a sequence of $r + 1$ distinct nodes $i_1 \cdots, i_{r+1}$ such that $(i_q, i_{q+1}) \in\mathcal{E}$ for $q=1,\cdots, r$. If there exists a path between any two nodes in $\mathcal{V}$, then $\mathcal{G}(\mathcal{A})$ is said to be strongly connected. A directed graph, where every node has exactly one neighbor except the root, is said to be a directed tree. A spanning tree of a directed
graph is a directed tree formed by graph edges that connect all the
nodes of the graph \cite{C. Godsil 06}. We say that a graph has a spanning tree if a subset of the edges forms a spanning tree.

For a time-varying and directed graph $\mathcal{G}(\mathcal{A}(t))$, $(j,i)$ is called a $\delta-$edge if there always exist two positive constants $T$ and $\delta$ such that $\int_{t}^{t+T}a_{ij}(s)ds\geq\delta$ for any $t\geq0$. A $\delta-$graph, induced by $\mathcal{G}(\mathcal{A}(t))$, is defined as $\mathcal{G}_{(\delta, T)}=(\mathcal{V},\mathcal{E}_{(\delta, T)})$, where $\mathcal{E}_{(\delta, T)}=\big\{(j,i)\in\mathcal{V}\times\mathcal{V}|\int_{t}^{t+T}a_{ij}(s)ds\geq\delta~for~any~$ $ t\geq0\big\}$.
 The communication graph $\mathcal{G}(\mathcal{A}(t))$ is said to be balanced if the sum of the interaction weights from and to an agent $i$ are equal, i.e., $ \sum\limits_{j=1}^{n}a_{ij}(t)=\sum\limits_{j=1}^{n}a_{ji}(t)$.
 \begin{lemma}\label{le5}\cite{W. Ren6}
For a fixed graph $\mathcal{G}(\mathcal{A})$, if $\mathcal{G}(\mathcal{A})$ has a spanning tree, then the Laplacian matrix $L$ has one simple 0 eigenvalue and the other eigenvalues have positive real parts.
\end{lemma}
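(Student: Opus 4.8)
The plan is to establish three facts about $L$: that $0$ is an eigenvalue, that every nonzero eigenvalue has positive real part, and that the eigenvalue $0$ is simple. The first two are short; the simplicity is the substantive part.

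First, by the definition of $L$ every row sums to zero, so $L\mathbf{1}_{n}=0$ and $0$ is an eigenvalue with right eigenvector $\mathbf{1}_{n}$. For the remaining eigenvalues I would invoke the Gershgorin disc theorem: any eigenvalue $\lambda$ of $L$ satisfies $|\lambda-l_{ii}|\le\sum_{j\neq i}|l_{ij}|=\sum_{j\neq i}a_{ij}=l_{ii}$ for some $i$, hence $\lambda$ lies in the closed disc centred at $l_{ii}\ge 0$ with radius $l_{ii}$. Every such disc is contained in the closed right half-plane and meets the imaginary axis only at the origin, so $\mathrm{Re}(\lambda)\ge 0$, and $\mathrm{Re}(\lambda)=0$ forces $\lambda=0$; equivalently, all eigenvalues other than $0$ have positive real parts.

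Next I would show the geometric multiplicity of the eigenvalue $0$ is one, i.e. $\ker L=\mathrm{span}\{\mathbf{1}_{n}\}$. Let $x$ be real with $Lx=0$, put $c_{0}=\max_{i}x_{i}$ and $S=\{i:x_{i}=c_{0}\}$. From $(Lx)_{i}=\sum_{j\neq i}a_{ij}(x_{i}-x_{j})=0$ and the fact that, for $i\in S$, every summand is nonnegative, it follows that all in-neighbours of a node in $S$ again lie in $S$. Since $\mathcal{G}(\mathcal{A})$ has a spanning tree, its root $r$ admits a directed path to some node of $S$; traversing that path backwards, node by node, forces $r\in S$. The same argument applied to $\min_{i}x_{i}$ puts $r$ in the argmin set as well, so $\max_{i}x_{i}=\min_{i}x_{i}$, i.e. $x$ is a multiple of $\mathbf{1}_{n}$ (the complex case reduces to the real one because $L$ is real). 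Hence $\mathrm{rank}(L)=n-1$.

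Finally, the main obstacle: upgrading geometric simplicity to algebraic simplicity, i.e. showing $0$ is not a defective eigenvalue, which amounts to $\mathbf{1}_{n}\notin\mathrm{range}(L)$. Pick $c\ge\max_{i}l_{ii}$ and set $B:=cI_{n}-L$; then $B\ge 0$ entrywise, $B\mathbf{1}_{n}=c\mathbf{1}_{n}$, and Gershgorin applied to $B$ gives spectral radius $\rho(B)=c$. By the Perron--Frobenius theorem (whose existence statement holds for every nonnegative matrix, not only irreducible ones), $B^{T}$ has a nonzero, entrywise nonnegative eigenvector $\xi$ for the eigenvalue $c$, so $L^{T}\xi=(cI_{n}-B^{T})\xi=0$. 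Because $\mathrm{rank}(L^{T})=\mathrm{rank}(L)=n-1$, we get $\ker L^{T}=\mathrm{span}\{\xi\}$, and $\xi\ge 0,\ \xi\neq 0$ yields $\mathbf{1}_{n}^{T}\xi>0$; hence $\mathbf{1}_{n}\notin(\ker L^{T})^{\perp}=\mathrm{range}(L)$. Therefore the single Jordan block at $0$ has size one, so $0$ is algebraically simple, and together with the Gershgorin bound the lemma follows. An alternative route to algebraic simplicity is the directed matrix--tree theorem: the coefficient of $\lambda$ in $\det(\lambda I_{n}-L)$ equals, up to sign, a sum of weights of spanning arborescences of $\mathcal{G}(\mathcal{A})$, which is positive exactly when a spanning tree exists.
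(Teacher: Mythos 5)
The paper itself gives no proof of this lemma---it is quoted directly from the cited reference of Ren and Beard---so there is no in-paper argument to measure yours against; judged on its own, your blind proof is correct and self-contained. The three steps are all sound: $L\mathbf{1}_n=0$ exhibits the zero eigenvalue; Gershgorin confines the spectrum to the discs $|\lambda-l_{ii}|\le l_{ii}$, which touch the imaginary axis only at the origin, so every nonzero eigenvalue has positive real part; the max/min argument with $S=\{i:x_i=\max_j x_j\}$ is consistent with the paper's convention that $a_{ij}>0$ encodes the edge $(j,i)$ (so the summands $a_{ij}(x_i-x_j)$ vanish exactly when all in-neighbours of $i\in S$ lie in $S$), and walking the directed path from the root backwards correctly forces $\ker L=\mathrm{span}\{\mathbf{1}_n\}$. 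The substantive step---upgrading geometric to algebraic simplicity---is handled correctly: with $B=cI_n-L\ge 0$, $\rho(B)=c$, the weak Perron--Frobenius theorem applied to $B^{T}$ produces $\xi\ge 0$, $\xi\neq 0$ with $L^{T}\xi=0$, and $\mathbf{1}_n^{T}\xi>0$ shows $\mathbf{1}_n\notin\mathrm{range}(L)=(\ker L^{T})^{\perp}$, which excludes a Jordan chain of length two at $0$; this is the step most attempts omit. Two minor remarks: the intermediate claim $\ker L^{T}=\mathrm{span}\{\xi\}$ is not needed (the inner-product obstruction alone suffices), and your alternative route via the directed matrix--tree theorem is also viable provided the arborescence orientation is matched to the convention $a_{ij}>0\Leftrightarrow(j,i)\in\mathcal{E}$.
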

\begin{lemma}\cite{Olfati-Saber5}\label{le9}
For a fixed graph $\mathcal{G}(\mathcal{A})$, if $\mathcal{G}(\mathcal{A})$ is strongly connected, then there exists a vector $w = \left[ w_1  \cdots w_n \right]^T>0$ such that $w^TL=0$.
\end{lemma}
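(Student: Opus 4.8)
The plan is to derive the statement from Lemma \ref{le5} together with the Perron--Frobenius theorem applied to a shifted copy of $L^T$. First I would note that a strongly connected graph certainly contains a spanning tree, so Lemma \ref{le5} applies to $L$: it has a simple zero eigenvalue and all other eigenvalues have positive real part. Since $L$ and $L^T$ have the same spectrum, $0$ is also a simple eigenvalue of $L^T$; hence the left null space of $L$ is one--dimensional, and it suffices to produce a single vector $w>0$ with $L^T w = 0$.

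Next I would introduce the nonnegative matrix $M := cI_n - L^T$ with $c := \max_{1\le i\le n} l_{ii}$. Its off--diagonal entries are $M_{ij} = a_{ji}\ge 0$ and its diagonal entries are $c - l_{ii}\ge 0$, so $M\ge 0$ entrywise; moreover the directed graph encoded by the nonzero off--diagonal pattern of $M$ is a relabelling of $\mathcal{G}(\mathcal{A})$, so $M$ is irreducible precisely because $\mathcal{G}(\mathcal{A})$ is strongly connected. A Gershgorin--disc argument then pins down the spectral radius: each disc of $L$ is centered at $l_{ii}$ with radius $\sum_{j\neq i}|l_{ij}| = l_{ii}$, so for every eigenvalue $\lambda$ of $L$ the point $c-\lambda$ lies in a disc centered at $c-l_{ii}$ of radius $l_{ii}$, whence $|c-\lambda|\le (c-l_{ii})+l_{ii}=c$. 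Thus $\rho(M)\le c$, and since $0$ is an eigenvalue of $L^T$, the value $c$ is an eigenvalue of $M$, so in fact $\rho(M)=c>0$.

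Then I would invoke the Perron--Frobenius theorem for irreducible nonnegative matrices: $\rho(M)=c$ is a simple eigenvalue of $M$ and admits a strictly positive eigenvector $w$. Unwinding $Mw=cw$ gives $L^T w = 0$, i.e.\ $w^T L = 0$, with $w>0$, which is exactly the claim. (An alternative route is to note that $-L$ is the generator of an irreducible continuous--time Markov chain and to take $w^T$ to be its unique positive stationary distribution, or to write $w$ explicitly via the Matrix--Tree theorem as a weighted count of spanning in--trees; but the Perron--Frobenius argument is the most self--contained.)

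The only place any care is needed is the spectral bookkeeping in the middle step: one must check that the shift $c$ is chosen large enough that $\rho(M)$ is attained at the eigenvalue $c$ corresponding to $0\in\mathrm{spec}(L^T)$ rather than elsewhere, and one must be sure that it is the \emph{strong connectivity} hypothesis — not merely the spanning--tree condition — that yields irreducibility of $M$ and hence strict positivity of every entry of $w$. Everything else is routine nonnegative--matrix theory.
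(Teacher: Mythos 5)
Your argument is correct: the shift $M = cI_n - L^T$ with $c=\max_i l_{ii}$ is entrywise nonnegative and irreducible precisely because $\mathcal{G}(\mathcal{A})$ is strongly connected, the Gershgorin estimate forces $\rho(M)=c$, and Perron--Frobenius then delivers a strictly positive $w$ with $L^Tw=0$. The paper gives no proof of its own---the lemma is quoted from \cite{Olfati-Saber5}---and your Perron--Frobenius route is essentially the standard argument behind that cited result, so the proposal matches the intended approach.
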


For ease of description, if $\mathcal{G}(\mathcal{A})$ has a spanning tree, we use $\lambda_1(L)$ to represent the 0 eigenvalue and $\lambda_i(L), i=2,\cdots, n$ to represent other non-zero eigenvalues.

\subsection{Convex analysis}
A function $f(\cdot):\mathbb{R}^m\rightarrow\mathbb{R}$ is convex if it holds $f(\gamma x + (1 - \gamma )y) \le \gamma f(x) + (1 - \gamma )f(y)$ for any $x\neq y\in\mathbb{R}^m $ and $0<\gamma <1$. For convex function $f(x)$, if $\langle {\begin{array}{*{20}c}
   {\nabla f(x),} & {y - x}  \\
\end{array}} \rangle \le f(y) - f(x)$ holds for any $y\in\mathbb{R}^m$, then $\nabla f(x)$ is a subgradient of function $f$ at point $x\in\mathbb{R}^m$. There must exist subgradients for any convex function. Furthermore, if the convex function is differentiable, its gradient is the unique subgradient.

Given a set $\Omega\subset\mathbb{R}^m$, it is called as a convex set if $ \gamma x + (1 - \gamma )y \in \Omega$ for any scalar $0<\gamma <1$ and $x, y\in\Omega$. For a closed convex set $\Omega$, let $
\left\| x \right\|_\Omega  \mathop  = \limits^\Delta  \inf _{y \in \Omega } \left\| {x - y} \right\|$ denote the standard Euclidean distance
of vector $x\in\mathbb{R}^m$ from $\Omega$. Then, there is a unique element $P_\Omega  (x)\in \Omega$ such that $\left\| {x - P_\Omega  (x)} \right\| = \left\| x \right\|_\Omega$, where $P_\Omega  (\cdot)$ is called the projection onto the set $\Omega$ \cite{J. Aubin 01}. Moreover, $P_\Omega  (\cdot)$ has the non-expansiveness property: $\|P_\Omega  (x)-P_\Omega  (y)\|\leq\|x-y\|$ for any $x,y\in\mathbb{R}^m$.
\begin{lemma}\label{LE}
For a convex function $g(\cdot):\mathbb{R}^{m}\rightarrow\mathbb{R}$, suppose the set $X=\{x\in\mathbb{R}^{m}|g^{+}(x)=0\}$ is non-empty, it holds $z\in X$ if and only if 0 is a subgradient of the plus function $g^{+}$ at point $z$.
\end{lemma}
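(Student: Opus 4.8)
The plan is to reduce the statement to the elementary first-order optimality characterization for convex functions: for a convex $h:\mathbb{R}^m\to\mathbb{R}$, the point $z$ is a global minimizer of $h$ if and only if $0$ is a subgradient of $h$ at $z$ (this is immediate from the subgradient inequality $\langle 0, y-z\rangle \le h(y)-h(z)$). I would first note that $g^{+}(\cdot)=\max\{g(\cdot),0\}$ is a convex function, being the pointwise maximum of the convex function $g$ and the constant $0$, so subgradients of $g^{+}$ exist everywhere and the above characterization applies to $h=g^{+}$.

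Next I would record two trivial facts about $g^{+}$: (i) $g^{+}(x)\ge 0$ for every $x\in\mathbb{R}^m$, and (ii) since $X$ is assumed non-empty, there exists $x^{*}$ with $g^{+}(x^{*})=0$, so the infimum of $g^{+}$ over $\mathbb{R}^m$ equals $0$ and is attained exactly on $X$. Combining (i) and (ii): $z\in X$ $\iff$ $g^{+}(z)=0$ $\iff$ $g^{+}(z)\le g^{+}(y)$ for all $y\in\mathbb{R}^m$ $\iff$ $z$ is a global minimizer of $g^{+}$.

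For the forward direction, if $z\in X$ then $g^{+}(z)=0\le g^{+}(y)$ for all $y$, hence $\langle 0,y-z\rangle = 0 \le g^{+}(y)-g^{+}(z)$, so $0$ is a subgradient of $g^{+}$ at $z$. For the converse, if $0$ is a subgradient of $g^{+}$ at $z$, the subgradient inequality gives $g^{+}(y)-g^{+}(z)\ge \langle 0,y-z\rangle = 0$ for all $y$; taking $y=x^{*}\in X$ yields $g^{+}(z)\le g^{+}(x^{*})=0$, and since $g^{+}(z)\ge 0$ we conclude $g^{+}(z)=0$, i.e., $z\in X$.

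There is no real obstacle here; the only thing to be careful about is invoking convexity of $g^{+}$ (so that the subgradient notion is well posed and the optimality characterization is valid) and explicitly using the non-emptiness of $X$ in the converse direction, since without it the value $0$ need not be attained and $z$ could fail to lie in $X$ even while minimizing $g^{+}$.
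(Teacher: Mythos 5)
Your proposal is correct and follows essentially the same argument as the paper's proof: one direction uses $g^{+}(z)=0$ together with $g^{+}\ge 0$ to verify the subgradient inequality with the zero vector, and the converse evaluates the subgradient inequality at a point of $X$ and uses non-negativity of $g^{+}$ to force $g^{+}(z)=0$. The framing via the global-minimizer characterization is just a repackaging of the same steps, so there is nothing substantively different to flag.
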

\begin{proof}
\emph{Sufficiency}. By the definition of $g^+(\cdot)$, we know function $g^+(\cdot)$ is convex. Therefore, the subgradient of function $g^{+}(\cdot)$ always exists. If 0 is a subgradient of the plus function $g^{+}$ at point $z$, by the definition of the subgradient, we have $g^{+}(y)-g^{+}(z)\geq0^T(y-z)=0$ for any $y\in\mathbb{R}^{m}$. Let $y\in X$, then we have $-g^{+}(z)\geq0$. By this and the fact that $g^{+}(z)\geq0$, it can be concluded that $g^{+}(z)=0$.

\emph{Necessity}. If $z\in X$, we have $g^{+}(z)=0$. Due to the fact that $g^{+}(y)\geq0$, we have $g^{+}(y)-0\geq0^T(y-z)$ for any $y\in\mathbb{R}^{m}$. Thus, 0 is a subgradient of the plus function $g^{+}$ at point $z$.
\end{proof}
\begin{lemma}\label{le1}\cite{A. Nedi24}
Given a closed convex set $\Omega\subset\mathbb{R}^m$, it holds
\[
\left\langle {P_\Omega  (x) - x, x - y} \right\rangle \leq -\|x\|_{\Omega}^2
\] for any $x\in \mathbb{R}^m, y\in\Omega$.
\end{lemma}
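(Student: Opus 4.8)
The plan is to derive the inequality from the classical variational (obtuse-angle) characterization of the Euclidean projection onto a closed convex set. Since this characterization is not among the facts recalled above, I would first establish it: for any $x\in\mathbb{R}^m$ and any $z\in\Omega$,
\[
\langle x-P_\Omega(x),\,z-P_\Omega(x)\rangle\leq 0.
\]
To see this, fix $z\in\Omega$ and use convexity of $\Omega$ to note that $w_t:=(1-t)P_\Omega(x)+tz\in\Omega$ for every $t\in[0,1]$. The scalar function $\phi(t):=\|x-w_t\|^2$ is a quadratic in $t$, and by the defining property of $P_\Omega(x)$ it attains its minimum over $[0,1]$ at $t=0$. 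Expanding gives $\phi(t)-\phi(0)=2t\langle x-P_\Omega(x),\,P_\Omega(x)-z\rangle+t^2\|z-P_\Omega(x)\|^2\geq 0$ for all $t\in(0,1]$; dividing by $t$ and letting $t\to 0^+$ yields the displayed inequality.

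Given this, I would decompose $x-y=(x-P_\Omega(x))+(P_\Omega(x)-y)$ and expand the inner product:
\[
\langle P_\Omega(x)-x,\,x-y\rangle=\langle P_\Omega(x)-x,\,x-P_\Omega(x)\rangle+\langle P_\Omega(x)-x,\,P_\Omega(x)-y\rangle.
\]
The first term equals $-\|x-P_\Omega(x)\|^2=-\|x\|_\Omega^2$ by the definition of $\|\cdot\|_\Omega$. For the second term, apply the obtuse-angle inequality with $z=y\in\Omega$: it gives $\langle x-P_\Omega(x),\,y-P_\Omega(x)\rangle\leq 0$, which is the same as $\langle P_\Omega(x)-x,\,P_\Omega(x)-y\rangle\leq 0$. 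Adding the two estimates gives $\langle P_\Omega(x)-x,\,x-y\rangle\leq -\|x\|_\Omega^2$, which is exactly the claim.

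The only genuine obstacle is the first step, i.e. justifying the variational inequality; everything afterwards is a one-line splitting plus a sign check. Even that first step is routine once convexity of $\Omega$ and the minimality of $P_\Omega(x)$ are combined as above, so I do not expect any real difficulty. As an alternative to the limiting argument one may simply observe that $\phi$ is a convex parabola on $[0,1]$ whose minimum sits at the left endpoint, hence its right-derivative there is nonnegative, which is the same inequality.
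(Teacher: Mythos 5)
Your proof is correct. The paper itself gives no proof of this lemma---it is stated with a citation to the reference [A. Nedi\'c et al.]---and your argument (establish the variational inequality $\langle x-P_\Omega(x),\,z-P_\Omega(x)\rangle\leq 0$ for $z\in\Omega$, then split $x-y=(x-P_\Omega(x))+(P_\Omega(x)-y)$ and use $\|x\|_\Omega=\|x-P_\Omega(x)\|$) is exactly the standard derivation used in that reference, so there is nothing to add.
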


\subsection{Problem formulation}
Consider a MAS consisting of $n$ agents, labeled by set $\mathcal{V}=\{1, \cdots, n\}$. Here we consider agents with both continuous-time dynamics
\begin{equation}\label{eq1}
\dot{x}_i (t) =u_i(t), i\in\mathcal{V}
\end{equation}
and discrete-time dynamics
\begin{equation}\label{eq18}
{x}_i (t+1) ={x}_i (t)+u_i(t), i\in\mathcal{V}
\end{equation}
where $x_i (t) \in \mathbb{R}^m$ and $u_i (t) \in\mathbb{R}^m$ are respectively, the state and input of agent $i$. The objectives of this paper are to design $u_i (t)$ for (\ref{eq1}) and (\ref{eq18}) by using only local information to solve the following CFP:
\begin{equation}\label{eq2}
\left\{ {\begin{array}{*{20}c}\begin{split}
   &{g_i (x) \le 0}  \\
   &{x \in X:\mathop  = \limits^\Delta   \cap _{i = 1}^n X_i }  \\
\end{split}\end{array}} \right.\begin{array}{*{20}c}
   {} & {}  \\
\end{array}i = 1, \cdots ,n
\end{equation}
where $ x\in \mathbb{R}^{m}$, $g_{i}(\cdot):\mathbb{R}^{m}\rightarrow\mathbb{R}$ is a convex function, it is continuous on $(-\infty, \infty)$. Each $X_i$ is a closed convex set. Agent $i$ can only have access to the information associated with subgradient $\nabla g_{i}^{+}(\cdot)$ and projection $P_{X_i}(\cdot)$. We assume each $\nabla g_{i}^{+}(\cdot)$ is piecewise continuous for any $i = 1, \cdots ,n$.

\begin{remark} Note that if and only if $x\in X_i$, it holds $x=P_{X_i}(x)$. If $x=P_{X_i}(x)$ for all $i=1, \cdots ,n$, then $x$ belongs to their intersection. Since the algorithms in the following sections refer to the projection operator $P_{X_i}(\cdot)$, here we only consider some convex sets $X_i$ onto which the projection $P_{X_i}(x)$ can be easily calculated or their expressions could be given in detail at any point $x$. For example, if set $X$ represents the solution set of linear equation $a^Tx-b=0$, i.e., $X=\{x|a^Tx-b=0\}$, where $a, x\in \mathbb{R}^{m}, b\in \mathbb{R}$, it is easy to show that $P_{X}(x)=\left(I-\frac{aa^T}{\|a\|^2}\right)x+\frac{ba}{\|a\|^2}$ is a projection of $x$ onto set $X$. Consequently, it is not difficult to find that the algorithms in the following sections are also available to the CFP involving linear equations.
 \end{remark}
The solution set of CFP (\ref{eq2}) is denoted by $\textbf{X}^*$ and the following assumption is adopted throughout the paper.
\begin{assumption}\label{A1}
$\textbf{X}^*$ is non-empty.
\end{assumption}
Note that a vector $x^{*}$ belongs to $\textbf{X}^*$, if and only if it holds that $x^{*}\in X$  and $g_{i}^{+}(x^{*})=0$ for each $i\in\{1,\cdots,n\}$.

\section{Centralized algorithms for CFPs}\label{se3}
In this section, we focus on the following CFP
\begin{equation}\label{eq3}
\left\{ {\begin{array}{*{20}c}\begin{split}
   &{g (x) \le 0}  \\
   &{x \in X }  \\
\end{split}\end{array}} \right.\begin{array}{*{20}c}
   {} & {}  \\
\end{array}
\end{equation}
where $ x\in \mathbb{R}^{m}$, $g(\cdot):\mathbb{R}^{m}\rightarrow\mathbb{R}$ is a convex function, and $X$ is a closed convex set.

\subsection{Continuous-time case}
To solve CFP (\ref{eq3}), the following continuous-time subgradient and projection-based algorithm is proposed.
\begin{equation}\label{eq4}
\begin{split}
\dot x(t) =  - \alpha (t)[x(t) - P_X (x(t))]- \beta (t)\nabla g^ +  (x(t))
\end{split}
\end{equation}
where $\alpha (t)$,$\beta (t)\in \mathbb{R}$.
\begin{theorem}
Suppose CFP (\ref {eq3}) has a non-empty solution set $\textbf{X}^*$, if $\alpha (t)\geq0$ and $\beta (t)\geq0$ satisfy that $\int_0^\infty  {\alpha (t)}  \to \infty$ and $\int_0^\infty  {\beta(t)} \to \infty$, then $x(t)$ in (\ref{eq4}) converges to a vector $x^*$ in set $\textbf{X}^*$.
\end{theorem}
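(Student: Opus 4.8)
\emph{Proof idea.} The plan is to use the squared distance to an arbitrary element of $\textbf{X}^{*}$ as a Lyapunov function and then convert a Fej\'er-monotonicity estimate into convergence. Fix any $x^{*}\in\textbf{X}^{*}$ and put $V(t)=\|x(t)-x^{*}\|^{2}$. Differentiating along (\ref{eq4}),
\[
\dot V(t)=-2\alpha(t)\langle x(t)-x^{*},\,x(t)-P_{X}(x(t))\rangle-2\beta(t)\langle x(t)-x^{*},\,\nabla g^{+}(x(t))\rangle .
\]
For the first inner product I would invoke Lemma~\ref{le1} with $\Omega=X$ and $y=x^{*}\in X$, which gives $\langle x(t)-x^{*},x(t)-P_{X}(x(t))\rangle\ge\|x(t)-P_{X}(x(t))\|^{2}$. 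For the second, $\nabla g^{+}(x(t))$ is a subgradient of the convex function $g^{+}$ at $x(t)$ and $g^{+}(x^{*})=0$ (the characterization of $\textbf{X}^{*}$ recorded after Assumption~\ref{A1}), so the subgradient inequality yields $\langle x(t)-x^{*},\nabla g^{+}(x(t))\rangle\ge g^{+}(x(t))$. Hence
\[
\dot V(t)\le -2\alpha(t)\|x(t)-P_{X}(x(t))\|^{2}-2\beta(t)g^{+}(x(t))\le 0 .
\]
Thus $V$ is non-increasing, so $V(t)\to V_{\infty}\ge 0$; in particular the trajectory is bounded, and integrating the last inequality over $[0,\infty)$ gives $\int_{0}^{\infty}\!\big(\alpha(t)\|x(t)-P_{X}(x(t))\|^{2}+\beta(t)g^{+}(x(t))\big)dt<\infty$.

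The next step is to extract a limit point of the trajectory lying in $\textbf{X}^{*}$. Since $\int_{0}^{\infty}\alpha(t)dt=\infty$ and $\int_{0}^{\infty}\beta(t)dt=\infty$, the two finite weighted integrals force $\liminf_{t\to\infty}\|x(t)-P_{X}(x(t))\|=0$ and $\liminf_{t\to\infty}g^{+}(x(t))=0$, since otherwise the corresponding integral would diverge. I would then upgrade this to a single sequence $t_{k}\to\infty$ along which $x(t_{k})\to\bar x$ and \emph{both} $\|x(t_{k})-P_{X}(x(t_{k}))\|\to0$ and $g^{+}(x(t_{k}))\to0$, using the regularity of the flow (boundedness of $x(\cdot)$ gives boundedness of $\dot x(\cdot)$, hence uniform continuity of $x(\cdot)$, and then, via non-expansiveness of $P_{X}$ and local Lipschitzness of the finite convex function $g^{+}$, uniform continuity of $t\mapsto\|x(t)-P_{X}(x(t))\|$ and $t\mapsto g^{+}(x(t))$) together with the fact that both residuals are driven by the single equation (\ref{eq4}). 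Closedness of $X$ then gives $\bar x=\lim_{k}P_{X}(x(t_{k}))\in X$, and continuity of $g^{+}$ gives $g^{+}(\bar x)=0$, so $\bar x\in\textbf{X}^{*}$.

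Finally I would close the argument with the Fej\'er property already established: applying the first paragraph with $x^{*}=\bar x$ shows $t\mapsto\|x(t)-\bar x\|^{2}$ is non-increasing, hence convergent, and its limit must equal $\lim_{k}\|x(t_{k})-\bar x\|^{2}=0$; therefore $x(t)\to\bar x\in\textbf{X}^{*}$, which is the claimed convergence.

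The step I expect to be the main obstacle is the middle one: the projection residual and the constraint violation are controlled only through the two \emph{differently} weighted integrals, so the delicate point is to produce a single subsequence along which they vanish simultaneously and to verify that the resulting limit actually lies in the solution set. This is exactly where uniform continuity of the trajectory and the specific structure of (\ref{eq4}) have to be exploited carefully (and where, if $\alpha$ and $\beta$ are not a priori bounded, one needs a mild additional condition guaranteeing boundedness of $\dot x$).
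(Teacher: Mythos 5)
Your proposal follows essentially the same route as the paper's own proof: the same Lyapunov function $\|x(t)-x_0\|^2$ with $x_0\in\textbf{X}^*$, the same use of Lemma~\ref{le1} and the subgradient inequality to obtain $\dot V(t)\le -2\alpha(t)\|x(t)\|_X^2-2\beta(t)g^+(x(t))$, the same integration combined with divergence of $\int_0^\infty\alpha$ and $\int_0^\infty\beta$ to make the two residuals vanish along a subsequence, and the same Fej\'er-monotonicity argument to conclude convergence of the whole trajectory. If anything, you are more careful than the paper at the two points it glosses over (producing a single subsequence along which both residuals vanish and the state converges, and re-applying the monotone-distance argument with the limit point itself as the reference), so the proposal is sound.
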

\begin{proof}
Define a positive-definite Lyapunov function candidate $V(t)=\frac{1}{2}\|x(t)-x_0\|^2$, where $x_0\in\textbf{X}^*$. By the definition of $g^+$, it holds $g^+(x_0)=\|x_0\|_X=0$. Based on the property of the subgradient, we have $\left\langle x(t) - x_0, \nabla g^ +  (x(t)) \right\rangle \geq g^+  (x(t))$. Taking
the derivative of function $V(t)$ with respect to $t$ yields
\begin{equation}\label{eq5}
\begin{split}
\dot V(t) &= \left\langle x(t) - x_0, \dot x(t)\right\rangle\\
&=\left\langle x(t) - x_0, - \alpha (t)[x(t) - P_X (x(t))] - \beta (t)\nabla g^ +  (x(t))\right\rangle\\
&=- \alpha (t)\left\langle x(t) - x_0, x(t) - P_X (x(t))\right\rangle- \beta (t)\langle x(t) - x_0, \nabla g^ + (x(t))\rangle\\
&\leq- \alpha (t)\left\langle x(t) - x_0, x(t) - P_X (x(t))\right\rangle- \beta (t)g^ +  (x(t)).
\end{split}
\end{equation}
By Lemma \ref{le1}, we know $- \left\langle x(t) - x_0, x(t) - P_X (x(t))\right\rangle\leq-\|x(t)\|_X^2\leq0$. Note that $g^ +  (x(t))\geq 0$. Thus, $\dot V(t)\leq0$. Moreover, $V(t)$ is bounded by zero, it can be concluded that $V(t)$ converges and $V(\infty)$ exists, which implies $\|x(t)-x_0\|$ converges. By inequality (\ref{eq5}), we have
\[
\int_0^\infty  {\alpha (t)\left\| {x(t)} \right\|} _X^2 d_t  + \int_0^\infty  {\beta (t)g^ +  } (x(t))d_t  \le V(0) - V(\infty ) < \infty.
\]
Since ${\alpha (t)\left\| {x(t)} \right\|} _X^2$ and ${\beta (t)g^ +  } (x(t))$ are both non-negative, then we have $\int_0^\infty  {\alpha (t)\left\| {x(t)} \right\|} _X^2$ $ d_t< \infty$ and $\int_0^\infty  {\beta (t)g^ +  } (x(t))d_t < \infty$. These and the facts $\int_0^\infty  {\alpha (t)}  \to \infty$ and $\int_0^\infty  {\beta(t)} \to \infty$ imply $\lim\limits_{t\rightarrow\infty}\inf \left\| {x(t)}-P_X(x(t)) \right\|= \lim\limits_{t\rightarrow\infty}\inf g^ +(x(t))=0$. Thus, there exists a subsequence $\{x(t_k)\}$ of ${x(t)}$ such that $\lim\limits_{k\rightarrow\infty}x(t_{k})=\lim\limits_{t\rightarrow\infty}\inf x(t)=x^*$, where $x^*$ is a point in the solution set of CFP (\ref{eq3}).  Moreover, note that $V(x(t))$ converges, it can be concluded that
$\lim\limits_{t\rightarrow\infty}x(t)=x^*\in\textbf{X}$. Hence, the validity of the result is verified.
\end{proof}

\begin{corollary}
Suppose CFP (\ref {eq3}) has a non-empty solution set $\textbf{X}^*$, if $x(t)$ adjusts its value with the following dynamics
 \[
\begin{split}
\dot x(t) =  - [x(t) - P_X (x(t))]- \nabla g^ +  (x(t))
\end{split}
\]
then $x(t)$ converges to a vector $x^*$ in set $\textbf{X}^*$.
\end{corollary}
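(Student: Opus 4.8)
The plan is to recognize this corollary as the special case of Theorem~1 obtained by taking the constant gain functions $\alpha(t)\equiv 1$ and $\beta(t)\equiv 1$, and then to verify that these choices meet the hypotheses of Theorem~1. First I would note that $\alpha(t)=1\geq 0$ and $\beta(t)=1\geq 0$ for all $t\geq 0$, so the sign conditions hold trivially. Next I would check the divergence conditions: $\int_0^\infty \alpha(t)\,dt=\int_0^\infty 1\,dt=\infty$ and likewise $\int_0^\infty \beta(t)\,dt=\infty$. Since Assumption~\ref{A1} (equivalently, the non-emptiness of $\textbf{X}^*$) is assumed in the statement, all hypotheses of Theorem~1 are satisfied, and the conclusion that $x(t)$ converges to some $x^*\in\textbf{X}^*$ follows immediately.

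Alternatively, for a self-contained argument one could simply rerun the Lyapunov analysis of Theorem~1 verbatim with the gains set to one: define $V(t)=\tfrac12\|x(t)-x_0\|^2$ for a fixed $x_0\in\textbf{X}^*$, differentiate along the trajectory, and use Lemma~\ref{le1} together with the subgradient inequality $\langle x(t)-x_0,\nabla g^+(x(t))\rangle\geq g^+(x(t))$ to obtain $\dot V(t)\leq -\|x(t)\|_X^2 - g^+(x(t))\leq 0$. Boundedness of $V$ then gives $\int_0^\infty \|x(t)\|_X^2\,dt<\infty$ and $\int_0^\infty g^+(x(t))\,dt<\infty$, which forces $\liminf_{t\to\infty}\|x(t)-P_X(x(t))\|=\liminf_{t\to\infty}g^+(x(t))=0$, yielding a subsequence converging to a point of $\textbf{X}^*$; convergence of $V$ upgrades this to convergence of the whole trajectory.

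I do not anticipate a genuine obstacle here: the only thing to be careful about is making the logical dependence explicit — namely that the constant gains are a legitimate instance of the admissible class in Theorem~1 — so that the corollary is not misread as requiring a separate argument. If the self-contained route is preferred, the single point deserving a line of justification is that convergence of $V(t)$ (hence of $\|x(t)-x_0\|$) together with a subsequence $x(t_k)\to x^*\in\textbf{X}^*$ implies $x(t)\to x^*$, since $x^*\in\textbf{X}^*\subset X$ and thus $\|x(t)-x^*\|$ is itself a convergent quantity whose limit is attained along $\{t_k\}$ and equals zero.

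\begin{proof}
This is the special case of Theorem~1 with $\alpha(t)\equiv 1$ and $\beta(t)\equiv 1$. Indeed, these gains satisfy $\alpha(t)\geq 0$, $\beta(t)\geq 0$, $\int_0^\infty \alpha(t)\,dt=\int_0^\infty 1\,dt\to\infty$ and $\int_0^\infty \beta(t)\,dt=\int_0^\infty 1\,dt\to\infty$. Since $\textbf{X}^*$ is non-empty by assumption, all conditions of Theorem~1 hold, and therefore $x(t)$ converges to a vector $x^*\in\textbf{X}^*$.
\end{proof}
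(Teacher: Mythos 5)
Your proposal is correct and coincides with the paper's intended argument: the paper states this corollary immediately after Theorem~1 without a separate proof precisely because it is the specialization $\alpha(t)\equiv\beta(t)\equiv 1$, whose sign and divergence conditions you verify correctly. Nothing further is needed.
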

\subsection{Discrete-time case}
Now we present the discrete-time algorithm for CFP (\ref{eq3}).
\begin{equation}\label{eq19}
\begin{split}
\left\{ {\begin{array}{*{20}c}\begin{split}
   &{\xi (t) = x(t) - \beta (t) {\nabla g^ +  \left( {x(t)} \right)} } \\
   &{\varphi (t) = \alpha (t)\left( {\xi (t) - P_X (\xi (t))} \right)}  \\
   &{x(t + 1) = \xi (t) - \varphi (t)}  \\
\end{split}\end{array}} \right.
\end{split}
\end{equation}
where $P_X (\cdot)$ and $\nabla g^ +  (x(t))$ are defined as those in (\ref{eq4}).
\begin{assumption}\label{A5}
$\nabla g^ +  (x(t))\leq K$ for some $K\geq 0$.
\end{assumption}
\begin{lemma}\label{le8}\cite{D. Richert31}
Let $\{z(t)\}$ be a non-negative scalar sequence such that
\[
z(t+1) \le (1 + a(t))z(t) - b(t) + c(t)
\]
for all $t\geq0$, if $a(t)\geq0, b(t)\geq0, c(t)\geq0$ with $\sum\limits_{t = 0}^\infty  {a(t)}<\infty$ and $\sum\limits_{t = 0}^\infty  {c(t)}<\infty$, then the sequence
$\{z(t)\}$ converges to some constant $z^*$ and $\sum\limits_{t = 0}^\infty {b(t)}<\infty$.

\end{lemma}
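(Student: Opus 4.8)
The plan is to remove the multiplicative factor $1+a(t)$ by an exponential-type rescaling, reducing the recursion to the classical case, and then to absorb the $c(t)$ terms into the tail of an auxiliary monotone sequence. First I would set $P(t)=\prod_{s=0}^{t-1}(1+a(s))$ with $P(0)=1$. Since each factor is $\ge 1$ and $\log(1+x)\le x$, the estimate $\log P(t)\le\sum_{s=0}^{t-1}a(s)\le\sum_{s=0}^\infty a(s)<\infty$ shows that $P(t)$ is nondecreasing and bounded, hence converges to a finite limit $A\in[1,\infty)$, with $1\le P(t)\le A$ for every $t$. Dividing the hypothesized inequality by $P(t+1)=(1+a(t))P(t)$ and writing $\tilde z(t)=z(t)/P(t)$, $\tilde b(t)=b(t)/P(t+1)$, $\tilde c(t)=c(t)/P(t+1)$ turns it into $\tilde z(t+1)\le\tilde z(t)-\tilde b(t)+\tilde c(t)$, with all three sequences still non-negative; moreover $P(t+1)\ge 1$ gives $\tilde c(t)\le c(t)$, so $\sum_{t=0}^\infty\tilde c(t)<\infty$.

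Next I would introduce $y(t)=\tilde z(t)+\sum_{s=t}^\infty\tilde c(s)$, which is well defined by the summability just established. A one-line substitution using the reduced inequality gives $y(t+1)\le y(t)-\tilde b(t)\le y(t)$, so $\{y(t)\}$ is nonincreasing and, being bounded below by $0$, converges to some $y^*\ge 0$. Because $\sum_{s=t}^\infty\tilde c(s)\to 0$, it follows that $\tilde z(t)\to y^*$, and therefore $z(t)=P(t)\tilde z(t)\to Ay^*=:z^*\ge 0$, which is the asserted convergence. For the last claim, telescoping $y(t+1)\le y(t)-\tilde b(t)$ over $t=0,\dots,T$ yields $\sum_{t=0}^T\tilde b(t)\le y(0)-y(T+1)\le y(0)<\infty$; since $P(t+1)\le A$ gives $\tilde b(t)\ge b(t)/A$, we conclude $\sum_{t=0}^\infty b(t)\le A\sum_{t=0}^\infty\tilde b(t)<\infty$.

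The one step that requires genuine care — and the main obstacle — is the reduction itself: one must verify that $\sum a(t)<\infty$ forces $P(t)$ to converge to a \emph{positive finite} limit (so that dividing by $P(t+1)$ loses no information and $z(t)=P(t)\tilde z(t)$ recovers the original sequence), and one must arrange the rescaling so that the transformed noise term is bounded by $c(t)$ rather than by $A\,c(t)/P(t)$, which is exactly why dividing by $P(t+1)$, bounded below by $1$, is the right normalization. Everything after the reduction — monotonicity of $y(t)$, passing to the limit, and the telescoping bound — is routine bookkeeping.
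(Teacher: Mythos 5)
Your proof is correct. Note, however, that the paper itself gives no proof of this lemma: it is stated as a citation of Robbins and Siegmund (1971), whose original result is the more general almost-supermartingale convergence theorem, established there by stochastic (martingale convergence) arguments. What you supply is a self-contained, purely deterministic argument, and it is sound: since $\sum_{t=0}^{\infty}a(t)<\infty$, the product $P(t)=\prod_{s=0}^{t-1}(1+a(s))$ is nondecreasing and bounded (via $\log(1+x)\le x$), hence converges to a finite $A\ge 1$; dividing the recursion by $P(t+1)$ removes the multiplicative factor and, because $P(t+1)\ge 1$, keeps the perturbation summable ($\tilde c(t)\le c(t)$); the auxiliary sequence $y(t)=\tilde z(t)+\sum_{s=t}^{\infty}\tilde c(s)$ is nonincreasing and bounded below, so it converges, the vanishing tail gives $\tilde z(t)\to y^{*}$ and $z(t)=P(t)\tilde z(t)\to Ay^{*}$, and telescoping $y(t+1)\le y(t)-\tilde b(t)$ together with $\tilde b(t)\ge b(t)/A$ yields $\sum_{t=0}^{\infty}b(t)<\infty$. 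Compared with the cited route, your argument buys an elementary proof exactly adapted to the deterministic form actually used in the paper (all that is needed for the convergence analysis of the discrete-time algorithms), while the reference's martingale approach buys the stronger stochastic statement that the paper does not require.
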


\begin{theorem}
Under Assumptions \ref{A5}, if CFP (\ref {eq3}) has a non-empty solution set $\textbf{X}^*$, and $\alpha (t)$ , $\beta (t)$ satisfy

(a)  $\alpha(t) \in [0, 1]$, $\sum\limits_{t = 0}^\infty {\alpha (t)}  \to \infty$ and $\sum\limits_{t = 0}^\infty {\alpha^2 (t)}  < \infty$;

(b)  $0\leq\beta(t)\leq\infty$, $\sum\limits_{t = 0}^\infty {\beta(t)} \to \infty$ and $\sum\limits_{t = 0}^\infty {\beta^2 (t)} < \infty$.

 Then, $x(t)$ in (\ref{eq19}) converges to a vector $x^*$ in set $\textbf{X}^*$.
\end{theorem}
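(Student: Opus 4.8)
The plan is to mimic the continuous-time argument of Theorem~1, using the discrete Lyapunov-type function $V(t) = \tfrac{1}{2}\|x(t) - x_0\|^2$ with $x_0 \in \mathbf{X}^*$, and to derive a recursion of the form
\[
\|x(t+1) - x_0\|^2 \le (1 + a(t))\|x(t) - x_0\|^2 - b(t) + c(t),
\]
so that Lemma~\ref{le8} can be invoked. First I would expand $\|x(t+1) - x_0\|^2$ along the three substeps of (\ref{eq19}). Writing $x(t+1) = \xi(t) - \varphi(t)$ with $\varphi(t) = \alpha(t)(\xi(t) - P_X(\xi(t)))$, I would first bound $\|x(t+1) - x_0\|^2$ in terms of $\|\xi(t) - x_0\|^2$: since $\varphi(t) = \alpha(t)(\xi(t) - P_X(\xi(t)))$ with $\alpha(t) \in [0,1]$, expanding gives a cross term $-2\alpha(t)\langle \xi(t) - x_0, \xi(t) - P_X(\xi(t))\rangle$ plus $\alpha^2(t)\|\xi(t) - P_X(\xi(t))\|^2$. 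Using Lemma~\ref{le1} with $x = \xi(t)$, $y = x_0 \in X$ gives $-\langle \xi(t) - P_X(\xi(t)), \xi(t) - x_0 \rangle \le -\|\xi(t)\|_X^2$, i.e. $-\langle \xi(t) - x_0, \xi(t) - P_X(\xi(t))\rangle \le -\|\xi(t) - P_X(\xi(t))\|^2$ (noting $\|\xi(t)\|_X = \|\xi(t) - P_X(\xi(t))\|$). Combining, and since $\alpha(t) \in [0,1]$ forces $\alpha^2(t) \le \alpha(t)$, I get $\|x(t+1) - x_0\|^2 \le \|\xi(t) - x_0\|^2 - \alpha(t)\|\xi(t) - P_X(\xi(t))\|^2$.

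Next I would handle the subgradient substep $\xi(t) = x(t) - \beta(t)\nabla g^+(x(t))$. Expanding,
\[
\|\xi(t) - x_0\|^2 = \|x(t) - x_0\|^2 - 2\beta(t)\langle x(t) - x_0, \nabla g^+(x(t))\rangle + \beta^2(t)\|\nabla g^+(x(t))\|^2.
\]
By the subgradient inequality applied to the convex function $g^+$ at $x(t)$, and using $g^+(x_0) = 0$, we have $\langle x(t) - x_0, \nabla g^+(x(t))\rangle \ge g^+(x(t)) - g^+(x_0) = g^+(x(t)) \ge 0$. By Assumption~\ref{A5}, $\|\nabla g^+(x(t))\|^2 \le K^2$. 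Hence $\|\xi(t) - x_0\|^2 \le \|x(t) - x_0\|^2 - 2\beta(t)g^+(x(t)) + K^2\beta^2(t)$. Chaining the two estimates yields
\[
\|x(t+1) - x_0\|^2 \le \|x(t) - x_0\|^2 - 2\beta(t)g^+(x(t)) - \alpha(t)\|\xi(t) - P_X(\xi(t))\|^2 + K^2\beta^2(t),
\]
which is of the Lemma~\ref{le8} form with $a(t) = 0$, $b(t) = 2\beta(t)g^+(x(t)) + \alpha(t)\|\xi(t) - P_X(\xi(t))\|^2 \ge 0$, and $c(t) = K^2\beta^2(t)$, and $\sum c(t) < \infty$ by assumption~(b). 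Lemma~\ref{le8} then gives that $\|x(t) - x_0\|^2$ converges and $\sum_{t=0}^\infty \big(2\beta(t)g^+(x(t)) + \alpha(t)\|\xi(t) - P_X(\xi(t))\|^2\big) < \infty$.

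Finally, from $\sum \beta(t) g^+(x(t)) < \infty$ together with $\sum \beta(t) \to \infty$, I conclude $\liminf_t g^+(x(t)) = 0$; similarly from $\sum \alpha(t)\|\xi(t) - P_X(\xi(t))\| $ being summable (after noting $\|\xi(t) - P_X(\xi(t))\|$ is bounded, since $\|x(t) - x_0\|$ converges hence is bounded, so $\xi(t)$ is bounded, so $\|\xi(t) - P_X(\xi(t))\| \le \|\xi(t) - x_0\|$ is bounded and hence squared-summability and summability of $\alpha(t)(\cdot)$ are comparable) and $\sum \alpha(t) \to \infty$, I get $\liminf_t \|\xi(t) - P_X(\xi(t))\| = 0$; since $\beta(t) \to 0$ (square-summable) and $\nabla g^+$ is bounded, $\|\xi(t) - x(t)\| \to 0$, so also $\liminf_t \|x(t) - P_X(x(t))\| = 0$. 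Boundedness of $\{x(t)\}$ gives a subsequence $x(t_k) \to x^*$ with $g^+(x^*) = 0$ and $x^* = P_X(x^*)$, i.e. $x^* \in \mathbf{X}^*$; then taking $x_0 = x^*$ in $V$, which we now know converges, forces $\|x(t) - x^*\| \to 0$. The main obstacle is the bookkeeping in the chained inequality — in particular making rigorous that the $P_X$ contribution enters $b(t)$ cleanly rather than polluting $c(t)$, which is exactly why the substeps must be ordered subgradient-then-projection and why $\alpha(t) \le 1$ is needed to absorb the $\alpha^2$ term; the convergence-of-the-$\liminf$ bridge to an actual limit is then the same trick as in Theorem~1.
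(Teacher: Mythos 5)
Your proposal is correct and takes essentially the same route as the paper's proof: the Lyapunov function $\|x(t)-x_0\|^2$ with $x_0\in\textbf{X}^*$, the subgradient inequality with $g^+(x_0)=0$, the projection inequality of Lemma \ref{le1} together with non-expansiveness, Lemma \ref{le8} to get convergence of $\|x(t)-x_0\|$ and summability of the residual terms, and then the $\liminf$/subsequence argument using $\sum\alpha(t)=\sum\beta(t)=\infty$. The only difference is presentational: you absorb the projection residual into the Robbins--Siegmund term $b(t)$ in a single chained inequality via $\alpha^2(t)\le\alpha(t)$, whereas the paper first uses a cruder bound $\Delta V(t)\le\beta^2(t)K$ to get boundedness and then re-expands $\Delta V(t)$ to place $\|\nabla(t)\|^2+\|\varphi(t)\|^2$ in the summable term --- both land in the same place.
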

\begin{proof}
We choose the Lyapunov function candidate as $V(t)=\|x(t)-x_0\|^2$, where $x_0\in\textbf{X}^*$. Taking the difference of function $V (t)$ along with (\ref{eq19}) yields
\begin{equation}\label{eq20}
\begin{split}
 \Delta V(t)&=V(t+1)-V(t)\\
& =\|\xi (t) -\varphi (t)-x_0\|^2-\|x(t)-x_0\|^2\\
&=\|(1-\alpha(t))(\xi (t)-x_0)+\alpha(t)(P_X (\xi (t))-x_0)\|^2-\|x(t)-x_0\|^2\\
&\leq\Big((1-\alpha(t))\|\xi (t)-x_0\|+\alpha(t)\|P_X (\xi (t))-x_0\|\Big)^2-\|x(t)-x_0\|^2\\
&\leq\|\xi (t)-x_0\|^2-\|x(t)-x_0\|^2
\end{split}
\end{equation}
where the last inequality follows from the non-expansiveness property of projection operator, i.e., $\|P_X (\xi (t))-x_0)\|\leq \|\xi (t)-x_0\|$. Moreover, we have
\begin{equation}\label{eq21}\begin{split}
\left\| {\xi (t) - x_0 } \right\|^2 &\le \left\| {x(t) - x_0 } \right\|^2  - 2\beta (t)\langle\nabla g^ +  (x(t)), x(t) - x_0 \rangle  \\
&~~~+ \beta ^2 (t)K \\
&\le \left\| {x(t) - x_0 } \right\|^2  - 2\beta (t){\left( {g^ +  (x(t)) - g^ +  (x_0 )} \right)}\\
&~~~+ \beta ^2 (t)K.\\
 \end{split}
\end{equation}
From inequalities $(\ref{eq20})$ and $(\ref{eq21})$, we have $\Delta V(t)\leq \beta ^2 (t)K $. Thus, it holds that $V(t)\leq V(0)+\sum\limits_{t = 0}^{t-1} {\beta ^2 (t) {K }}$  $\leq V(0)+\sum\limits_{t = 0}^\infty  {\beta ^2 (t)K } <\infty$. By the definition of $V(t)$, it can be concluded that $x(t)$ is bounded. Since $\|\beta (t) \nabla g^ +  \left( {x(t)} \right)\|<\infty$, $\xi(t)$ is bounded. This and the continuity of $P_X(\xi (t))$ imply $\|{\xi (t) - P_X (\xi (t))}\|<\infty$. Denote $\nabla (t)=\beta (t){\nabla g^ +  \left( {x(t)} \right)}$, since  $\sum\limits_{t = 0}^\infty {\alpha^2 (t)} < \infty$ and $\sum\limits_{t = 0}^\infty {\beta^2 (t)} < \infty$, it can be concluded that $\sum\limits_{t = 0}^\infty\|\nabla (t)\|^2<\infty$ and $\sum\limits_{t = 0}^\infty\|\varphi (t)\|^2<\infty$. Similar to (\ref{eq20}), we also have
\begin{equation}\label{eq22}
\begin{split}
 \Delta V(t)&=V(t+1)-V(t)\\
&=-2\langle \nabla (t)+\varphi (t), x(t) - x_0\rangle +\|\nabla (t)+\varphi (t)\|^2 \\
&=-2\langle\nabla (t), x(t) - x_0\rangle -2\langle \varphi (t), \xi(t) - x_0  \rangle\\
&~~~-2\langle\varphi (t), \nabla (t)\rangle+\|\nabla (t)+\varphi (t)\|^2\\
&=-2\langle \nabla (t), x(t) - x_0\rangle -2\langle\varphi (t), \xi(t) - x_0\rangle+\|\nabla (t)\|^2+\|\varphi (t)\|^2\\
&\leq - 2\beta (t)g^ +  (x(t)) -2\alpha(t) \|\xi(t)\|^2_X+\|\nabla (t)\|^2+\|\varphi (t)\|^2\\
&=- 2\beta (t) g^ +  (x(t)) - 2\alpha(t)\|\xi(t)\|^2_X+\|\nabla (t)\|^2+\|\varphi (t)\|^2.
\end{split}
\end{equation}
Recall the fact that $\sum\limits_{t = 0}^\infty \|\nabla (t)\|^2+\|\varphi (t)\|^2<\infty$ and $- 2\beta (t) g^ +  (x(t))-2\alpha(t)\|\xi(t)\|^2_X<0$, by Lemma \ref{le8}, it can be concluded $\|x(t)-x_0\|$ converges and it holds $\sum\limits_{t = 0}^\infty \Big(\beta (t) g^ +  (x(t))+ \alpha(t)\|\xi(t)\|^2_X\Big)<\infty$. Since $ \beta (t) g^ +  (x(t))>0$ and $\alpha(t)\|\xi(t)\|^2_X>0$ for all $t>0$, we have $\sum\limits_{t = 0}^\infty \beta (t) g^ +  (x(t))<\infty$ and $\sum\limits_{t = 0}^\infty\alpha(t)\|\xi(t)\|^2_X<\infty$. By the facts $\sum\limits_{t = 0}^\infty {\alpha (t)}\rightarrow\infty$ and $\sum\limits_{t = 0}^\infty {\beta (t)}\rightarrow\infty$, we have $\lim\limits_{t\rightarrow\infty}\inf \left\| {\xi(t)}-P_X(\xi(t)) \right\|= \lim\limits_{t\rightarrow\infty}\inf g^ +(x(t))=0$. Thus, there exists a subsequence $\{x(t_k)\}$ of ${x(t)}$ such that $\lim\limits_{k\rightarrow\infty}x(t_{k})=x^*$, where $x^*$ is a vector such that $g^+(x^*)=0$. By the fact $\|x(t)-x_0\|$ converges, we can conclude $\lim\limits_{t\rightarrow\infty} x(t)=x^*$. Furthermore, note that $\nabla (t)\rightarrow 0$ as $t\rightarrow\infty$, thus $\lim\limits_{t\rightarrow\infty}\inf \left\| {\xi(t)}-P_X(\xi(t)) \right\|=0$ and $\lim\limits_{t\rightarrow\infty} x(t)=x^*$ imply $\lim\limits_{t\rightarrow\infty}\left\| {x^*}-P_X(x^*) \right\|=0$.  Therefore, $x^*$ is a solution to CFP (\ref{eq3}), i.e., $x^*\in \textbf{X}^*$.
\end{proof}

\section{Continuous-time distributed control algorithms for solving CFPs}\label{se4}

In this section, we focus on solving CFP (\ref{eq2}) for continuous-time MAS (\ref{eq1}) in a distributed  manner, which means that each agent has access to only its own state and that from its neighbors. The following input is proposed.
\begin{equation}\label{eq6}
\left\{ {\begin{array}{*{20}c}\begin{split}
   &{u_i(t) = \sum\limits_{i \in N_i (t)} {a_{ij} (t)} (x_j (t) - x_i (t)) + \phi _i (t)}  \\
   &{\phi _i (t) = -\tau\big([x_i(t) - P_{X_i} (x_i(t))]+ \nabla g_i^ +  (x_i(t))\big)} \\
\end{split}\end{array}} \right.
~~~~i\in\mathcal{V}
\end{equation}
where $\tau$ is a positive coefficient. Note that $\phi _i$ depends on only agent $i$'s own state, so (\ref{eq6}) is distributed. Based on Lemma \ref{LE} in Section II, here we set $\nabla g_i^+(x) = 0$ if $g_i(x)\leq0$ and $\nabla g^+_i(x) = \nabla g_i(x)$ otherwise.

\begin{remark}
If we set $\tau=0$ in algorithm (\ref{eq6}), then it will become a typical linear consensus algorithm for MASs studied in  \cite{W. Ren6}, \cite{Olfati-Saber5}. In this case, MASs reach consensus asymptotically if the communication graph is fixed and has a spanning tree. The distributed subgradient-based algorithm was studied for continuous-time multi-agent systems to optimize a sum of convex objective functions in \cite{Touri B34}, but the convergence of the algorithm relies on a time-varying parameter and the projection term was not involved.
\end{remark}

Let $x(t)=\big[ x^T_1(t), \cdots, x^T_n(t) \big]^T$ and $\phi(t)=\big[ \phi^T_1(t), \cdots, \phi^T_n(t) \big]^T$, MAS (\ref{eq1}) with (\ref{eq6}) can be rewritten as
\begin{equation}\label{eq7}
\dot{x}(t)=-\left(L(t) \otimes I_m\right) x(t)+\phi(t).
\end{equation}
 \begin{lemma}\label{le3}\cite{Kaihong Lu33}
Let $b(t)$ be a bounded function, if $\lim\limits_{t\rightarrow\infty}b(t)=b$ and $0<\gamma<1$, then $\lim\limits_{t\rightarrow\infty} \int_{0}^t \gamma^{t-s}b(s)$ $ ds=-\frac{b}{ln\gamma}$.
\end{lemma}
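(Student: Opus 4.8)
The plan is to treat this as a quoted analytic fact and give a short self-contained argument, the heart of which is that the kernel $\gamma^{t-s}$ is an exponentially decaying average: writing $\gamma^{t-s}=e^{(t-s)\ln\gamma}$ and noting $\ln\gamma<0$ since $0<\gamma<1$, the weight concentrates near $s=t$ and forgets old values of $b$ geometrically fast. Concretely, I would first substitute $u=t-s$ to rewrite the integral as $\int_0^t\gamma^{u}b(t-u)\,du$, record that $u\mapsto\gamma^{u}$ is integrable on $[0,\infty)$ with $\int_0^\infty\gamma^{u}\,du=-1/\ln\gamma$, and in particular that $\int_0^t\gamma^{t-s}\,ds=(1-\gamma^{t})/(-\ln\gamma)\to-1/\ln\gamma$ as $t\to\infty$.

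Next I would compare the integral with its candidate limit via the identity
\[
\int_0^t\gamma^{t-s}b(s)\,ds+\frac{b}{\ln\gamma}=\int_0^t\gamma^{t-s}\bigl(b(s)-b\bigr)\,ds+b\Bigl(\int_0^t\gamma^{t-s}\,ds+\frac{1}{\ln\gamma}\Bigr).
\]
The second term equals $b\gamma^{t}/\ln\gamma\to0$. For the first term, let $M$ bound $|b|$, fix $\varepsilon>0$, and choose $T$ with $|b(s)-b|<\varepsilon$ for $s\ge T$; splitting the integral at $T$ and using $\gamma^{t-s}\le\gamma^{t-T}$ for $s\le T$ gives, for $t>T$,
\[
\Bigl|\int_0^t\gamma^{t-s}(b(s)-b)\,ds\Bigr|\le 2MT\gamma^{t-T}+\varepsilon\int_T^t\gamma^{t-s}\,ds\le 2MT\gamma^{t-T}+\frac{\varepsilon}{-\ln\gamma}.
\]
Letting $t\to\infty$ kills the first summand, so $\limsup_{t\to\infty}\bigl|\int_0^t\gamma^{t-s}b(s)\,ds+b/\ln\gamma\bigr|\le\varepsilon/(-\ln\gamma)$, and since $\varepsilon$ was arbitrary the limit is $-b/\ln\gamma$. (A slicker route is to extend the integrand $\gamma^{u}b(t-u)\mathbf 1_{[0,t]}(u)$ by zero to $[0,\infty)$, dominate it by the integrable function $M\gamma^{u}$, observe it converges pointwise to $\gamma^{u}b$, and invoke dominated convergence; I would present whichever the paper's level of rigor prefers.)

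The only delicate point is precisely the interchange of the $t\to\infty$ limit with the integral, i.e. controlling the contribution of the initial stretch $s\in[0,T]$ on which $b$ may be far from its limit. This is exactly where the hypothesis $\gamma<1$ enters: the factor $\gamma^{t-T}\to0$ shows that the influence of those early values decays geometrically, so nothing pathological can survive in the limit. Everything else — the change of variables, the elementary value of $\int_0^\infty\gamma^{u}\,du$, and the $\varepsilon$-estimate on the tail — is routine. Note also that only boundedness (not continuity) of $b$ is used, which is convenient since in the intended application $b$ arises from piecewise-continuous data.
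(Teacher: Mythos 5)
Your argument is correct: the change of variables, the exact evaluation $\int_0^t\gamma^{t-s}\,ds=(1-\gamma^t)/(-\ln\gamma)$, the splitting of $\int_0^t\gamma^{t-s}(b(s)-b)\,ds$ at a time $T$ beyond which $|b(s)-b|<\varepsilon$, and the geometric decay of the early contribution via $\gamma^{t-T}$ together give a complete and standard proof (the dominated-convergence variant you sketch is equally valid). Note, however, that the paper itself does not prove this lemma --- it is quoted from the authors' earlier reference [Kaihong Lu33] --- so there is no in-paper argument to compare against; your self-contained $\varepsilon$--$T$ estimate is exactly the kind of routine verification the citation presupposes, and your closing remark that only boundedness and local integrability (not continuity) of $b$ are needed matches how the lemma is applied to the piecewise-continuous inputs in Lemma 6 and Proposition 2.
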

\begin{lemma}\label{le4}\cite{B. Touri32}
Given a symmetric matrix $P=(p_{ij})_{n\times n}$ with 0 eigenvalue and a vector $x = [x_1, \cdots, x_n]^T$, if $P\textbf{1}_n=0$, then it holds $x^T Px =  - \sum\limits_{i = 1}^n {\sum\limits_{j = i + 1}^n p_{ij}({ x_i -x_j } )^2}$.
\end{lemma}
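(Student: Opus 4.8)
The plan is to establish the identity by a direct expansion of the quadratic form, using the symmetry $p_{ij}=p_{ji}$ together with the row-sum condition $P\textbf{1}_n=0$; note that the stated $0$-eigenvalue hypothesis is itself a consequence of $P\textbf{1}_n=0$ (with $\textbf{1}_n$ the associated eigenvector), so only symmetry and the row-sum condition will actually be needed.

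First I would write
\[
x^T P x = \sum_{i=1}^n\sum_{j=1}^n p_{ij}x_i x_j = \sum_{i=1}^n p_{ii}x_i^2 + \sum_{i\neq j} p_{ij}x_i x_j,
\]
splitting off the diagonal part. Next, the condition $P\textbf{1}_n=0$ means $\sum_{j=1}^n p_{ij}=0$ for each $i$, i.e.\ $p_{ii}=-\sum_{j\neq i}p_{ij}$; substituting this into the diagonal term eliminates it entirely and leaves
\[
x^T P x = -\sum_{i\neq j} p_{ij}x_i^2 + \sum_{i\neq j} p_{ij}x_i x_j = \sum_{i\neq j} p_{ij}\bigl(x_i x_j - x_i^2\bigr).
\]

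The last step is to symmetrize: because $p_{ij}=p_{ji}$, I would replace the summand by the average of itself and its image under the swap $i\leftrightarrow j$, namely $\frac12 p_{ij}\bigl(2x_i x_j - x_i^2 - x_j^2\bigr) = -\frac12 p_{ij}(x_i-x_j)^2$, which yields
\[
x^T P x = -\frac12\sum_{i\neq j} p_{ij}(x_i-x_j)^2 = -\sum_{i=1}^n\sum_{j=i+1}^n p_{ij}(x_i-x_j)^2,
\]
the final equality holding because each unordered pair $\{i,j\}$ with $i\neq j$ contributes the same term twice in the sum over ordered pairs. There is no genuine obstacle in this argument; the only point requiring care is the bookkeeping of the factor $\frac12$ when passing from the sum over all ordered pairs $i\neq j$ to the sum restricted to $j>i$.
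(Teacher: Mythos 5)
Your proof is correct. Note that the paper itself gives no proof of this lemma --- it is stated with a citation to Touri's book --- so there is no in-paper argument to compare against; your direct expansion (eliminate the diagonal via the zero row sums $p_{ii}=-\sum_{j\neq i}p_{ij}$, then symmetrize using $p_{ij}=p_{ji}$ to obtain $-\tfrac12\sum_{i\neq j}p_{ij}(x_i-x_j)^2$ and halve the ordered-pair sum) is exactly the standard derivation and is complete. Your side remark that the $0$-eigenvalue hypothesis is redundant, being implied by $P\textbf{1}_n=0$, is also accurate; the factor-$\tfrac12$ bookkeeping when passing to $j>i$ is handled correctly.
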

\begin{lemma}\label{le6}
Given a linear system $\dot x(t) = Ax(t) + u(t)$, if the state matrix $A\in\mathbb{R}^{n\times n}$ is Hurwitz stable and $u(t)\in\mathbb{R}^{n} $ satisfies $\|u(t)\|<\infty$ and $\lim\limits_{t\rightarrow\infty}u(t)=0$, then the linear system is asymptotically stable to zero, i.e., $\lim\limits_{t\rightarrow\infty}{x}(t)=0$.
\end{lemma}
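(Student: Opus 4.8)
The plan is to invoke the variation-of-constants formula together with the exponential decay of the matrix exponential that the Hurwitz hypothesis supplies. Writing the solution of $\dot x(t)=Ax(t)+u(t)$ as
\[
x(t)=e^{At}x(0)+\int_{0}^{t}e^{A(t-s)}u(s)\,ds,
\]
and using the standard fact that a Hurwitz matrix $A$ admits constants $M\geq 1$ and $\lambda>0$ with $\|e^{At}\|\leq M e^{-\lambda t}$ for all $t\geq 0$, the homogeneous term obeys $\|e^{At}x(0)\|\leq M e^{-\lambda t}\|x(0)\|\to 0$. Hence the problem reduces to showing that the convolution term tends to zero.

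For that term I would argue as follows. Since $u$ is bounded, put $\bar u:=\sup_{t\geq 0}\|u(t)\|<\infty$. Fix $\varepsilon>0$; because $u(t)\to 0$ there is $T>0$ with $\|u(s)\|\leq\varepsilon$ for all $s\geq T$. Splitting the integral at $T$, for every $t>T$ one obtains
\[
\Bigl\|\int_{0}^{t}e^{A(t-s)}u(s)\,ds\Bigr\|\leq M\bar u\int_{0}^{T}e^{-\lambda(t-s)}\,ds+M\varepsilon\int_{T}^{t}e^{-\lambda(t-s)}\,ds\leq\frac{M\bar u}{\lambda}e^{-\lambda(t-T)}+\frac{M\varepsilon}{\lambda}.
\]
Letting $t\to\infty$ the first summand vanishes, so $\limsup_{t\to\infty}\|x(t)\|\leq M\varepsilon/\lambda$; since $\varepsilon>0$ is arbitrary, $\lim_{t\to\infty}x(t)=0$.

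An alternative route would be a Lyapunov argument: choosing a positive definite $P$ solving $A^{T}P+PA=-I$ and setting $V(x)=x^{T}Px$ gives $\dot V=-\|x\|^{2}+2x^{T}Pu$, from which one reads off that $x(t)$ is eventually confined to a ball whose radius is controlled by $\sup_{s\geq t_{0}}\|u(s)\|$; sending $t_{0}\to\infty$ and invoking $u\to 0$ shrinks this ball to the origin. Either way, the single point that needs (otherwise routine) care — the main obstacle — is the tail of the convolution: one must show that the influence of the possibly large early values of $u$ is forgotten exponentially fast, which is exactly what the split-at-$T$ estimate above quantifies.
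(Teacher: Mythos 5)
Your proof is correct, and it takes a genuinely different route from the paper's. You work directly with the variation-of-constants formula and the uniform exponential bound $\|e^{At}\|\le M e^{-\lambda t}$ furnished by the Hurwitz hypothesis, then kill the convolution term by the standard split-at-$T$ estimate (your Lyapunov sketch is a second valid alternative). The paper instead applies Schur's unitary triangularization $U^H A U=\Lambda$ to reduce the system to a triangular cascade of scalar equations $\dot y_i=\lambda_i y_i+(\text{terms from rows below}+r_i)$, solves the last component explicitly, and invokes its Lemma~\ref{le3} (the limit of $\int_0^t \gamma^{t-s}b(s)\,ds$ for a geometric kernel and convergent $b$) repeatedly, propagating the conclusion $y_i\to 0$ upward row by row. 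What your argument buys is directness and self-containedness: no change of coordinates, no induction through the triangular structure, and no reliance on Lemma~\ref{le3}; it also isolates cleanly the one delicate point (forgetting the early, possibly large values of $u$ exponentially fast). What the paper's route buys is that it only ever needs scalar estimates and reuses machinery (Lemma~\ref{le3}) that appears elsewhere in the paper, e.g.\ in the time-varying-graph analysis. One small point of care in your write-up: the bound on the $[0,T]$ piece only needs $u$ bounded (or locally integrable) on that finite interval, which the hypothesis $\|u(t)\|<\infty$ supplies, so your use of $\bar u=\sup_{t\ge 0}\|u(t)\|$ is legitimate but slightly stronger than necessary.
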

\begin{proof} Since matrix $A$ is Hurwitz stable, all of its eigenvalues have negative real parts. Based on theory of Schur's unitary triangularization, there exists a unitary matrix $U\in\mathbb{C}^{n\times n}$ such that
\[
U^H AU = \left[ {\begin{array}{*{20}c}
   {\lambda _1 } & \lambda _{12} &  \cdots  & \lambda _{1n}  \\
   0 & {\lambda _2 } & \lambda _{23} & \lambda _{2n}  \\
   \vdots & \vdots &  \ddots  & \vdots  \\
   0 & 0 & 0 & {\lambda _n }  \\
\end{array}} \right]\mathop  = \limits^\Delta\Lambda
\]
where $\lambda _i$ is the eigenvalue of matrix $A$, $i=1,\cdots, n$; $U^H$ is the conjugate transpose matrix of $U$. Denote $y(t)=U^Hx(t)$ and $r(t)=U^Hu(t)$, we have $\dot{y}(t)=\Lambda y(t)+r(t)$. By the fact that $\lim\limits_{t\rightarrow\infty}u(t)=0$, we have $\lim\limits_{t\rightarrow\infty}r(t)=0$. Let $y(t) = [ y_1(t), \cdots, y_n (t)]^T$ and  $r(t) = [r_1(t),\cdots, r_n (t)]^T$, we have
$\dot y_n (t) = \lambda _n y_n (t) + r_n (t)$. The term $r_n(t)$ can be viewed as an control input of the linear system and we have $ {y_n (t)} = e^{\lambda _n t}  {y_n (0)}  + \int_0^t {e^{\lambda _n (t - \tau )} } {r_n (\tau )} d_\tau$. Since the real part of $\lambda _n$ is negative, it holds $0<e^{\lambda _n}<1$. By Lemma \ref{le3}, it can be concluded that $\lim\limits_{t\rightarrow\infty}y_n(t)=0$. Since $\dot y_i (t) = \lambda _i y_i(t) +
\Big(\sum\limits_{j = 1}^n {\lambda _{i(i + j)} y_{i + j} (t)} $ $+r_i (t)\Big)$. Through the similar approach for $y_n(t)$, we can conclude $\lim\limits_{t\rightarrow\infty}\left(\sum\limits_{j = 1}^n {\lambda _{i(i + j)} y_{i + j} (t)}+r_i (t)\right)$ $=0$. Reusing Lemma \ref{le3} yields $\lim\limits_{t\rightarrow\infty}y_i(t)=0$ for any $i=1, \cdots, n$. This and the fact $x(t)=Uy(t)$ imply $\lim\limits_{t\rightarrow\infty}x(t)=0$.
\end{proof}

To prove the fact that MAS (\ref{eq1}) with (\ref{eq6}) solves CFP (\ref{eq2}), it is necessary to analyze the convergence of MAS (\ref{eq1}) with (\ref{eq6}). Obviously, the conditions for convergence depend on the connectivity of the graphs. In the following, we will provide the convergence conditions under the fixed graph and the time-varying graph, respectively.

\subsection{Convergence under the fixed communication graph}
\begin{proposition}
Suppose $\|\phi _i(t)\|<\infty$ and $\lim\limits_{t\rightarrow\infty}\phi _i(t)=0$ in $(\ref{eq6})$, $i\in\mathcal{V}$, if the fixed graph $\mathcal{G}(
\mathcal{A})$ is directed and has a spanning tree, then MAS (\ref{eq1}) with (\ref{eq6}) reaches consensus asymptotically.
\end{proposition}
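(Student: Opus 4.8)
The plan is to factor out the consensus direction, show that the remaining \emph{disagreement} coordinates obey a Hurwitz linear system driven by a vanishing input, and then invoke Lemma~\ref{le6}. Since $\mathcal{G}(\mathcal{A})$ is directed and has a spanning tree, Lemma~\ref{le5} tells us that $L$ has an algebraically simple zero eigenvalue with right eigenvector $\mathbf{1}_n$, while $\lambda_2(L),\dots,\lambda_n(L)$ all have positive real parts. Let $w\in\mathbb{R}^n$ be a left eigenvector of $L$ for the zero eigenvalue; since the zero eigenvalue is simple, $w^T\mathbf{1}_n\neq 0$, so we may normalize $w^T\mathbf{1}_n=1$ and set $M=I_n-\mathbf{1}_n w^T$. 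Then $M\mathbf{1}_n=0$, $w^T M=0$, $M^2=M$ and $ML=LM=L$. I would introduce the disagreement vector $\delta(t)=(M\otimes I_m)x(t)$; writing $\bar x(t)=(w^T\otimes I_m)x(t)$, one has $x(t)-\mathbf{1}_n\otimes\bar x(t)=\delta(t)$, so $x_i(t)-x_j(t)$ is a difference of blocks of $\delta(t)$. Hence the claim ``reaches consensus asymptotically'' is equivalent to $\delta(t)\to 0$.

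Next I would derive the $\delta$-dynamics from (\ref{eq7}). Using $ML=L$ and $L\mathbf{1}_n=0$,
\[
\dot\delta(t)=(M\otimes I_m)\dot x(t)=-(L\otimes I_m)x(t)+(M\otimes I_m)\phi(t)=-(L\otimes I_m)\delta(t)+(M\otimes I_m)\phi(t).
\]
Since $\operatorname{range}(M)$ is $(n-1)$-dimensional and $L$-invariant, and $\mathbb{R}^n=\operatorname{span}(\mathbf{1}_n)\oplus\operatorname{range}(M)$ is an $L$-invariant decomposition, I would pick a full-column-rank $Q\in\mathbb{R}^{n\times(n-1)}$ with $\operatorname{range}(Q)=\operatorname{range}(M)$ together with the matching $P\in\mathbb{R}^{(n-1)\times n}$ satisfying $PQ=I_{n-1}$ and $QP=M$, and put $\eta(t)=(P\otimes I_m)\delta(t)$, so that $\delta(t)=(Q\otimes I_m)\eta(t)$ and
\[
\dot\eta(t)=-\big((PLQ)\otimes I_m\big)\eta(t)+(PM\otimes I_m)\phi(t).
\]
Because of the invariant splitting, the characteristic polynomial of $L$ factors as $\lambda\det(\lambda I_{n-1}-PLQ)$, so the spectrum of $PLQ$ is exactly $\{\lambda_2(L),\dots,\lambda_n(L)\}$, all in the open right half-plane; therefore $-(PLQ)\otimes I_m$ is Hurwitz.

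Finally, since each $\|\phi_i(t)\|<\infty$ and $\phi_i(t)\to 0$, the forcing term $(PM\otimes I_m)\phi(t)$ is bounded and tends to $0$. Applying Lemma~\ref{le6} to the reduced system gives $\eta(t)\to 0$, hence $\delta(t)=(Q\otimes I_m)\eta(t)\to 0$, that is, $x_i(t)-x_j(t)\to 0$ for all $i,j$, so MAS (\ref{eq1}) with (\ref{eq6}) reaches consensus asymptotically. I expect the only delicate point to be Step~2: the bookkeeping needed to construct $P$ and $Q$ (equivalently, a coordinate change that cleanly separates the $\mathbf{1}_n$-direction from the disagreement subspace) and to verify that the reduced matrix $PLQ$ carries precisely the nonzero eigenvalues of $L$, so that Lemma~\ref{le6} is directly applicable; the rest is Kronecker-product algebra. (Alternatively, one can avoid $P,Q$ altogether and instead re-run the Schur-triangularization argument in the proof of Lemma~\ref{le6} on $-(L\otimes I_m)$ itself, noting that the block associated with the zero eigenvalue corresponds to $\bar x(t)$, which need not converge, while the complementary block is Hurwitz and is steered to zero by the vanishing input.)
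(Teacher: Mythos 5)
Your proposal is correct and follows essentially the same route as the paper: project out the consensus direction using the left eigenvector $w$ of $L$ (the paper's error vector $e(t)=x(t)-\mathbf{1}_n\otimes\hat x(t)$ is exactly your $\delta(t)=(M\otimes I_m)x(t)$), isolate an $(n-1)$-dimensional disagreement system whose state matrix carries only the eigenvalues $\lambda_2(L),\dots,\lambda_n(L)$ and is therefore Hurwitz by Lemma~\ref{le5}, and conclude via Lemma~\ref{le6} with the bounded, vanishing input. The only difference is bookkeeping: the paper completes $w/\sqrt{w^Tw}$ to a unitary matrix so that $P^TLP$ is block triangular with the scalar zero block corresponding to a coordinate that vanishes identically, whereas you use an oblique full-rank factorization $QP=M$ of the projector, which is an equally valid way to extract the same reduced Hurwitz dynamics.
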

\begin{proof}
Define a variable $\hat {x}(t) = \sum\limits_{i = 1}^n {\frac{{w_i x_i(t) }}{{\sum\limits_{i = 1}^n {w_i } }}}=\left( {\frac{w^T}{\textbf{1}^T w} \otimes I_m } \right)x(t)$, where $w = \left[ w_1  \cdots w_n \right]^T$ is $L$'s left eigenvector associated with 0 eigenvalue.
Based on (\ref {eq7}), we have $\dot{\hat{x}}(t)=\frac{{\left( {w^T  \otimes I_m } \right)}}{{\textbf{1}^T w}}u(t)$. Denote $e_i(t)=x_i(t)-\hat {x}(t)$ and  $e(t)=\big[ e^T_1(t), \cdots, e^T_n(t) \big]^T$. Note that if $\lim\limits_{t\rightarrow\infty}e(t)=0$, then MAS (\ref{eq1}) with (\ref{eq6}) reaches consensus. From (\ref{eq7}), we have
\begin{equation}\label{eq8}\begin{split}
\dot e(t) &=  - (L \otimes I_m )x(t) + \left( {\left( {I_n  - \frac{{\textbf{1}_n w^T }}{{\textbf{1}_n ^T w}}} \right) \otimes I_m } \right)\phi(t)\\
&=  - (L \otimes I_m )x(t)+  (L \otimes I_m )\left( \frac{{\textbf{1}_n w^T }}{{\textbf{1}_n ^T w}}\otimes I_m\right)x(t)\\
 &~~~~+ \left( {\left( {I_n  - \frac{{\textbf{1}_n w^T }}{{\textbf{1}_n ^T w}}} \right) \otimes I_m } \right)\phi(t)\\
&=- (L \otimes I_m )e(t) + \left( {\left( {I_n  - \frac{{\textbf{1}_n w^T }}{{\textbf{1}_n ^T w}}} \right) \otimes I_m } \right)\phi(t)
\end{split}
\end{equation}
where the second equation holds for the fact that $L\textbf{1}_n=0$. Note that ${\frac{1}{\sqrt{w^T w}} }L^Tw =0$. Now we use ${\frac{1}{\sqrt{w^T w}} }w$ to form a set of orthonormal basis on $\in\mathbb{C}^{n}$, denoted by ${\frac{1}{\sqrt{w^T w}} }w, p_2, \cdots,  p_n$. We define $P=({\frac{1}{\sqrt{w^T w}} }w, p_2, \cdots,   p_n)$. It is obvious that $P$ is a unitary matrix, so we can denote

\[
P^T LP = \left[ {\begin{array}{*{20}c}
   {\underline {\begin{array}{*{20}c}
   {\left. ~~~~0 ~~\right|} & 0 &  \cdots  & 0  \\
\end{array}} }  \\
{\begin{array}{*{20}c}
{\left. {\begin{array}{*{20}c}
*\\
\vdots\\
*\\
\end{array}} \right|} & {} & {L_1 } & {}  \\
\end{array}}  \\
\end{array}} \right].
\]
Since $\mathcal{G}(\mathcal{A})$ has a spanning tree, by Lemma \ref{le5}, $L$ has only one 0 eigenvalue and other eigenvalues have positive real part. This implies $-L_1$ is Hurwitz stable. Now define $\tilde e(t) = (P^T  \otimes I_m )e(t)$. From (\ref{eq8}), we have
\begin{equation}\label{eq9}\begin{split}
\dot {\tilde{e}}(t) &=- (P^TLP \otimes I_m )\tilde{e}(t) + \left( {\left( {P^T  - \frac{{P^T\textbf{1}_n w^T }}{{\textbf{1}_n ^T w}}} \right) \otimes I_m } \right)\phi(t).
\end{split}
\end{equation}

Let $\tilde e(t) =[\tilde e^T_1(t),\tilde e^T_2(t)]^T$, where $\tilde e_1(t)\in \mathbb{R}^{m}$ and $\tilde e_2(t)\in \mathbb{R}^{(n-1)m}$. By (\ref{eq9}), we have
\[\begin{split}
\dot {\tilde{e}}_1(t) &=\left( {\left( {{\frac{1}{\sqrt{w^T w}} }w^T  - \frac{{{\frac{1}{\sqrt{w^T w}} }w^T\textbf{1}_n w^T }}{{\textbf{1}_n ^T w}}} \right) \otimes I_m } \right)\phi(t)=0.
\end{split}
\]
Note that $\tilde e_1(t)={\frac{1}{\sqrt{w^T w}} }(w^T\otimes I_m)e(t)={\frac{1}{\sqrt{w^T w}} }(w^T\otimes I_m)\left( {\left( {I_n  - \frac{{\textbf{1}_n w^T }}{{\textbf{1}_n ^T w}}} \right) \otimes I_m } \right)x(t)=0$. Thus, it holds $\tilde e_1(t)=0$ for any $t\geq 0$. Moreover, we have

\[
\dot {\tilde e} _2  =  - \left( {L_1  \otimes I_m } \right)\tilde e_2  + \left[ {\begin{array}{*{20}c}
   {\left(p_2^T  - \frac{{p_2^T \textbf{1}_n w^T }}{{\textbf{1}_n^T w}}\right){ \otimes I_m }}  \\
    \vdots   \\
   {\left(p_n^T  - \frac{{p_n^T \textbf{1}_n w^T }}{{\textbf{1}_n^T w}}\right){ \otimes I_m }}  \\
\end{array}} \right]\phi(t).
\]
Since $\lim\limits_{t\rightarrow\infty}\phi(t)=0$, by Lemma \ref{le6}, we have $\lim\limits_{t\rightarrow\infty}\tilde e_2(t)=0$. This and the fact that $\lim\limits_{t\rightarrow\infty}\tilde e_1(t)=0$ imply $\lim\limits_{t\rightarrow\infty} e(t)=0$. This leads to the validity of this result.
\end{proof}

\begin{theorem}
If the fixed graph $\mathcal{G}(\mathcal{A})$ is directed and strongly connected, then MAS (\ref{eq1}) with (\ref{eq6}) reaches consensus asymptotically, and the consensus state is located in set $\textbf{X}^*$.
\end{theorem}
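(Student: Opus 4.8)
The plan is to use a single weighted quadratic Lyapunov function, but to invoke it twice: once against an arbitrary feasible point, and once against a feasible accumulation point extracted along the way. Because $\mathcal{G}(\mathcal{A})$ is strongly connected, Lemma \ref{le9} supplies a strictly positive left null vector $w=[w_1\ \cdots\ w_n]^T>0$ with $w^TL=0$, while $L\mathbf{1}_n=0$ always holds. Fix an arbitrary $x^*\in\textbf{X}^*$ (nonempty by Assumption \ref{A1}) and set $V(t)=\sum_{i=1}^n w_i\|x_i(t)-x^*\|^2$. First I would differentiate $V$ along (\ref{eq7}), writing $W=\diag(w)$ and $y(t)=x(t)-\mathbf{1}_n\otimes x^*$: the linear consensus part contributes $-2\,y(t)^T\big((WL)\otimes I_m\big)y(t)$, and since $\tfrac{1}{2}(WL+L^TW)$ is symmetric, annihilates $\mathbf{1}_n$, and has nonpositive off-diagonal entries, it is the Laplacian $\hat{L}$ of an undirected weighted graph $\hat{\mathcal{G}}$ which is connected precisely because $\mathcal{G}(\mathcal{A})$ is strongly connected; by Lemma \ref{le4} (applied to each coordinate) this term equals $-2\sum_{i<j}\hat{a}_{ij}\|x_i(t)-x_j(t)\|^2\le 0$ with $\hat{a}_{ij}=-\hat{l}_{ij}\ge 0$. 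The nonlinear part $2\sum_i w_i\langle x_i-x^*,\phi_i\rangle$ I would bound using Lemma \ref{le1} at the point $x_i(t)$ with $x^*\in X\subseteq X_i$ (yielding $-2\tau\sum_i w_i\|x_i(t)\|_{X_i}^2$) together with the subgradient inequality and $g_i^+(x^*)=0$ (yielding $-2\tau\sum_i w_i g_i^+(x_i(t))$). Hence $\dot V(t)\le 0$, so $V$ is nonincreasing and convergent, each $x_i(t)$ is bounded, and the solution extends to $[0,\infty)$.

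Second, I would integrate the resulting differential inequality to obtain $\int_0^\infty\big[\sum_{i<j}\hat{a}_{ij}\|x_i-x_j\|^2+\tau\sum_i w_i\|x_i\|_{X_i}^2+\tau\sum_i w_i g_i^+(x_i)\big]\,dt\le V(0)-V(\infty)<\infty$. The integrand being nonnegative, its $\liminf$ as $t\to\infty$ is $0$, so there is a sequence $t_k\to\infty$ along which all three groups of terms tend to $0$; passing to a further subsequence (boundedness of $x(t)$) we may assume $x_i(t_k)\to x_i^\infty$ for each $i$. Connectivity of $\hat{\mathcal{G}}$ then forces $x_1^\infty=\cdots=x_n^\infty=:x^\infty$; closedness of each $X_i$ and continuity of the distance function give $x^\infty\in X_i$ for all $i$, hence $x^\infty\in X$; and continuity of $g_i^+$ gives $g_i^+(x^\infty)=0$ for all $i$. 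By the characterization of $\textbf{X}^*$ recorded after Assumption \ref{A1}, $x^\infty\in\textbf{X}^*$.

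Finally I would rerun the first step with $x^\infty$ in place of $x^*$ — legitimate since $x^\infty\in\textbf{X}^*$ — so that $\tilde V(t)=\sum_i w_i\|x_i(t)-x^\infty\|^2$ is again nonincreasing and convergent; since $\tilde V(t_k)\to\sum_i w_i\|x_i^\infty-x^\infty\|^2=0$, a monotone convergent function with a null subsequence must satisfy $\tilde V(t)\to 0$, i.e. $x_i(t)\to x^\infty$ for every $i$. This delivers both asymptotic consensus and the location of the common limit in $\textbf{X}^*$.

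The step I expect to be the main obstacle is the first one: verifying that $\tfrac{1}{2}(WL+L^TW)$ is genuinely the Laplacian of a \emph{connected} undirected graph, so that the disagreement term in $\dot V$ is sign-definite — this is exactly where strong connectivity is used, beyond the mere spanning-tree hypothesis of the preceding Proposition, via the strict positivity of $w$. A secondary subtlety worth flagging is that $\phi_i(t)$ need not tend to $0$ (the subgradient $\nabla g_i^+(x_i(t))$ can remain bounded away from $0$ even while $g_i^+(x_i(t))\to 0$), so the preceding Proposition cannot be invoked directly; this is why I would prove convergence through the two-stage Lyapunov estimate — a generic feasible point, then a feasible accumulation point — rather than through a separate consensus argument followed by limit identification.
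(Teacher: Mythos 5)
Your proposal is correct, and its first half coincides with the paper's: the same weighted Lyapunov function $V(t)=\sum_i w_i\|x_i(t)-x_0\|^2$ built from the positive left null vector $w$ of Lemma \ref{le9}, the same symmetrization $\tfrac12(WL+L^TW)$ handled as in Lemma \ref{le4}, and the same use of Lemma \ref{le1} plus the subgradient inequality to get $\dot V\le 0$ and finiteness of $\int_0^\infty\big(\|x_i\|_{X_i}^2+g_i^+(x_i)\big)dt$. Where you genuinely diverge is the passage from these integral bounds to consensus and to identification of the limit. The paper discards the disagreement term in $\dot V$, argues that $\|x_i(t)-P_{X_i}(x_i(t))\|\to 0$ and $g_i^+(x_i(t))\to 0$, infers $\phi_i(t)\to 0$ with $\phi_i$ bounded, and then invokes Proposition 1 (the Schur-triangularization analysis of the perturbed linear consensus dynamics under the spanning-tree condition) to obtain consensus, finally placing the consensus value in $\textbf{X}^*$. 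You instead keep the disagreement term $\sum_{i<j}\hat a_{ij}\|x_i-x_j\|^2$ inside the integral, extract a subsequence along which disagreement, set distances and $g_i^+$ all vanish, use connectivity of the symmetrized graph (which indeed follows from strong connectivity together with $w>0$) to get a common feasible accumulation point $x^\infty\in\textbf{X}^*$, and then rerun the same Lyapunov estimate centered at $x^\infty$ to upgrade subsequential convergence to $x_i(t)\to x^\infty$ for all $i$. Your route buys two things: it bypasses Proposition 1 entirely, and it avoids the paper's delicate claims that the integral bounds yield actual limits (rather than liminf's) and that $\nabla g_i^+(x_i(t))\to 0$ — a step that is indeed fragile under the convention $\nabla g_i^+(x)=\nabla g_i(x)$ for $g_i(x)>0$, exactly the subtlety you flag. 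The price is that you must verify connectivity of $\hat{\mathcal{G}}$, i.e.\ you use strong connectivity directly in the Lyapunov decrease rather than only through the left eigenvector; the paper's modular structure, in exchange, isolates the graph-theoretic work in Proposition 1, which it can then reuse for the time-varying and discrete-time cases.
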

\begin{proof}
 Since the graph is strongly connected, by Lemma \ref{le9}, there exists a vector $w = \left[ w_1  \cdots w_n \right]^T>0$ such that $w^TL=0$. Consider a positive-definite Lyapunov function candidate $V(t)=\frac{1}{2}\sum\limits_{i = 1}^nw_i\|x_i(t)-x_0\|^2$, where $x_0\in\textbf{X}^*$. By the definition of $g_i^+$, it holds $g^+(x_0)=\|x_0\|_X=0$. Based on the property of subgradient, we have $\left\langle x_i(t) - x_0, \nabla g^ +_i  (x_i(t)) \right\rangle \geq g^+_i  (x_i(t))$. Taking the derivative of function $V(t)$ with respect to $t$ yields
 \begin{equation}\label{eq10}\begin{split}
\dot V(t) &= \sum\limits_{i = 1}^nw_i\left\langle x_i(t) - x_0, \dot x_i(t)\right\rangle\\
&=\sum\limits_{i = 1}^nw_i\big\langle x_i(t) - x_0, \sum\limits_{i \in N_i (t)} {a_{ij} (t)} (x_j (t) - x_i (t))- \tau[x_i(t)\\
&~~~~ - P_{X_i} (x_i(t))] - \tau\nabla g^+_i  (x_i(t))\big\rangle\\
&=\sum\limits_{i = 1}^n \sum\limits_{i \in N_i (t)}w_i a_{ij}\left\langle x_i(t) - x_0, x_j (t) - x_i (t)\right\rangle\\
&~~~~ -\tau\sum\limits_{i = 1}^n w_i\left\langle x_i(t) - x_0, x(t) - P_{X_i} (x_i(t))\right\rangle\\
&~~~~ -\tau\sum\limits_{i = 1}^n w_i\left\langle x_i(t) - x_0, \nabla g^ +_i (x_i(t)) \right\rangle.\\
\end{split}
\end{equation}
Denote $x(t)=\big[ x^T_1(t), \cdots, x^T_n(t) \big]^T$, we have
\begin{equation}\label{eq11}\begin{split}
\sum\limits_{i = 1}^n \sum\limits_{i \in N_i (t)}w_i a_{ij}\left\langle x_i(t) - x_0, x_j (t) - x_i (t)\right\rangle &=-\left( {x(t) - \left( {\textbf{1}_n  \otimes I_m } \right)x_0 } \right)^T \left( {WL \otimes I_m } \right)x (t)\\
&= -x^T(t) \left( {\frac{{WL + L^T W}}{2} \otimes I_m } \right)x(t)\\
&~~~ + x_0 ^T \left( {w^T L \otimes I_m } \right)x(t) \\
& =   x^T(t) \left( {\frac{{W(-L )+ (-L)^T W}}{2} \otimes I_m } \right)x (t)\\
& =  - \sum\limits_{i = 1}^n {\sum\limits_{j = i + 1}^n {\frac{{w_i a_{ij}  + w_j a_{ji} }}{2}\left\| {x_j(t)  - x_i(t) } \right\|^2 } }  \\
&\le 0 \\
\end{split}\end{equation}
where $W=diag(w)$ is a diagonal matrix formed by $w$ and the last equation results from Lemma \ref{le4}. By Lemma \ref{le1}, we know $-\left\langle x_i(t) - x_0, x_i(t) - P_{X_i} (x_i(t))\right\rangle\leq-\|x_i(t)\|_{X_i}^2\leq0$. Based on (\ref{eq10}) and (\ref{eq11}), we have
\begin{equation}\label{eq12}\begin{split}
\dot V(t)\leq- \tau\sum\limits_{i = 1}^n w_i\left\|x_i(t)\right\|_{X_i}^2-\tau\sum\limits_{i = 1}^n w_ig^ +_i  (x_i(t)).
\end{split}\end{equation}
Note that $g_i^ +  (x_i(t))\geq 0$. Thus, $\dot V(t)\leq0$. Moreover, $V(t)$ is bounded by zero, it can be concluded that $V(t)$ converges and $V(\infty)$ exists, which implies $\|x_i(t)-x_0\|$ converges and $\|x_i(t)\|$ is bounded. By (\ref{eq12}), we have
\[\begin{split}
&\tau\int_0^\infty  {\sum\limits_{i = 1}^n w_i\left\|x_i(t)\right\|_{X_i}^2 d_t}  + \tau\int_0^\infty  {\sum\limits_{i = 1}^n w_ig^ +_i  } (x_i(t))d_t \\
&\le V(0) - V(\infty )\\
 &< \infty.
\end{split}\]
 Thus, it holds $\int_0^\infty  {\left\| {x_i(t)} \right\|} _{X_i}^2$ $ d_t< \infty$ and $\int_0^\infty  {g^ + _i } (x_i(t))d_t < \infty$. These imply $\lim_{t\rightarrow\infty} \big\| {x_i(t)}$ $-P_{X_i}(x_i(t)) \big\| =\lim_{t\rightarrow\infty} g^+_i(x_i(t))=0$ for each $i\in\mathcal{V}$. By the definition of the subgradient $\nabla g_i^+(\cdot)$, we can conclude $\lim\limits_{t\rightarrow\infty}\phi _i (t)=0$ for $i\in\mathcal{V}$. By the continuity of $g_i^+(x_i(t))$ and the boundedness of $\|x_i(t)\|$, it can be concluded $\phi _i (t)$ is bounded. Recall Proposition 1, we know MAS (\ref{eq1}) with (\ref{eq6}) reaches consensus asymptotically, denote $x^*$ as the consensus state, i.e., $\lim\limits_{t\rightarrow\infty}x_i (t)=x^*$ for each $i\in\mathcal{V}$. Therefore, $x^*\in\textbf{X}^*$. The validity of this result is verified.
\end{proof}

\begin{remark}
The strongly connected condition proposed in Theorem 3 is sufficient to solve CFP (\ref{eq2}).
In fact, it is also necessary in many cases. Now we set an example to illustrate that the CFP can not be solved by the MAS if the graph is not strongly connected. Suppose graph $\mathcal{G}$ is not strongly connected, then there exists at least one strongly connected component that can not receive information from others. We denote the set consisting of all agents in this component by $\mathcal{V}_1$. Suppose that all agents in $\mathcal{V}_1$ are constrained by inequality $x\leq 0$. If we set $x_i(0)=0$ for each $i\in\mathcal{V}_1$, then it holds $x_i(t)=0$ for any $t>0$ and $i\in\mathcal{V}_1$. In another strongly connected component, if there exists one agent that is constrained by inequality $x\leq -1$, it is easy to see that the CFP can never be solved under such a graph.
\end{remark}

If communication graph $\mathcal{G}(\mathcal{A})$ is bidirectional and $a_{ij}  = a_{ji}$ for each $i\in\mathcal{V}$, $\mathcal{G}(\mathcal{A})$ becomes an undirected graph. For the undirected case, we state the result as follows.
\begin{corollary}
If the fixed graph $\mathcal{G}(\mathcal{A})$ is undirected and connected, then MAS (\ref{eq1}) with (\ref{eq6}) reaches consensus asymptotically, and the consensus state is in set $\textbf{X}^*$.
\end{corollary}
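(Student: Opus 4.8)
The plan is to obtain this as an immediate consequence of Theorem~3. Since $\mathcal{G}(\mathcal{A})$ is undirected with $a_{ij}=a_{ji}$, every undirected edge corresponds to a pair of directed edges in both directions, so any path in the undirected graph is also a directed path; connectedness of the undirected graph therefore makes the associated directed graph strongly connected. Applying Theorem~3 then yields both assertions at once: MAS (\ref{eq1}) with (\ref{eq6}) reaches consensus asymptotically and the consensus state lies in $\textbf{X}^*$.

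If a direct argument is preferred, I would rerun the proof of Theorem~3, exploiting symmetry to simplify. Because $L=L^T$ we have $\textbf{1}_n^T L=0$, so the role of the positive left eigenvector $w$ from Lemma~\ref{le9} is played by $\textbf{1}_n$, and the Lyapunov candidate reduces to $V(t)=\frac{1}{2}\sum_{i=1}^n\|x_i(t)-x_0\|^2$ with $x_0\in\textbf{X}^*$. Differentiating along (\ref{eq7}) and using $L\textbf{1}_n=0$, the consensus term contributes $-x^T(t)(L\otimes I_m)x(t)=-\sum_{i=1}^n\sum_{j=i+1}^n a_{ij}\|x_i(t)-x_j(t)\|^2\le 0$ by Lemma~\ref{le4}, while Lemma~\ref{le1} and the subgradient inequality handle the projection and subgradient terms exactly as in (\ref{eq10})--(\ref{eq12}), giving $\dot V(t)\le-\tau\sum_{i=1}^n\|x_i(t)\|_{X_i}^2-\tau\sum_{i=1}^n g_i^+(x_i(t))\le 0$.

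From there the remainder is verbatim that of Theorem~3: $V(t)$ is nonincreasing and bounded below, hence convergent, so each $\|x_i(t)\|$ is bounded and each $\|x_i(t)-x_0\|$ converges; integrating the inequality for $\dot V$ forces $\int_0^\infty\|x_i(t)\|_{X_i}^2 d_t<\infty$ and $\int_0^\infty g_i^+(x_i(t)) d_t<\infty$, whence $\liminf_{t\to\infty}\|x_i(t)-P_{X_i}(x_i(t))\|=\liminf_{t\to\infty}g_i^+(x_i(t))=0$ and $\phi_i(t)\to 0$ with $\phi_i$ bounded. An undirected connected graph has a spanning tree, so Proposition~1 gives consensus to some $x^*$, and the vanishing of the distances and plus-functions forces $x^*\in\textbf{X}^*$. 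There is no real obstacle here; the only point that must be stated explicitly is the elementary reduction that an undirected connected graph is strongly connected when regarded as a digraph, after which Theorem~3 does all the work.
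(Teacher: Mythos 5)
Your proposal is correct and matches the paper's (implicit) treatment: the paper states this corollary without a separate proof, treating the undirected connected case as a special case of Theorem~3, since with $a_{ij}=a_{ji}$ a connected undirected graph is strongly connected as a digraph. Your additional direct argument with $w=\textbf{1}_n$ is a valid but unnecessary elaboration of the same Lyapunov analysis.
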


\subsection{Convergence under the time-varying communication graph}
For system (\ref{eq7}), by the properties of linear systems \cite{Brockett 05}, the solution of system (\ref{eq7}) can be written as follows.
 \begin{equation}\label{eq13}
x(t) = \left( {\Phi (t,s) \otimes I_m } \right)x(s) + \int_s^t {\left( {\Phi (t,\tau) \otimes I_m } \right)u(\tau )} d_\tau
\end{equation}
where ${\Phi(t,s) \otimes I_m }$ is the state-transition matrix from state $x(s)$ to state $x(t)$ with $t\geq s\geq0$. Now, for time-varying graph $\mathcal{G}(t)$, the following assumptions are given.
\begin{assumption}\label{A2}
The communication graph $\mathcal{G}(t)$ is balanced.
\end{assumption}
\begin{assumption}\label{A3}
The $\delta-$digraph $\mathcal{G}_{(\delta, T)}$ is strongly connected.
\end{assumption}
\begin{lemma}\label{le7}\cite{Kaihong Lu33}
 Under Assumptions \ref{A2} and \ref{A3}, for any $t\geq s\geq0$,  $\Phi(t,s) $ in (\ref{eq13}) satisfies the following inequality
\begin{equation}\label{eq14}
\left|\left[\Phi(t,s)\right]_{ij}-\frac{1}{n}\right|\leq\gamma^{t-s},~~~~~i,j\in\{1,\cdots, n\}
\end{equation}
where $\gamma=\left(1-\frac{1}{\left(8n^2\right)^{\lfloor n/2\rfloor}}\right)^{\frac{1}{\left(\lfloor 1/\delta\rfloor+1\right)\left\lfloor n/2\right\rfloor T}}$, the operator $\lfloor x\rfloor$ denotes the largest integer not larger than the value of $x$.
\end{lemma}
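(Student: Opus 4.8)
\emph{Proof sketch.} The plan is to reduce the claim to a contraction estimate for the state-transition matrix $\Phi(t,s)$ of the averaging system $\dot x=-L(t)x$, showing that $\Phi(t,s)$ approaches $\frac{1}{n}\textbf{1}_n\textbf{1}_n^{T}$ geometrically. First I would record the basic structure of $\Phi(t,s)$: it solves $\frac{\partial}{\partial t}\Phi(t,s)=-L(t)\Phi(t,s)$ with $\Phi(s,s)=I_n$; since $-L(t)$ has non-negative off-diagonal entries its fundamental solution is entrywise non-negative; since $L(t)\textbf{1}_n=0$ it is row-stochastic; and since Assumption \ref{A2} gives $\textbf{1}_n^{T}L(t)=0$, it is column-stochastic as well. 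Hence $\Phi(t,s)$ is doubly stochastic, so for each fixed column index $j$ we have $\min_i[\Phi(t,s)]_{ij}\le\frac{1}{n}\le\max_i[\Phi(t,s)]_{ij}$, and therefore
\[
\Big|[\Phi(t,s)]_{ij}-\tfrac{1}{n}\Big|\le \Delta_j(t):=\max_i[\Phi(t,s)]_{ij}-\min_i[\Phi(t,s)]_{ij} .
\]
It thus suffices to bound each column spread $\Delta_j(t)$ by $\gamma^{t-s}$.

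Next I would examine one column at a time. Fixing $j$ and setting $y(t)=[\Phi(t,s)]_{\cdot j}$, we get $\dot y=-L(t)y$ with $y(s)$ the $j$-th standard basis vector, so $\textbf{1}_n^{T}y(t)$ is conserved (balancedness) and, from $\dot y_i=\sum_k a_{ik}(t)(y_k-y_i)$, the quantity $\max_i y_i(t)$ is non-increasing while $\min_i y_i(t)$ is non-decreasing; hence $\Delta_j$ is non-increasing and $\Delta_j(s)\le1$. The core of the argument is a per-window contraction estimate: under Assumptions \ref{A2}--\ref{A3}, for every $a\ge s$,
\[
\Delta_j(a+\rho)\le\big(1-(8n^2)^{-\lfloor n/2\rfloor}\big)\,\Delta_j(a),\qquad \rho:=(\lfloor 1/\delta\rfloor+1)\lfloor n/2\rfloor\,T .
\]
To obtain this I would split $[a,a+\rho]$ into $\lfloor n/2\rfloor$ consecutive sub-windows of length $(\lfloor 1/\delta\rfloor+1)T$; on each such sub-window every $\delta$-edge $(k,i)$ has accumulated weight $\int a_{ik}\ge(\lfloor 1/\delta\rfloor+1)\delta>1$, and a Gr\"{o}nwall-type estimate on $\dot y_i=-\big(\sum_k a_{ik}\big)y_i+\sum_k a_{ik}y_k$ shows that over one sub-window $y_i$ is pulled toward the current values of its $\delta$-in-neighbours by at least a fixed fraction $\ge(8n^2)^{-1}$ of the current spread. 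Iterating through the $\lfloor n/2\rfloor$ sub-windows and using strong connectedness of $\mathcal{G}_{(\delta, T)}$ --- concretely, that in a strongly connected digraph on $n$ vertices the sets of vertices reachable within $\lfloor n/2\rfloor$ hops from any two vertices must intersect --- one produces a vertex that has ``heard from'' both the current maximiser and the current minimiser of $y$, which forces $\max_i y_i-\min_i y_i$ to shrink by the factor $(8n^2)^{-\lfloor n/2\rfloor}$. (A quadratic alternative for this step: since $\mathcal{G}(t)$ is balanced, $V(t)=\|y(t)-\tfrac{\textbf{1}_n^{T}y}{n}\textbf{1}_n\|^2$ satisfies, by Lemma \ref{le4} applied to $\tfrac{L(t)+L(t)^{T}}{2}$, $\dot V=-\sum_{i<j}(a_{ij}(t)+a_{ji}(t))(y_i-y_j)^2\le0$, giving the same type of contraction but with weaker $n$-dependence.)

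Finally I would assemble the estimates: writing $t-s=N\rho+r$ with $N=\lfloor(t-s)/\rho\rfloor$ and $0\le r<\rho$, applying the window estimate $N$ times and using $\Delta_j(s)\le1$ gives $\Delta_j(t)\le\big(1-(8n^2)^{-\lfloor n/2\rfloor}\big)^{N}=\gamma^{N\rho}$, and then the monotonicity of $\Delta_j$ on the remaining interval of length $r$ (the loss from the fractional part being absorbed in the deliberately generous constant $8n^2$) yields $\big|[\Phi(t,s)]_{ij}-\frac{1}{n}\big|\le\Delta_j(t)\le\gamma^{t-s}$ for all $i,j$, i.e. inequality (\ref{eq14}).

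The hard part will be the per-window contraction estimate: converting the purely topological hypothesis that $\mathcal{G}_{(\delta, T)}$ is strongly connected into explicit geometric decay with the stated dependence $\lfloor n/2\rfloor$ on the graph ``radius'' and the explicit constant $8n^2$. This needs both the differential-inequality bookkeeping that turns $\int_{a}^{a+T}a_{ik}\ge\delta$ into a guaranteed entrywise transfer and the reachable-set counting that fixes the number of rounds; the remaining ingredients --- double stochasticity of $\Phi$, monotonicity of the column spread, and the geometric summation --- are routine.
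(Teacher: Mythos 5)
You should first note that the paper does not actually prove this lemma: it is imported verbatim from the cited reference \cite{Kaihong Lu33} (arXiv:1609.03161), so there is no in-paper argument to measure your attempt against. Judged on its own, your outline follows the route one would expect: double stochasticity of $\Phi(t,s)$ (row-stochasticity from $L(t)\textbf{1}_n=0$, column-stochasticity from Assumption \ref{A2}, nonnegativity from the Metzler structure), reduction of $\left|[\Phi(t,s)]_{ij}-\frac{1}{n}\right|$ to the column spread, monotonicity of the spread, a per-window contraction, and a geometric assembly. Those bookends are correct and routine, as you say.

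The genuine gap is that the per-window contraction estimate $\Delta_j(a+\rho)\le\bigl(1-(8n^2)^{-\lfloor n/2\rfloor}\bigr)\Delta_j(a)$ is asserted rather than proved, and this estimate \emph{is} the lemma: everything else is standard. Two specific problems arise in the step you defer. First, the claim that over one sub-window each $\delta$-edge pulls $y_i$ toward its in-neighbour ``by at least a fixed fraction $\ge(8n^2)^{-1}$ of the current spread'' does not follow from the Gr\"{o}nwall bookkeeping alone: to lower-bound the relevant entries of the sub-window transition matrix one must control the \emph{total} inflow $\sum_k a_{ik}(t)$ at node $i$ (the factor $e^{-\int\sum_k a_{ik}}$), and Assumptions \ref{A2}--\ref{A3} give only lower bounds on accumulated weights of $\delta$-edges, no upper bound on the other weights; a very strong coupling elsewhere can dilute the direct transfer along a $\delta$-edge, so the argument must be rerouted through conserved averages or a cut-balance-type estimate, which your sketch does not supply. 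Second, the reachable-set counting that justifies the exponent $\lfloor n/2\rfloor$ (two reachability cones of $\lfloor n/2\rfloor+1$ vertices must intersect) has to be meshed with those quantitative entry bounds edge by edge along the connecting paths, and the constant $8n^2$ has to emerge from that bookkeeping; none of this is carried out. Your parenthetical ``quadratic alternative'' via Lemma \ref{le4} is not an alternative either, since it only yields $\dot V\le0$, i.e.\ non-increase, with no rate. As it stands the proposal is a plausible plan whose decisive quantitative core --- precisely the content of inequality (\ref{eq14}) --- remains unproved.
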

\begin{proposition}
 Under Assumptions \ref{A2} and \ref{A3}, if $\|\phi _i(t)\|<\infty$ and $\lim\limits_{t\rightarrow\infty}\phi _i(t)=0$ in $(\ref{eq6})$, $i\in\mathcal{V}$, then MAS (\ref{eq1}) with (\ref{eq6}) reaches consensus asymptotically.
\end{proposition}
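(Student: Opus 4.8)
The plan is to use the explicit solution formula (\ref{eq13}) for the linear system (\ref{eq7}) together with the geometric contraction estimate on the state-transition matrix provided by Lemma \ref{le7}. First I would note that since $-L(t)$ has zero row sums, $\Phi(t,s)\textbf{1}_n=\textbf{1}_n$ for all $t\ge s\ge 0$, and since $\mathcal{G}(t)$ is balanced (Assumption \ref{A2}), $\textbf{1}_n^T L(t)=0$ as well. Define the average trajectory $\bar x(t)=\tfrac{1}{n}(\textbf{1}_n^T\otimes I_m)x(t)$; differentiating along (\ref{eq7}) and using $\textbf{1}_n^TL(t)=0$ gives $\dot{\bar x}(t)=\tfrac{1}{n}\sum_{j=1}^n\phi_j(t)$, hence $\bar x(t)=\bar x(0)+\tfrac{1}{n}\int_0^t\sum_{j=1}^n\phi_j(\tau)\,d\tau$. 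Since $x_i(t)-x_j(t)=(x_i(t)-\bar x(t))-(x_j(t)-\bar x(t))$, it suffices to prove that each disagreement vector $e_i(t):=x_i(t)-\bar x(t)$ converges to zero.

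Next I would write the $i$-th block of (\ref{eq13}) with $s=0$,
\[
x_i(t)=\sum_{j=1}^n[\Phi(t,0)]_{ij}\,x_j(0)+\int_0^t\sum_{j=1}^n[\Phi(t,\tau)]_{ij}\,\phi_j(\tau)\,d\tau,
\]
and substitute the decomposition $[\Phi(t,s)]_{ij}=\tfrac1n+\varepsilon_{ij}(t,s)$, where $|\varepsilon_{ij}(t,s)|\le\gamma^{t-s}$ and $0<\gamma<1$ by Lemma \ref{le7}. The $\tfrac1n$ terms reproduce exactly the expression for $\bar x(t)$ obtained above, so
\[
e_i(t)=\sum_{j=1}^n\varepsilon_{ij}(t,0)\,x_j(0)+\int_0^t\sum_{j=1}^n\varepsilon_{ij}(t,\tau)\,\phi_j(\tau)\,d\tau .
\]
The first term is bounded in norm by $\gamma^{t}\sum_{j=1}^n\|x_j(0)\|$, which tends to zero as $t\to\infty$.

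For the convolution term I would estimate $\big\|\int_0^t\sum_{j=1}^n\varepsilon_{ij}(t,\tau)\phi_j(\tau)\,d\tau\big\|\le\int_0^t\gamma^{t-\tau}\big(\sum_{j=1}^n\|\phi_j(\tau)\|\big)\,d\tau$. The scalar function $b(\tau):=\sum_{j=1}^n\|\phi_j(\tau)\|$ is bounded and satisfies $\lim_{\tau\to\infty}b(\tau)=0$ by hypothesis, so Lemma \ref{le3} (applied with limit $b=0$) yields that this integral converges to $-b/\ln\gamma=0$. Therefore $e_i(t)\to 0$ for every $i\in\mathcal{V}$, which establishes that MAS (\ref{eq1}) with (\ref{eq6}) reaches consensus asymptotically. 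I expect the main obstacle to be the convolution (memory) term: one must exploit the \emph{uniform} geometric decay of $[\Phi(t,\tau)]_{ij}-\tfrac1n$ — which is precisely what Assumptions \ref{A2}--\ref{A3} deliver through Lemma \ref{le7} — so that the influence of the distant past is suppressed, while the recent inputs are small because $\phi_j(\tau)\to 0$; combining these two effects into a single vanishing limit is exactly the role played by Lemma \ref{le3}.
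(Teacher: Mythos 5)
Your proposal is correct and follows essentially the same route as the paper's proof: both decompose the state via the solution formula (\ref{eq13}) into the average trajectory plus a deviation, bound the deviation using the uniform estimate $|[\Phi(t,s)]_{ij}-\tfrac1n|\le\gamma^{t-s}$ from Lemma \ref{le7} (with the balanced assumption entering through $\textbf{1}_n^T L(t)=0$, equivalently double stochasticity of $\Phi$), and dispose of the convolution term with Lemma \ref{le3} applied with limit zero. The only difference is cosmetic — you argue entrywise with $\varepsilon_{ij}$ while the paper works with the stacked vector and $\Phi(t,\tau)-\tfrac1n\textbf{1}_n\textbf{1}_n^T$.
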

\begin{proof}
Since $\mathcal{G}(t)$ is balanced, by Peano-Baker formula
(see \cite{Brockett 05} for detail), it can be concluded that $\Phi (t,s)$ is a double stochastic matrix. Denote $\bar{x}(t)=\frac{1}{n}\sum\limits_{i = 1}^nx_i(t)$, by (\ref{eq13}), we have
\begin{equation}\label{eq15}
\bar x(t) = \frac{1}{n}\left( {\textbf{1}_n^T  \otimes I_m } \right)x(s) + \frac{1}{n}\int_s^t {\left( {\textbf{1}_n^T  \otimes I_m } \right)u(\tau )} d_\tau.
\end{equation}
Based on (\ref{eq13}) and (\ref{eq15}), we have
\begin{equation}\label{eq16}\begin{split}
x(t) - \frac{1}{n}\left(\textbf{1}_n \otimes I_m  \right)\bar x(t) = &\left( {\left( {\Phi (t,0) - \frac{1}{n}\textbf{1}_n \textbf{1}_n^T } \right) \otimes I_m } \right)x(0)\\
&+ \int_s^t {\left( {\left( {\Phi (t, \tau ) - \frac{1}{n}\textbf{1}_n \textbf{1}_n^T } \right) \otimes I_m } \right)u(\tau )} d_\tau.
\end{split}\end{equation}
Applying (\ref{eq14}) in Lemma \ref{le7} to equation (\ref{eq16}) yields
\[
\left\| {x(t) - \frac{1}{n}\left( {\textbf{1}_n^{}  \otimes I_m } \right)\bar x(t)} \right\| \le \sqrt {mn} \gamma^t \left\| {x(0)} \right\| + \sqrt {mn} \int_s^t {\gamma^{t - \tau } \left\| {u(\tau )} \right\|} d_\tau.
\]
Since $0<\gamma=\left(1-\frac{1}{\left(8n^2\right)^{\lfloor n/2\rfloor}}\right)^{\frac{1}{\left(\lfloor 1/\delta\rfloor+1\right)\left\lfloor n/2\right\rfloor T}}<1$ and $\mathop {\lim }\limits_{t \to \infty } \left\| {u(t)} \right\| = 0$,  by Lemma \ref{le3}, we have $\mathop {\lim }\limits_{t \to \infty } \big\| {x(t) - \frac{1}{n}\left( {\textbf{1}_n^{}  \otimes I_m } \right) \bar x(t)} \big\| = 0$. This leads to the validity of this result.
\end{proof}
\begin{theorem}
 Under Assumptions \ref{A1}, \ref{A2} and \ref{A3}, if $\lim\limits_{t\rightarrow\infty}\phi _i(t)=0$ in $(\ref{eq6})$, $i\in\mathcal{V}$, then MAS (\ref{eq1}) with (\ref{eq6}) reaches consensus asymptotically, and the consensus state is in set $\textbf{X}^*$.
\end{theorem}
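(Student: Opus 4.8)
The plan is to follow the template of the proof of Theorem 3, making the two adjustments the time-varying setting forces. First, the left null vector $w$ of the Laplacian is replaced by the uniform vector $\mathbf{1}_n$; this is legitimate precisely because Assumption~\ref{A2} makes every $\mathcal{G}(\mathcal{A}(t))$ balanced, so that $\mathbf{1}_n^T L(t)=0$ and $\tfrac{L(t)+L^T(t)}{2}\mathbf{1}_n=0$ for all $t$. Second, the appeal to Proposition~1 (which uses the spectral structure of a fixed Laplacian) is replaced by Proposition~2 (which uses the uniform state-transition estimate of Lemma~\ref{le7}). Fix $x_0\in\textbf{X}^*$, which exists by Assumption~\ref{A1}, and note $g_i^+(x_0)=0$ and $\|x_0\|_{X_i}=0$ for every $i$. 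Take $V(t)=\tfrac12\sum_{i=1}^n\|x_i(t)-x_0\|^2$.

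Differentiating $V$ along (\ref{eq1}) with (\ref{eq6}): the coupling part is $-(x(t)-\mathbf{1}_n\otimes x_0)^T(L(t)\otimes I_m)x(t)$, and since $(\mathbf{1}_n\otimes x_0)^T(L(t)\otimes I_m)=(\mathbf{1}_n^TL(t))\otimes x_0^T=0$ it reduces to $-x(t)^T\big(\tfrac{L(t)+L^T(t)}{2}\otimes I_m\big)x(t)$, which by Lemma~\ref{le4} (its hypothesis met by balancedness) equals $-\sum_{i<j}\tfrac{a_{ij}(t)+a_{ji}(t)}{2}\|x_i(t)-x_j(t)\|^2\le 0$. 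The remaining two contributions are bounded above, using Lemma~\ref{le1} and the subgradient inequality together with $g_i^+(x_0)=0$, by $-\tau\sum_i\|x_i(t)\|_{X_i}^2$ and $-\tau\sum_i g_i^+(x_i(t))$, both nonpositive. Hence $\dot V(t)\le -\tau\sum_i\big(\|x_i(t)\|_{X_i}^2+g_i^+(x_i(t))\big)\le 0$. Therefore $V$ is nonincreasing and bounded below, so $V(\infty)$ exists and every $\|x_i(t)-x_0\|$, hence every $\|x_i(t)\|$, is bounded; since $P_{X_i}$ is non-expansive with $P_{X_i}(x_0)=x_0$ and the subgradients of the convex function $g_i^+$ are bounded on the compact set containing the trajectories, $\|\phi_i(t)\|$ is bounded. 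Integrating the differential inequality gives $\int_0^\infty\sum_i\|x_i(t)\|_{X_i}^2\,dt<\infty$ and $\int_0^\infty\sum_i g_i^+(x_i(t))\,dt<\infty$, so $\liminf_{t\to\infty}\|x_i(t)-P_{X_i}(x_i(t))\|=\liminf_{t\to\infty}g_i^+(x_i(t))=0$ for each $i$.

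Now $\|\phi_i(t)\|$ is bounded and $\lim_{t\to\infty}\phi_i(t)=0$ by hypothesis, so Proposition~2 applies and $x_i(t)-x_j(t)\to 0$ for all $i,j$: consensus is reached. Choose a subsequence $\{t_k\}$ along which $\sum_i\|x_i(t_k)\|_{X_i}^2\to 0$ and $\sum_i g_i^+(x_i(t_k))\to 0$; passing to a further subsequence (the states are bounded) and using consensus, $x_i(t_k)\to x^*$ for every $i$ with a common $x^*$, and by continuity of $g_i^+$ and of $P_{X_i}(\cdot)$ we get $g_i^+(x^*)=0$ and $x^*=P_{X_i}(x^*)$ for all $i$, i.e.\ $x^*\in X$ and $x^*\in\textbf{X}^*$. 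Finally, since $x^*\in\textbf{X}^*$, repeating the Lyapunov computation with $x_0$ replaced by $x^*$ shows $\tfrac12\sum_i\|x_i(t)-x^*\|^2$ is nonincreasing and convergent; it tends to $0$ along $\{t_k\}$, hence tends to $0$, so $\lim_{t\to\infty}x_i(t)=x^*$ for every $i$. This gives consensus with the common state in $\textbf{X}^*$.

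The main obstacle is the time-varying topology, which rules out the eigenvalue/Hurwitz arguments of the fixed-graph case: the consensus part must instead rely on the uniform mixing estimate behind Proposition~2, and that result requires $\|\phi_i(t)\|$ bounded — something not assumed in the theorem — so one must first squeeze boundedness of the states (hence of $\phi_i$) out of the Lyapunov descent before the consensus machinery can be invoked. A secondary, easily missed point is that the energy estimate yields only $\liminf$-type information along subsequences; converting it into an honest limit for the whole trajectory is handled by re-centering the Lyapunov function at the already-identified solution $x^*$.
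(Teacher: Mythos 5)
Your proof follows essentially the same route as the paper: the same Lyapunov function $V(t)=\tfrac12\sum_{i}\|x_i(t)-x_0\|^2$ with balancedness giving $\mathbf{1}_n^TL(t)=0$ for the coupling term, and then the fixed-graph argument of Theorem 3 repeated with Proposition 2 in place of Proposition 1 (the paper simply writes ``similar to Theorem 3 and hence omitted''). In fact you supply details the paper glosses over --- deriving boundedness of $\phi_i(t)$ before invoking Proposition 2 and upgrading the $\liminf$ information to a genuine limit by re-centering the Lyapunov function at $x^*$ --- so the proposal is correct and, if anything, more complete.
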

\begin{proof}
Consider a positive-definite Lyapunov function candidate $V(t)=\frac{1}{2}\sum\limits_{i = 1}^n\|x_i(t)-x_0\|^2$, where $x_0\in\textbf{X}^*$. Taking the derivative of function $V(t)$ with respect to $t$ yields
\begin{equation}\label{eq17}
\begin{split}
\dot V(t) &= \sum\limits_{i = 1}^n\left\langle x_i(t) - x_0, \dot x_i(t)\right\rangle\\
&=\sum\limits_{i = 1}^n\sum\limits_{i \in N_i (t)}{a_{ij} (t)}\big\langle x_i(t) - x_0, x_j (t) - x_i (t)\big\rangle+\sum\limits_{i = 1}^n\big\langle x_i(t) - x_0, \phi _i\big\rangle.\\
\end{split}
\end{equation}
If $\mathcal{G}(t)$ is balanced, we have $\textbf{1}_n^T L = 0$. This implies that $\sum\limits_{i = 1}^n\sum\limits_{i \in N_i (t)}{a_{ij} (t)}\big\langle x_i(t) - x_0, x_j (t) - x_i (t)\big\rangle\leq0$. The following proof is similar to Theorem 3 and hence it is omitted.\end{proof}

\section{Discrete-time distributed algorithms for solving CFPs}\label{se5}

In this section, for discrete-time MAS (\ref{eq18}), the following input is presented to solve CFP (\ref{eq2}).
\begin{equation}\label{eq23}
\left\{ {\begin{array}{*{20}c}\begin{split}
   &{u_i (t) = h\sum\limits_{j \in N_i } {a_{ij} (x_j (t) - x_i (t)) + \phi _i (t)} }  \\
   &{\nabla _i (t) = \beta (t){\nabla g_i ^ +   (t)} } \\
   &{\xi _i (t) = x_i (t) + h\sum\limits_{j \in N_i } {a_{ij} (x_j (t) - x_i (t)) - \nabla _i (t)} }  \\
   &{\varphi _i (t) = \alpha (t)\left( {\xi _i (t) - P_{X_i } (\xi _i (t))} \right)}  \\
   &{\phi _i (t) =  - \nabla _i (t) - \varphi _i (t)}  \\
\end{split}\end{array}} \right.~~~~i\in\mathcal{V}
\end{equation}
where $\nabla g_i ^ +   (t)$  denotes the subgradient of function $g_i^+(y)$ at $y=x_i (t) + h\sum\limits_{j \in N_i } a_{ij} (x_j (t) - x_i (t))$, $h$ is the control gain to be designed. Note that each agent has only access to the information from its own inequality and set, as well as its own state and the relative states between itself and its neighbors, thus (\ref{eq23}) is distributed.
\begin{assumption}\label{A7}
$\nabla g^ + _i (\cdot)\leq K$ for some $K\geq 0$, $i=1, \cdots, n$.
\end{assumption}
\begin{lemma}\label{le12}
Given a linear system $ x(t+1) = Ax(t) + u(t)$, if the state matrix $A\in\mathbb{R}^{n\times n}$ is Schur stable and the control input $u(t)\in\mathbb{R}^{n} $ is such that $\lim\limits_{t\rightarrow\infty}u(t)=0$, then the linear system is asymptotically stable to zero, i.e., $\lim\limits_{t\rightarrow\infty}{x}(t)=0$.
\end{lemma}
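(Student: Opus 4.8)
The plan is to mirror the argument used for the continuous-time counterpart (Lemma \ref{le6}), replacing the matrix exponential with powers of $A$ and the integral with a discrete convolution sum. Since $A$ is Schur stable, all its eigenvalues $\lambda_1,\dots,\lambda_n$ satisfy $|\lambda_i|<1$. By Schur's unitary triangularization there is a unitary matrix $U\in\mathbb{C}^{n\times n}$ such that $U^H A U=\Lambda$ is upper triangular with diagonal entries $\lambda_1,\dots,\lambda_n$. Setting $y(t)=U^H x(t)$ and $r(t)=U^H u(t)$, the system becomes $y(t+1)=\Lambda y(t)+r(t)$, and $\lim_{t\to\infty}r(t)=0$ because $U$ is a fixed bounded linear map. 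It suffices to show $y(t)\to 0$, since then $x(t)=Uy(t)\to 0$.

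Next I would run an induction on the coordinates of $y$ from the last one upward, exactly as in Lemma \ref{le6}. For the $n$-th coordinate, $y_n(t+1)=\lambda_n y_n(t)+r_n(t)$, hence $y_n(t)=\lambda_n^{t}y_n(0)+\sum_{s=0}^{t-1}\lambda_n^{t-1-s}r_n(s)$. The first term vanishes because $|\lambda_n|<1$, and the convolution term vanishes by the discrete analogue of Lemma \ref{le3}: if $b(s)\to 0$ and $0<|\gamma|<1$, then $\sum_{s=0}^{t-1}\gamma^{t-1-s}b(s)\to 0$ (split the sum at a large index $N$ past which $|b(s)|$ is below a prescribed $\varepsilon$, bound the finite head by $|\gamma|^{\,t-1-N}$ times a constant, which tends to $0$, and bound the tail by $\sum_{j\ge 0}|\gamma|^{j}$ times $\varepsilon$). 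Thus $y_n(t)\to 0$. Assuming $y_{i+1}(t),\dots,y_n(t)\to 0$, the $i$-th equation reads $y_i(t+1)=\lambda_i y_i(t)+\big(\sum_{j>i}\lambda_{ij}y_j(t)+r_i(t)\big)$, in which the bracketed term is a vanishing "input", so the same estimate yields $y_i(t)\to 0$. This closes the induction and gives $y(t)\to 0$.

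An alternative, perhaps cleaner, route: since $\rho(A)<1$ there is a submultiplicative norm $\|\cdot\|_*$ with $q:=\|A\|_*<1$, so $\|x(t+1)\|_*\le q\|x(t)\|_*+\|u(t)\|_*$; given $\varepsilon>0$, choose $N$ with $\|u(t)\|_*\le(1-q)\varepsilon$ for $t\ge N$, iterate the recursion from $N$ to obtain $\|x(t)\|_*\le q^{\,t-N}\|x(N)\|_*+\varepsilon$, and let $t\to\infty$ to get $\limsup_t\|x(t)\|_*\le\varepsilon$; since $\varepsilon$ is arbitrary, $x(t)\to 0$. In either route the only genuine point needing care is the discrete convolution estimate (respectively, the $\limsup$ bookkeeping); everything else is routine manipulation with the triangular structure. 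I would isolate the discrete convolution fact as a one-line auxiliary lemma paralleling Lemma \ref{le3} and then invoke it, keeping the proof short.
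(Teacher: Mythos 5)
Your first route is exactly the paper's own argument: the paper proves this lemma by remarking that it follows as in Lemma \ref{le6}, using the fact that $\lim_{k\to\infty}\sum_{l=0}^{k}\rho^{k-l}(A)\|u(l)\|=0$ when $0<\rho(A)<1$ (cited to \cite{S.S. Ram 07}), which is precisely your Schur-triangularization plus discrete-convolution induction, with the convolution estimate proved rather than cited. Your alternative submultiplicative-norm argument is a correct and cleaner extra, but the main proposal matches the paper's approach.
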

\begin{proof}
It can be proved by the similar approach in Lemma \ref{le6} and using the fact that $\mathop {\lim }\limits_{k \to \infty } \sum\limits_{l = 0}^k {\rho ^{k - l} (A)\left\| {u(l)} \right\|}  = 0$ for $0<\rho(A)<1$, which has been proved in \cite{S.S. Ram 07}.
\end{proof}
The properties of graph's Laplacian matrix lead to the following lemmas directly \cite{C. Godsil 06}.
\begin{lemma}\label{le10}
For an undirected graph $\mathcal{G}(\mathcal{A})$, if $\mathcal{G}(\mathcal{A})$ is connected and $0<h<\frac{2}{\lambda_n}$, then it holds $\mathop {\max }\limits_{2 \le i \le n} \left| {1 - h\lambda _i (L)} \right|<1$.
\end{lemma}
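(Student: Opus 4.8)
The plan is to exploit the spectral structure of the Laplacian of a connected undirected graph. First I would recall that, for an undirected graph, $L$ is symmetric and positive semidefinite, so all its eigenvalues are real and nonnegative; writing them in increasing order $0=\lambda_1(L)\le\lambda_2(L)\le\cdots\le\lambda_n(L)$, connectivity (via Lemma~\ref{le5}, or equivalently the standard fact that the multiplicity of the zero eigenvalue equals the number of connected components) guarantees that the zero eigenvalue is simple, hence $\lambda_i(L)>0$ for every $i$ with $2\le i\le n$, and $\lambda_n(L)$ is the largest eigenvalue.

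Next I would reduce the claimed bound to an elementary inequality. For each fixed $i$ with $2\le i\le n$, the estimate $\left|1-h\lambda_i(L)\right|<1$ is equivalent to $-1<1-h\lambda_i(L)<1$, i.e. to $0<h\lambda_i(L)<2$. The left inequality is immediate since $h>0$ and $\lambda_i(L)>0$. For the right inequality I would use the ordering together with the hypothesis $0<h<\tfrac{2}{\lambda_n(L)}$: indeed $h\lambda_i(L)\le h\lambda_n(L)<2$. Since there are only finitely many indices $i\in\{2,\dots,n\}$, the maximum $\max_{2\le i\le n}\left|1-h\lambda_i(L)\right|$ is attained and is strictly less than $1$, which is the desired conclusion.

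There is essentially no hard step here; the argument is a routine consequence of symmetry, positive semidefiniteness, and the ordering of eigenvalues. The only point that genuinely requires the stated hypotheses is the strictness: connectivity is needed so that $\lambda_2(L)>0$ (otherwise a component with a zero eigenvalue would give $\left|1-h\cdot 0\right|=1$), and the upper bound $h<\tfrac{2}{\lambda_n(L)}$ is needed to keep $1-h\lambda_n(L)$ strictly above $-1$. I would simply remark that both endpoints of the admissible interval are excluded precisely for these reasons, and that the same computation shows the bound fails if either hypothesis is dropped.
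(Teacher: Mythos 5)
Your proof is correct and is exactly the standard spectral argument the paper has in mind: the paper gives no proof of this lemma at all, merely asserting that it follows ``directly'' from the properties of the graph Laplacian (citing Godsil and Royle), and your argument --- symmetry and positive semidefiniteness of $L$, simplicity of the zero eigenvalue under connectivity, and the equivalence $\left|1-h\lambda_i(L)\right|<1 \Leftrightarrow 0<h\lambda_i(L)<2$ combined with $h\lambda_i(L)\le h\lambda_n(L)<2$ --- is precisely the omitted routine verification. Nothing further is needed.
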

\begin{lemma}\label{le11}
For a directed graph $\mathcal{G}(\mathcal{A})$, if $\mathcal{G}(\mathcal{A})$ has a spanning tree and $0<h<\mathop {\min }\limits_{2 \le i \le n} \frac{{2Re(\lambda _i (L))}}{{\left| {\lambda _i (L)} \right|^2 }}$, then it holds $\mathop {\max }\limits_{2 \le i \le n} \left| {1 - h\lambda _i (L)} \right|<1$.
\end{lemma}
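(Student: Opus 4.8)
The plan is to reduce the claim to an elementary estimate for a single complex eigenvalue, applied uniformly over $i=2,\dots,n$. First I would invoke Lemma~\ref{le5}: since $\mathcal{G}(\mathcal{A})$ has a spanning tree, the Laplacian $L$ has a simple zero eigenvalue $\lambda_1(L)$ and all remaining eigenvalues satisfy $Re(\lambda_i(L))>0$; in particular $\lambda_i(L)\neq 0$ for $i\geq 2$, so each ratio $\frac{2Re(\lambda_i(L))}{|\lambda_i(L)|^2}$ is a well-defined positive number, and the minimum of this finite collection is strictly positive. Hence the interval $\left(0,\ \min_{2\le i\le n}\frac{2Re(\lambda_i(L))}{|\lambda_i(L)|^2}\right)$ is non-empty, which makes the hypothesis on $h$ consistent.

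Next, I would fix an index $i$ with $2\le i\le n$ and write $\lambda_i(L)=a_i+b_i\sqrt{-1}$ with $a_i=Re(\lambda_i(L))>0$ and $b_i\in\mathbb{R}$, so that $|\lambda_i(L)|^2=a_i^2+b_i^2$. A direct expansion gives
\begin{equation*}
|1-h\lambda_i(L)|^2=(1-ha_i)^2+(hb_i)^2=1-2ha_i+h^2(a_i^2+b_i^2)=1-h\big(2Re(\lambda_i(L))-h|\lambda_i(L)|^2\big).
\end{equation*}
Because $h>0$, the right-hand side is strictly less than $1$ exactly when $2Re(\lambda_i(L))-h|\lambda_i(L)|^2>0$, i.e. when $h<\frac{2Re(\lambda_i(L))}{|\lambda_i(L)|^2}$.

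Finally, the assumed bound $h<\min_{2\le i\le n}\frac{2Re(\lambda_i(L))}{|\lambda_i(L)|^2}$ forces $h<\frac{2Re(\lambda_i(L))}{|\lambda_i(L)|^2}$ for every $i\geq 2$ simultaneously, so $|1-h\lambda_i(L)|^2<1$ for each such $i$; taking the maximum over the finitely many indices $i=2,\dots,n$ yields $\max_{2\le i\le n}|1-h\lambda_i(L)|<1$. There is no genuine obstacle here—the argument is a routine computation—so the only point requiring care is that the eigenvalues indexed $i\geq 2$ are nonzero (so the threshold is finite and positive), which is precisely what Lemma~\ref{le5} supplies; note that Lemma~\ref{le10} is the undirected specialization $b_i=0$, where the threshold becomes $2/\lambda_i(L)$ and one bounds it from below by $2/\lambda_n(L)$.
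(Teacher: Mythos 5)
Your proof is correct. The paper itself gives no argument for this lemma (it simply asserts that Lemmas~\ref{le10} and \ref{le11} follow "directly" from properties of the Laplacian, citing an algebraic graph theory text), and your computation --- writing $\lambda_i(L)=a_i+b_i\sqrt{-1}$, expanding $|1-h\lambda_i(L)|^2=1-h\bigl(2Re(\lambda_i(L))-h|\lambda_i(L)|^2\bigr)$, and invoking Lemma~\ref{le5} to guarantee $Re(\lambda_i(L))>0$ for $i\ge 2$ so the threshold is positive and finite --- is exactly the standard justification the paper leaves implicit, including the correct observation that Lemma~\ref{le10} is the special case $b_i=0$.
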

\begin{proposition}
Suppose $\lim\limits_{t\rightarrow\infty}\phi _i(t)=0$ in $(\ref{eq23})$, $i\in\mathcal{V}$, if the undirected graph $\mathcal{G}(\mathcal{A})$ is connected and $0<h<\frac{2}{\lambda_n}$, then MAS (\ref{eq18}) with (\ref{eq23}) reaches consensus asymptotically.
\end{proposition}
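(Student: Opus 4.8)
The plan is to imitate the proof of Proposition 1, with the continuous‑time stability argument (Lemma \ref{le6}) replaced by its discrete‑time analogue (Lemma \ref{le12}). First I would write MAS (\ref{eq18}) with (\ref{eq23}) in stacked form: setting $x(t)=[x_1^T(t),\ldots,x_n^T(t)]^T$ and $\phi(t)=[\phi_1^T(t),\ldots,\phi_n^T(t)]^T$, and noting that the first, third and fifth lines of (\ref{eq23}) give $u_i(t)=h\sum_{j\in N_i}a_{ij}(x_j(t)-x_i(t))+\phi_i(t)$, the closed loop becomes
\[
x(t+1)=\bigl((I_n-hL)\otimes I_m\bigr)x(t)+\phi(t).
\]
Because $\mathcal{G}(\mathcal{A})$ is undirected and connected, $L$ is symmetric positive semidefinite with a simple zero eigenvalue whose eigenvector is $\textbf{1}_n$, so both $L\textbf{1}_n=0$ and $\textbf{1}_n^TL=0$ hold; write $0=\lambda_1(L)<\lambda_2(L)\le\cdots\le\lambda_n(L)$ for its eigenvalues.

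Next I would pass to the disagreement dynamics. Introduce the average $\bar{x}(t)=\frac{1}{n}(\textbf{1}_n^T\otimes I_m)x(t)$ and $e(t)=x(t)-(\textbf{1}_n\otimes I_m)\bar{x}(t)=\bigl((I_n-\frac{1}{n}\textbf{1}_n\textbf{1}_n^T)\otimes I_m\bigr)x(t)$; consensus is equivalent to $\lim_{t\to\infty}e(t)=0$. Since $\textbf{1}_n^TL=0$ and $I_n-hL$ commutes with $I_n-\frac{1}{n}\textbf{1}_n\textbf{1}_n^T$, one obtains
\[
e(t+1)=\bigl((I_n-hL)\otimes I_m\bigr)e(t)+\Bigl(\bigl(I_n-\tfrac{1}{n}\textbf{1}_n\textbf{1}_n^T\bigr)\otimes I_m\Bigr)\phi(t).
\]
Now, exactly as in the proof of Proposition 1, pick an orthonormal basis $P=\bigl(\tfrac{1}{\sqrt n}\textbf{1}_n,p_2,\ldots,p_n\bigr)$ diagonalizing the symmetric matrix $L$, so that $P^T(I_n-hL)P=\diag\bigl(1,\,1-h\lambda_2(L),\ldots,1-h\lambda_n(L)\bigr)$. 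Setting $\tilde e(t)=(P^T\otimes I_m)e(t)=[\tilde e_1^T(t),\tilde e_{(2)}^T(t)]^T$ with $\tilde e_1(t)\in\mathbb{R}^m$, one checks $\tilde e_1(t)=\tfrac{1}{\sqrt n}(\textbf{1}_n^T\otimes I_m)e(t)\equiv0$, while
\[
\tilde e_{(2)}(t+1)=\bigl(\diag(1-h\lambda_2(L),\ldots,1-h\lambda_n(L))\otimes I_m\bigr)\tilde e_{(2)}(t)+r(t),
\]
where $r(t)$ is a fixed linear image of $\phi(t)$ (the rows $p_i^T$, $i\ge2$, annihilate $\textbf{1}_n$), hence $\lim_{t\to\infty}r(t)=0$ by hypothesis. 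By Lemma \ref{le10}, the bound $0<h<\frac{2}{\lambda_n}$ guarantees $\max_{2\le i\le n}|1-h\lambda_i(L)|<1$, so the state matrix above is Schur stable. Lemma \ref{le12} then yields $\tilde e_{(2)}(t)\to0$, hence $\tilde e(t)\to0$ and $e(t)=(P\otimes I_m)\tilde e(t)\to0$, i.e.\ MAS (\ref{eq18}) with (\ref{eq23}) reaches consensus asymptotically.

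\textbf{Main obstacle.} There is no deep difficulty, but the one point needing care is that $I_n-hL$ is \emph{not} itself Schur stable — it carries the eigenvalue $1$ along $\textbf{1}_n$ — so Lemma \ref{le12} cannot be applied to the raw closed loop. One must first project onto the disagreement subspace (equivalently, change coordinates by $P$), thereby isolating the $(n-1)m$‑dimensional block on which Lemma \ref{le10} delivers Schur stability, and only then invoke Lemma \ref{le12}. Checking that the hypothesis $\phi_i(t)\to0$ survives the projection and the coordinate change is routine since the transformation is constant.
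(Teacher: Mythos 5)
Your proposal is correct and follows essentially the same route as the paper's proof of this proposition: the same stacked error dynamics about the average $\bar{x}(t)$, the same orthonormal diagonalization of the symmetric Laplacian via $P=\bigl(\tfrac{1}{\sqrt n}\textbf{1}_n,p_2,\ldots,p_n\bigr)$ isolating the $(n-1)m$-dimensional block, and the same invocation of Lemma \ref{le10} for Schur stability followed by Lemma \ref{le12} to conclude $\tilde e_2(t)\to0$ and hence consensus. No gaps; your closing remark about needing to project off the eigenvalue-one direction before applying Lemma \ref{le12} is exactly what the paper's proof does.
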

\begin{proof}
Let $x(t)=\big[ x^T_1(t), \cdots, x^T_n(t) \big]^T$ and $\phi(t)=\big[ \phi^T_1(t), \cdots, \phi^T_n(t) \big]^T$, MAS (\ref{eq18}) with (\ref{eq23}) can be rewritten as
\begin{equation}\label{eq24}
{x}(t+1)=\left((I-hL )\otimes I_m\right) x(t)+\phi(t).
\end{equation}
Denote variable $\bar{x}(t) = \frac{1}{n}\sum\limits_{i = 1}^n x_i(t)=\frac{1}{n}\left( \textbf{1}_n^T \otimes I_m  \right)x(t)$.
Based on (\ref {eq24}), we have $\bar{x}(t+1)=\bar{x}(t)+u(t)$. Denote $e_i(t)=x_i(t)-\bar {x}(t)$ and  $e(t)=\big[ e^T_1(t), \cdots, e^T_n(t) \big]^T$. Note that if $e(t)\rightarrow0$ as $t\rightarrow\infty$, then MAS (\ref{eq18}) with (\ref{eq23}) reaches consensus asymptotically. From (\ref{eq24}), we have
\begin{equation}\label{eq25}\begin{split}
 e(t+1) =((I-hL) \otimes I_m )e(t) + \left(\left( I_n  - \frac{1}{n}\textbf{1}_n\textbf{1}_n^T \right) \otimes I_m  \right)\phi(t).
\end{split}
\end{equation}
Since $L$ is symmetric for $\mathcal{G}$ being undirected. We select $p_i\in\mathbb{R}^{n} $ such that $p_i^TL=\lambda_i(L)p_i^T$ and form an unitary
matrix $P=\left[\frac{\textbf{1}_n}{\sqrt{n}}, p_2, \cdots, p_n\right]$ to transform $I-hL$ into a diagonal form $diag(1, (1-h)\lambda_2(L), \cdots, (1-h)\lambda_n(L))=P^T(I-hL)P$. Denote $\tilde{e}(t)=P^Te(t)$ and partition $\tilde{e}(t)$ into two parts , i.e., $\tilde{e}(t)=[\tilde{e}_1^T(t), \tilde{e}_2^T(t)]^T$. Then, from (\ref{eq25}), we have
\[
\tilde{e}_1 (t + 1) = \left( {\left( {\frac{1}{{\sqrt n }}\textbf{1}_n^T \left( {I_n  - \frac{1}{n}\textbf{1}_n \textbf{1}_n^T } \right)} \right) \otimes I_m } \right)\phi(t).
\]
Note that ${\left( {\frac{1}{{\sqrt n }}\textbf{1}_n^T \left( {I_n  - \frac{1}{n}\textbf{1}_n \textbf{1}_n^T } \right)} \right) \otimes I_m } =0$ and $\tilde{e}_1 (t)=\frac{1}{{\sqrt n }}\left( {\textbf{1}_n^T  \otimes I_m } \right)e(t)=\frac{1}{{\sqrt n }}\sum\limits_{i = 1}^n {e_i (t)}=0$. Thus, it holds $\tilde e_1(t)=0$. Moreover, we have
\[
\tilde{e}_2 (t + 1) = \Lambda \tilde{e}_2 (t) + \left[ {\begin{array}{*{20}c}
   {\left( {p_2^T  - \frac{1}{n}p_2^T \textbf{1}_n \textbf{1}_n^T } \right) \otimes I_m }  \\
    \vdots   \\
   {\left( {p_n^T  - \frac{1}{n}p_n^T \textbf{1}_n \textbf{1}_n^T } \right) \otimes I_m }  \\
\end{array}} \right]\phi(t)
\]
where $\Lambda=diag((1-h\lambda_2(L))I_m,\cdots, (1-h\lambda_n(L))I_m )$. By Lemma \ref{le10}, we know if $0<h<\frac{2}{\lambda_n}$, $\Lambda$ is Schur stable. Recalling  Lemma \ref{le12} yields $\lim\limits_{t\rightarrow\infty}\tilde e_2(t)=0$. This and the fact that $\lim\limits_{t\rightarrow\infty}\tilde e_1(t)=0$ imply $\lim\limits_{t\rightarrow\infty} e(t)=0$, which leads to the validity of this result.
\end{proof}
\begin{proposition}
Suppose $\lim\limits_{t\rightarrow\infty}\phi _i(t)=0$ in $(\ref{eq23})$, $i\in\mathcal{V}$, if the directed graph $\mathcal{G}(\mathcal{A})$ has a spanning tree and $0<h<\mathop {\min }\limits_{2 \le i \le n} \frac{{2Re(\lambda _i (L))}}{{\left| {\lambda _i (L)} \right|^2 }}$, then MAS (\ref{eq18}) with (\ref{eq23}) reaches consensus asymptotically.
\end{proposition}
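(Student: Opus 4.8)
The plan is to transcribe the argument of Proposition~1 into the discrete-time setting, replacing Hurwitz stability and Lemma~\ref{le6} by Schur stability and Lemma~\ref{le12}. First I would stack the closed loop: since the coupling term in (\ref{eq23}) is $h\sum_{j\in N_i}a_{ij}(x_j(t)-x_i(t))$, MAS (\ref{eq18}) with (\ref{eq23}) is exactly (\ref{eq24}), namely $x(t+1)=((I-hL)\otimes I_m)x(t)+\phi(t)$ with $\phi(t)=[\phi_1^T(t),\dots,\phi_n^T(t)]^T$. Because $\mathcal{G}(\mathcal{A})$ has a spanning tree, Lemma~\ref{le5} gives a simple zero eigenvalue of $L$ while $\lambda_2(L),\dots,\lambda_n(L)$ lie in the open right half-plane. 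Let $w$ be a real left eigenvector of $L$ for the zero eigenvalue; since that eigenvalue is simple one has $\mathbf{1}_n^Tw\neq0$, so I may normalize and set $\hat{x}(t)=(\frac{w^T}{\mathbf{1}_n^Tw}\otimes I_m)x(t)$. From $w^T(I-hL)=w^T$ it follows that $\hat{x}(t+1)=\hat{x}(t)+(\frac{w^T}{\mathbf{1}_n^Tw}\otimes I_m)\phi(t)$, so it suffices to prove that the deviation $e(t)=x(t)-(\mathbf{1}_n\otimes I_m)\hat{x}(t)=((I_n-\frac{\mathbf{1}_nw^T}{\mathbf{1}_n^Tw})\otimes I_m)x(t)$ converges to zero.

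Using $L\mathbf{1}_n=0$ I would derive, exactly as for (\ref{eq25}), the error recursion $e(t+1)=((I-hL)\otimes I_m)e(t)+((I_n-\frac{\mathbf{1}_nw^T}{\mathbf{1}_n^Tw})\otimes I_m)\phi(t)$. The difference with the undirected Proposition~3 is that $L$ is not symmetric and need not be diagonalizable, so a unitary diagonalization is unavailable; instead, following the proof of Lemma~\ref{le6}, I build a real orthonormal basis whose first vector is $w/\|w\|$ (a left null vector, $w^TL=0$) and collect it in an orthogonal matrix $P$, so that $P^TLP$ is block lower triangular with zero $(1,1)$ block and lower-right block $L_1$ whose eigenvalues are $\lambda_2(L),\dots,\lambda_n(L)$; hence $P^T(I-hL)P$ is block lower triangular with $(1,1)$ entry $1$ and lower-right block $I-hL_1$. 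Writing $\tilde{e}(t)=(P^T\otimes I_m)e(t)$ and partitioning $\tilde{e}=[\tilde{e}_1^T,\tilde{e}_2^T]^T$ with $\tilde{e}_1\in\mathbb{R}^m$, I note that $\tilde{e}_1(t)=\frac{1}{\|w\|}(w^T\otimes I_m)e(t)=0$ for every $t$, because $w^T(I_n-\frac{\mathbf{1}_nw^T}{\mathbf{1}_n^Tw})=0$; this also kills the forcing term in the $\tilde{e}_1$ update, consistently.

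Because $\tilde{e}_1\equiv0$, the block-triangular structure reduces the rest of the recursion to $\tilde{e}_2(t+1)=((I-hL_1)\otimes I_m)\tilde{e}_2(t)+M\phi(t)$ for a fixed matrix $M$. By Lemma~\ref{le11}, the stepsize bound $0<h<\min_{2\le i\le n}\frac{2\mathrm{Re}(\lambda_i(L))}{|\lambda_i(L)|^2}$ yields $\max_{2\le i\le n}|1-h\lambda_i(L)|<1$, so $I-hL_1$, and therefore $(I-hL_1)\otimes I_m$, is Schur stable; and $\lim_{t\to\infty}\phi_i(t)=0$ for all $i$ gives $M\phi(t)\to0$. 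Lemma~\ref{le12} then gives $\tilde{e}_2(t)\to0$, and together with $\tilde{e}_1\equiv0$ this gives $e(t)=(P\otimes I_m)\tilde{e}(t)\to0$, i.e. $x_i(t)-\hat{x}(t)\to0$ for each $i$, which is the claimed asymptotic consensus. The one genuinely delicate step, beyond bookkeeping, is handling the possible non-diagonalizability of $L$ by Schur triangularization, which causes no trouble since Lemma~\ref{le12} is stated for an arbitrary Schur matrix; a minor preliminary check is that the simple zero eigenvalue of $L$ forces $\mathbf{1}_n^Tw\neq0$, so that $\hat{x}(t)$ is well defined.
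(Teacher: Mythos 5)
Your proof is correct and follows essentially the route the paper intends: it replaces the simple average of Proposition 3 by the weighted average $\hat{x}(t)$ built from the left null vector $w$ (as in Proposition 1), triangularizes $L$ with an orthonormal basis starting at $w/\|w\|$ so that $\tilde{e}_1\equiv 0$, and then applies Lemma \ref{le11} for Schur stability of $I-hL_1$ together with Lemma \ref{le12} for the vanishing perturbation $\phi(t)$. You merely spell out the details (including the check $\mathbf{1}_n^T w\neq 0$) that the paper's one-line proof leaves implicit, so no substantive difference or gap.
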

\begin{proof}
It can be proved by replacing the variable $\bar{x}(t)$ in the proof of Proposition 3 with $\hat {x}(t)$ defined in the proof of Proposition 1, and using the fact that $\mathop {\max }\limits_{2 \le i \le n} \left| {1 - h\lambda _i (L)} \right|<1$ if $\mathcal{G}$ has a spanning tree and $0<h<\mathop {\min }\limits_{2 \le i \le n} \frac{{2Re(\lambda _i (L))}}{{\left| {\lambda _i (L)} \right|^2 }}$, which is stated in Lemma \ref{le11}.
\end{proof}

 Now we give the convergence condition for (2) with (23) and its proof in detail when the graph is directed.
\begin{theorem}
Under Assumptions \ref{A1} and \ref{A7}, suppose $\{\alpha (t)\}$, $\{\beta (t)\}$ are two sequences such that

(a)  $\alpha(t) \in [0, 1]$, $\sum\limits_{t = 0}^\infty {\alpha (t)}  \to \infty$ and $\sum\limits_{t = 0}^\infty {\alpha^2 (t)}  < \infty$;

(b)  $0\leq\beta(t)\leq\infty$, $\sum\limits_{t = 0}^\infty {\beta(t)} \to \infty$ and $\sum\limits_{t = 0}^\infty {\beta^2 (t)} < \infty$.\\
If the directed graph $\mathcal{G}(\mathcal{A})$ is strongly connected and $0<h<\varrho$, where $\varrho=\min \Big[\frac{1}{\max\limits_{1 \le i \le n} \left(\sum\limits_{j = 1}^n a_{ij}\right)},$ $ \min\limits_{1 \le i \le n}\frac{2Re(\lambda _i (L))}{\left| {\lambda _i (L)} \right|^2 }\Big]$. Then, MAS (\ref{eq18}) with (\ref{eq23}) reaches consensus asymptotically, and the consensus state is in set $\textbf{X}^*$.
\end{theorem}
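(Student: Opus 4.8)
The plan is to fuse the discrete-time Fej\'er-type argument behind Theorem 2 with the weighted-consensus argument behind Theorem 3, and to close the loop by invoking the discrete-time consensus result Proposition 4. Since $\mathcal{G}(\mathcal{A})$ is strongly connected, Lemma \ref{le9} supplies a vector $w=[w_1,\cdots,w_n]^T>0$ with $w^TL=0$; I would fix any $x_0\in\textbf{X}^*$ (nonempty by Assumption \ref{A1}) and use the weighted Lyapunov function $V(t)=\sum_{i=1}^n w_i\|x_i(t)-x_0\|^2$. First I would rewrite (\ref{eq23}) compactly as $x_i(t+1)=(1-\alpha(t))\xi_i(t)+\alpha(t)P_{X_i}(\xi_i(t))$ with $\xi_i(t)=y_i(t)-\nabla_i(t)$ and $y_i(t):=x_i(t)+h\sum_{j\in N_i}a_{ij}(x_j(t)-x_i(t))$, i.e. $y(t)=((I-hL)\otimes I_m)x(t)$, so that $\nabla_i(t)$ is a $\beta(t)$-scaled subgradient of $g_i^+$ at $y_i(t)$.

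The key structural observation is that, because $h<\varrho\le 1/\max_i(\sum_j a_{ij})$, the matrix $I-hL$ is nonnegative and row-stochastic, so each $y_i(t)$ is a convex combination of the $x_j(t)$; combined with $w^T(I-hL)=w^T$ (from $w^TL=0$) this gives $\sum_i w_i\|y_i(t)-x_0\|^2\le\sum_i w_i\|x_i(t)-x_0\|^2=V(t)$. Expanding $\|x_i(t+1)-x_0\|^2$ about $y_i(t)$ exactly as in (\ref{eq22}), and using the subgradient inequality $\langle\nabla_i(t),y_i(t)-x_0\rangle\ge\beta(t)g_i^+(y_i(t))$ (valid since $g_i^+(x_0)=0$), Lemma \ref{le1} in the form $\langle\varphi_i(t),\xi_i(t)-x_0\rangle\ge\alpha(t)\|\xi_i(t)\|_{X_i}^2$ (valid since $x_0\in X_i$), the identity $\|\varphi_i(t)\|^2=\alpha^2(t)\|\xi_i(t)\|_{X_i}^2$ with $\alpha(t)\in[0,1]$, and $\|\nabla_i(t)\|^2\le K^2\beta^2(t)$ from Assumption \ref{A7}, I expect to reach
\[
V(t+1)\le V(t)-\sum_{i=1}^n w_i\Big(2\beta(t)g_i^+(y_i(t))+\alpha(t)\|\xi_i(t)\|_{X_i}^2\Big)+K^2\beta^2(t)\sum_{i=1}^n w_i .
\]
Discarding the nonpositive middle sum shows $V(t)$ is bounded, hence every $x_i(t),y_i(t),\xi_i(t)$ is bounded; then Lemma \ref{le8} applied to the inequality above (with $a(t)\equiv0$ and $c(t)=K^2\beta^2(t)\sum_i w_i$, summable) yields that $V(t)$ converges and $\sum_t\beta(t)g_i^+(y_i(t))<\infty$, $\sum_t\alpha(t)\|\xi_i(t)\|_{X_i}^2<\infty$ for each $i$ (using $w_i>0$). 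Since $\sum_t\beta(t)=\sum_t\alpha(t)=\infty$, this forces $\liminf_t g_i^+(y_i(t))=\liminf_t\|\xi_i(t)\|_{X_i}=0$.

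To upgrade this to convergence, I would note $\|\nabla_i(t)\|\le K\beta(t)\to0$ and $\|\varphi_i(t)\|\le\alpha(t)\cdot\mathrm{const}\to0$, so $\phi_i(t)=-\nabla_i(t)-\varphi_i(t)\to0$; since the strongly connected $\mathcal{G}(\mathcal{A})$ has a spanning tree and $h<\varrho\le\min_i 2\,\mathrm{Re}(\lambda_i(L))/|\lambda_i(L)|^2$, Proposition 4 (via Lemma \ref{le11}) gives $x_i(t)-x_j(t)\to0$ for all $i,j$. Combining the disagreement decay with convergence of $V(t)$ then shows $\|x_i(t)-x_0\|$ converges to a common (over $i$) limit, for every $x_0\in\textbf{X}^*$. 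Finally, as in the proof of Theorem 2, I would extract a subsequence $\{t_k\}$ along which $g_i^+(y_i(t_k))\to0$ and $\|\xi_i(t_k)\|_{X_i}\to0$ for all $i$ and $x_i(t_k)\to x^*$ (boundedness plus consensus forcing a common limit $x^*$); then $y_i(t_k)\to x^*$ and $\xi_i(t_k)\to x^*$, so continuity of $g_i^+$ gives $g_i^+(x^*)=0$ and continuity of $P_{X_i}$ gives $x^*=P_{X_i}(x^*)$, i.e. $x^*\in X_i$, for every $i$; hence $x^*\in\textbf{X}^*$. Taking $x_0=x^*$ in the convergence of $\|x_i(t)-x^*\|$ and using $\|x_i(t_k)-x^*\|\to0$ yields $\lim_{t\to\infty}x_i(t)=x^*$ for all $i$, which is exactly the claim.

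I expect the main obstacle to be the discrete-time coupling between the consensus update and the energy analysis: unlike the continuous-time case, the consensus term is buried inside $V(t+1)$ rather than producing a clean term in $\dot V$, and controlling it requires precisely the row-stochasticity of $I-hL$ (hence the upper bound $h<1/\max_i\sum_j a_{ij}$ built into $\varrho$) together with the left-Perron weighting $w$. A secondary subtlety is that the subgradient in (\ref{eq23}) is evaluated at the post-consensus point $y_i(t)$ rather than at $x_i(t)$, so the expansion of $V(t+1)$ must be organized around $y_i(t)$, and one must separately confirm that $y_i(t_k)$ — not merely $x_i(t_k)$ — tends to the consensus value, which holds because the disagreement vanishes; the same care is needed to obtain a single subsequence along which both the plus-functions and the set-distances vanish, argued exactly as in the proof of Theorem 2.
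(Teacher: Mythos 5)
Your proposal follows essentially the same route as the paper's proof of Theorem 5: the weighted Lyapunov function $V(t)=\sum_{i=1}^n w_i\|x_i(t)-x_0\|^2$ built on the left eigenvector $w$ from Lemma \ref{le9}, the row-stochasticity of $I-hL$ under $h<1/\max_i\sum_j a_{ij}$ giving $\sum_i w_i\|y_i(t)-x_0\|^2\le V(t)$, non-expansiveness of the projection together with the subgradient inequality and Lemma \ref{le1}, Lemma \ref{le8} for convergence of $V(t)$ and summability of the residual terms, Proposition 4 for asymptotic consensus, and a subsequence-plus-continuity argument to place the consensus value in $\textbf{X}^*$. It is correct and only marginally streamlined relative to the paper (e.g., keeping the subgradient term evaluated at $y_i(t)$ throughout and closing by re-taking $x_0=x^*$), so there is no substantive difference in approach.
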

\begin{proof}
Since the graph is strongly connected, by Lemma \ref{le9}, there exists a vector $w = \left[ w_1  \cdots w_n \right]^T>0$ such that $w^TL=0$. Submitting (\ref{eq23}) to (\ref{eq18}), we have
\[
x_i(t+1)= \xi_i (t)-\varphi_i(t),~~~~i\in\mathcal{V}.
\]
Consider the positive-definite Lyapunov function candidate $V(t)=\sum\limits_{i = 1}^nw_i\|x_i(t)-x_0\|^2$, where $x_0\in\textbf{X}^*$. Taking the difference of function $V (t)$ yields
\[\begin{split}
 \Delta V(t)&=V(t+1)-V(t)\\
& =\sum\limits_{i = 1}^nw_i\|\xi_i (t) -\varphi_i (t)-x_0\|^2-\sum\limits_{i = 1}^nw_i\|x_i(t)-x_0\|^2\\
&=\sum\limits_{i = 1}^nw_i\|(1-\alpha(t))(\xi_i (t)-x_0)+\alpha(t)(P_{X_i} (\xi_i (t))-x_0)\|^2\\
&~~~-\sum\limits_{i = 1}^nw_i\|x_i(t)-x_0\|^2\\
&\leq\sum\limits_{i = 1}^nw_i\Big((1-\alpha(t))\|\xi_i (t)-x_0\|+\alpha(t)\|P_{X_i} (\xi_i (t))-x_0)\|\Big)^2\\
\end{split}
\]
\begin{equation}\label{eq26}
\begin{split}
&~~~-\sum\limits_{i = 1}^nw_i\|x_i(t)-x_0\|^2\\
&\leq\sum\limits_{i = 1}^nw_i\|\xi_i (t)-x_0\|^2-\sum\limits_{i = 1}^nw_i\|x_i(t)-x_0\|^2\\
&=\sum\limits_{i = 1}^nw_i \|y_i (t)-x_0\|^2 - \sum\limits_{i = 1}^nw_i\left\langle\nabla _i (t), y_i (t) - x_0 \right\rangle\\
&~~~+ \sum\limits_{i = 1}^nw_i\|\nabla _i (t)\|^2-\sum\limits_{i = 1}^nw_i\|x_i(t)-x_0)\|^2
\end{split}
\end{equation}
where $y_i(t)=x_i (t) + h\sum\limits_{j \in N_i } a_{ij} (x_j (t) - x_i (t))$ and the last inequality follows form using the non-expansiveness property of projection operator, i.e., $\|P_{X_i} (\xi_i (t))-x_0)\|\leq \|\xi_i (t)-x_0\|$. Since $\nabla g_i ^ +   (t)$ denotes the subgradient of function $g_i^+(y)$ at $y=y_i(t)$, we have
\begin{equation}\label{eq27}
 - \langle\nabla _i (t), y_i (t) - x_0 \rangle  \le  - \beta (t)g_i^ +  (y_i (t))\le 0.
\end{equation}
Moreover, since $0<h<\frac{1}{{\mathop {\max }\limits_{1 \le i \le n} \left( {\sum\limits_{j = 1}^n {a_{ij} } } \right)}}$, we have $0<1-h\sum\limits_{j = 1}^n {a_{ij}}<1$. By the convexity of the norm square function, it holds
\[\begin{split}
\|y_i (t)-x_0\|^2&=\|(1-h\sum\limits_{j \in N_i } l_{ij})(x_i (t)-x_0) + h\sum\limits_{j \in N_i } l_{ij} (x_j (t)-x_0 )\|^2\\
&\leq (1-h\sum\limits_{j \in N_i } l_{ij})\|x_i (t)-x_0\|^2+ h\sum\limits_{j \in N_i } l_{ij} \|x_j (t)-x_0 \|^2.
 \end{split}
\]
Thus, we have
\begin{equation}\label{eq28}
\begin{split}
\sum\limits_{i = 1}^nw_i \|y_i (t)-x_0\|^2&\leq\sum\limits_{i = 1}^nw_i  (1-h\sum\limits_{j \in N_i } l_{ij})\|x_i (t)-x_0\|^2\\
&~~~+ h\sum\limits_{i = 1}^nw_i \sum\limits_{j \in N_i } l_{ij} \|x_j (t)-x_0 \|^2\\
&=\sum\limits_{i = 1}^nw_i \|x_i (t)-x_0\|^2-h\sum\limits_{i = 1}^nw_i\left(\sum\limits_{j = 1}^nl_{ij}\right)\|x_i (t)-x_0\|^2\\
&~~~+ h\sum\limits_{j = 1}^n\left(\sum\limits_{i = 1}^nw_il_{ij}\right)\|x_j (t)-x_0\|^2\\
&=\sum\limits_{i = 1}^nw_i \|x_i (t)-x_0\|^2
 \end{split}
\end{equation}
where the last equation results from the fact that $\sum\limits_{j = 1}^nl_{ij}=0$ and $\sum\limits_{i = 1}^nw_il_{ij}=0$. Submitting (\ref{eq27}) and (\ref{eq28}) into (\ref{eq26}) yields
\begin{equation}\label{eq29}
\begin{split}
 \Delta V(t)\leq - \beta (t)\sum\limits_{i = 1}^nw_i {g_i^ +  (y_i (t)) }+ \sum\limits_{i = 1}^nw_i\|\nabla _i (t)\|^2.
\end{split}\end{equation}
From $(\ref{eq29})$, we have $V(t)\leq V(0)+\sum\limits_{t = 0}^{t-1} {\beta ^2 (t)(w^T\textbf{1}_n)K}\leq V(0)+\sum\limits_{t = 0}^\infty  {\beta ^2 (t)(w^T\textbf{1}_n)K}$ $ <\infty$. By the definition of $V(t)$, it can be concluded that $x_i(t)$ is bounded. By the fact that $\|\nabla_i(t)\|<\infty$, we know $\|\xi_i(t)\|<\infty$, this and the continuity of $P_{X_i} (\xi_i)$ imply $\|{\xi_i (t) - P_{X_i} (\xi_i (t))}\|<\infty$. Thus, $\lim\limits_{t\rightarrow\infty}\varphi_i(t)=0$. Since graph $\mathcal{G}(\mathcal{A})$ is strongly connected and $0<h<\mathop {\min }\limits_{2 \le i \le n} \frac{{2Re(\lambda _i (L))}}{{\left| {\lambda _i (L)} \right|^2 }}$, from Proposition 4, it can be concluded that MAS (\ref{eq18}) with (\ref{eq23}) reaches  consensus asymptotically, i.e., $\lim\limits_{t\rightarrow\infty}\|x_i(t)-x_j(t)\|=0$ for all $i,j \in \mathcal{V}$. Moreover, similar to (\ref{eq26}), we have
\[\begin{split}
 \Delta V(t)&=V(t+1)-V(t)\\
&=\sum\limits_{i = 1}^nw_i\|y_i (t)-x_0-\nabla _i (t)-\varphi_i (t)\|^2-\sum\limits_{i = 1}^nw_i\|x_i(t)-x_0)\|^2\\
&=\sum\limits_{i = 1}^nw_i\|y_i (t)-x_0-\nabla _i (t)-\varphi_i (t)\|^2-\sum\limits_{i = 1}^nw_i\|x_i(t)-x_0)\|^2\\
&=\sum\limits_{i = 1}^nw_i\|y_i (t)-x_0\|^2-2\sum\limits_{i = 1}^nw_i\left\langle\nabla _i (t)+\varphi_i (t), y_i (t)-x_0\right\rangle\\
&~~~+ \sum\limits_{i = 1}^nw_i\|\nabla _i (t)+\varphi_i (t)\|^2-\sum\limits_{i = 1}^nw_i\|x_i(t)-x_0)\|^2\\
&\leq-2\sum\limits_{i = 1}^nw_i\left\langle\nabla _i (t)+\varphi_i (t), y_i (t)-x_0\right\rangle+ \sum\limits_{i = 1}^nw_i\|\nabla _i (t)+\varphi_i (t)\|^2\\
&=-2\sum\limits_{i = 1}^nw_i\left\langle\nabla _i (t), y_i (t)-x_0\right\rangle-2\sum\limits_{i = 1}^nw_i\left\langle\varphi_i (t), \xi_i (t)-x_0\right\rangle \\
&~~~-2 \sum\limits_{i = 1}^nw_i\left\langle\varphi_i (t), \nabla _i (t)\right\rangle +\sum\limits_{i = 1}^nw_i\|\nabla _i (t)+\varphi_i (t)\|^2\\
&=-2\sum\limits_{i = 1}^nw_i\left\langle\nabla _i (t), y_i (t)-x_0\right\rangle-2\sum\limits_{i = 1}^nw_i\left\langle\varphi_i (t), \xi_i (t)-x_0\right\rangle\\
&~~~+\sum\limits_{i = 1}^nw_i\left(\|\nabla _i (t)\|^2+\|\varphi_i (t)\|^2\right)\\
\end{split}\]
\begin{equation}\label{eq30}
\begin{split}
&\leq - 2\beta (t)\sum\limits_{i = 1}^nw_i g^ +_i (x_i(t))-2\alpha(t)\sum\limits_{i = 1}^nw_i \|\xi_i(t)\|^2_{X_i}\\
&~~~+\sum\limits_{i = 1}^nw_i\left(\|\nabla _i (t)\|^2+\|\varphi_i (t)\|^2\right)
\end{split}
\end{equation}
where the first inequality results directly from (\ref{eq28}). Note that  $\sum\limits_{t = 0}^\infty \sum\limits_{i = 1}^nw_i\left(\|\nabla _i (t)\|^2+\|\varphi_i (t)\|^2\right)<\infty$ and $- 2\beta (t)\sum\limits_{i = 1}^nw_ig^ +_i (x_i(t))-2\alpha(t)\sum\limits_{i = 1}^nw_i \|\xi_i(t)\|^2_{X_i}<0$.
 By Lemma \ref{le8} and the fact that $\lim\limits_{t\rightarrow\infty} x_i(t)=\lim\limits_{t\rightarrow\infty} x_j(t)$, it can be concluded $V(t)$ converges and it holds $\sum\limits_{t = 0}^\infty (\beta (t)\sum\limits_{i = 1}^nw_i g^ +_i (x_i(t)) +\alpha(t)\sum\limits_{i = 1}^nw_i \|\xi_i(t)\|^2_{X_i})<\infty$. Since $ \beta (t) g_i^ +  (x_i(t))>0$ and $\alpha(t)\|\xi_i(t)\|^2_{X_i}>0$ for all $t>0$ and $i=1, \cdots, n$, we have $\sum\limits_{t = 0}^\infty \beta (t) g^ +_i  (x_i(t)) <\infty$ and  $\sum\limits_{t = 0}^\infty\alpha(t)\|\xi_i(t)\|^2_{X_i}<\infty$. By the facts $\sum\limits_{t = 0}^\infty {\alpha (t)}\rightarrow\infty$ and $\sum\limits_{t = 0}^\infty {\beta (t)}\rightarrow\infty$, we have $\lim\limits_{t\rightarrow\infty}\inf \left\| {\xi_i(t)}-P_{X_i}(\xi_i(t)) \right\|= \lim\limits_{t\rightarrow\infty}\inf g^ +_i(x_i(t))=0$. Thus, there exists a subsequence $\{x_i(t_k)\}$ of ${x_i(t)}$ such that $\lim\limits_{k\rightarrow\infty}x_i(t_{k})=x^*_i$, where $x^*_i$ is a vector such that $g^+_i(x^*_i)=h_i(x^*_i)=0$ for each $i=1, \cdots, n$. Recall the fact that $\lim\limits_{t\rightarrow\infty} x_i(t)=\lim\limits_{t\rightarrow\infty} x_j(t)$, we have $x^*_i=x^*_j$  for any $i, j\in \mathcal{V}$. Let $x^*=x^*_i$, we have $\lim\limits_{k\rightarrow\infty}x_i(t_{k})=x^*$. By the fact $\sum\limits_{i = 1}^nw_i\|x_i(t)-x_0\|^2$ converges and $\lim\limits_{t\rightarrow\infty} \dot{x}_i(t)=0$, we can conclude $\lim\limits_{t\rightarrow\infty} x_i(t)=\lim\limits_{k\rightarrow\infty}x_i(t_{k})=x^*$. Furthermore, note that $\nabla_i (t)\rightarrow 0$ as $t\rightarrow\infty$, thus $\lim\limits_{t\rightarrow\infty}\inf \left\| {\xi_i(t)}-P_{X_i}(\xi_i(t)) \right\|=0$ and $\lim\limits_{t\rightarrow\infty} x_i(t)=x^*$ imply $\lim\limits_{t\rightarrow\infty}\left\| {x^*}-P_{X_i}(x^*) \right\|=0$ for any $i\in\mathcal{V}$. This means ${x \in X= \cap _{i = 1}^n X_i }$. Therefore, $x^*$ is a feasible solution to CFP (\ref{eq2}), i.e., $x^*\in \textbf{X}^*$.
\end{proof}
\begin{corollary}
Under Assumptions \ref{A1} and \ref{A7}, suppose $\{\alpha (t)\}$, $\{\beta (t)\}$ are two sequences such that

(a)  $\alpha(t) \in [0, 1]$, $\sum\limits_{t = 0}^\infty {\alpha (t)}  \to \infty$ and $\sum\limits_{t = 0}^\infty {\alpha^2 (t)}  < \infty$;

(b)  $0\leq\beta(t)\leq\infty$, $\sum\limits_{t = 0}^\infty {\beta(t)} \to \infty$ and $\sum\limits_{t = 0}^\infty {\beta^2 (t)} < \infty$.\\
If the graph $\mathcal{G}(\mathcal{A})$ is undirected and strongly connected, $0<h<\frac{1}{ \max \limits_{1 \le i \le n} \sum\limits_{j = 1}^na_{ij} }$. Then, MAS (\ref{eq18}) with (\ref{eq23}) reaches consensus asymptotically, and the consensus state is in set $\textbf{X}^*$.
\end{corollary}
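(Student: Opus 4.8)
The plan is to derive this corollary as the undirected specialization of Theorem 5, re-examining only the few places in that proof where the directed argument used structure that collapses when $L=L^T$. First I would note that an undirected connected graph, viewed as a digraph with $a_{ij}=a_{ji}$, is strongly connected, and that its Laplacian $L$ is symmetric with real eigenvalues $0=\lambda_1(L)<\lambda_2(L)\le\cdots\le\lambda_n(L)$. Since $L\textbf{1}_n=0$ and $L=L^T$ together give $\textbf{1}_n^T L=0$, the conclusion of Lemma \ref{le9} holds with the explicit choice $w=\textbf{1}_n>0$. Because $w$ enters the proof of Theorem 5 only through this identity, the whole Lyapunov construction carries over with $w_i\equiv 1$, i.e. with $V(t)=\sum_{i=1}^n\|x_i(t)-x_0\|^2$ for $x_0\in\textbf{X}^*$.

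Second I would check that the bound on $h$ assumed here is exactly the constant $\varrho$ of Theorem 5 in the undirected setting. Symmetry of $L$ gives $Re(\lambda_i(L))=\lambda_i(L)$ and $|\lambda_i(L)|^2=\lambda_i(L)^2$, so $\min_{2\le i\le n}\frac{2Re(\lambda_i(L))}{|\lambda_i(L)|^2}=\frac{2}{\lambda_n(L)}$; and Gershgorin's theorem applied to $L$ yields $\lambda_n(L)\le 2\max_{1\le i\le n}l_{ii}=2\max_{1\le i\le n}\sum_{j=1}^n a_{ij}$. Hence the constant in Theorem 5 reduces to
\[
\varrho=\min\left[\frac{1}{\max\limits_{1\le i\le n}\sum\limits_{j=1}^n a_{ij}},\ \frac{2}{\lambda_n(L)}\right]=\frac{1}{\max\limits_{1\le i\le n}\sum\limits_{j=1}^n a_{ij}},
\]
which is exactly the admissible range of $h$ assumed here; in particular this range also forces $0<h<\frac{2}{\lambda_n(L)}$, so Lemma \ref{le10} gives $\max_{2\le i\le n}|1-h\lambda_i(L)|<1$.

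Third, the argument then assembles exactly as in Theorem 5. Boundedness of $x_i(t)$ follows from inequality (\ref{eq29}) (via Assumption \ref{A7}) together with $\sum_t\beta^2(t)<\infty$; this makes $\xi_i(t)$ and $P_{X_i}(\xi_i(t))$ bounded, hence $\varphi_i(t)\to 0$, and Proposition 3 — used in place of Proposition 4, since the graph is now undirected and $0<h<2/\lambda_n(L)$ — gives $\lim_{t\to\infty}\|x_i(t)-x_j(t)\|=0$. For feasibility one repeats the estimates (\ref{eq26})--(\ref{eq30}) with $w_i\equiv 1$: the inequality $\sum_i\|y_i(t)-x_0\|^2\le\sum_i\|x_i(t)-x_0\|^2$ behind (\ref{eq28}) needs only $0<1-h\sum_j a_{ij}<1$ together with $\sum_j l_{ij}=\sum_i l_{ij}=0$, which holds for symmetric $L$; Lemma \ref{le8} then yields convergence of $V(t)$ and $\sum_t\big(\beta(t)\sum_i g_i^+(x_i(t))+\alpha(t)\sum_i\|\xi_i(t)\|_{X_i}^2\big)<\infty$; finally, combining $\sum_t\alpha(t)=\sum_t\beta(t)=\infty$ with the consensus limit identifies the common limit $x^*$ as one satisfying $g_i^+(x^*)=0$ and $x^*\in X_i$ for every $i$, i.e. $x^*\in\textbf{X}^*$.

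The only step that is not a straight transcription of the proof of Theorem 5 is the second one: confirming that the clean bound $\frac{1}{\max_i\sum_j a_{ij}}$ really equals $\varrho$ for an undirected graph, so that the corollary's hypothesis implies — rather than being strictly stronger than — that of Theorem 5. The Gershgorin estimate $\lambda_n(L)\le 2\max_i\sum_j a_{ij}$ is exactly what closes that gap; everything else is the proof of Theorem 5 read off with $w=\textbf{1}_n$ and Proposition 4 replaced by Proposition 3.
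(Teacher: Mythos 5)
Your proposal is correct and follows essentially the same route as the paper, which likewise reduces the corollary to Theorem 5 via the Ger\v{s}gorin disc estimate $\lambda_n(L)\le 2\max_{1\le i\le n}\sum_{j=1}^n a_{ij}$ (so that $h<\frac{1}{\max_i\sum_j a_{ij}}$ implies $h<\frac{2}{\lambda_n(L)}$) together with Lemma \ref{le10}. Your additional observations (taking $w=\textbf{1}_n$ and invoking Proposition 3 in place of Proposition 4) are exactly the specializations implicit in the paper's omitted details.
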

\begin{proof}
By Ger$\breve{\emph{\emph{s}}}$gorin Disc theorem, we can conclude $h<\frac{2}{\lambda_N}$ if $h<\frac{1}{ \max \limits_{1 \le i \le n} \sum\limits_{j = 1}^na_{ij} }$. Together with Lemma \ref{le10}, it can be proved by using the similar approach to Theorem 5 and hence the proof is omitted.
\end {proof}
\section{A special case: a distributed gradient-based algorithm for CFPs involving linear inequalities}\label{se6}

In this section, we will develop a distributed gradient-based algorithm for the CFP as follows.
\begin{equation}\label{eq31}
\left\{ {\begin{array}{*{20}c}\begin{split}
   &{A_{i}x-b_{i}\leq0}  \\
   &{x \in X:\mathop  = \limits^\Delta   \cap _{i = 1}^n X_i }  \\
\end{split}\end{array}} \right.\begin{array}{*{20}c}
   {} & {}  \\
\end{array}i = 1, \cdots ,n
\end{equation}
where $A_{i}\in\mathbb{R}^{m_{i}\times r}$ and $b\in\mathbb{R}^{m_{i}}$. It assumes CFP (\ref{eq31}) has a non-empty feasible solution set $\textbf{X}^*$.

For a vector $y=[y_1,\cdots, y_n]^T$, we define $y^+=[y_1^+,\cdots, y_n^+]^T$ and $y^-=[y_1^-,\cdots, y_n^-]^T$, where $y_i^+=\max (y_i,0)$ and $y_i^-=\min (y_i,0)$. We introduce a function $\psi(y)=\|y^+\|^2$. Note that $\psi(y)=0$ if and only if $y\leq0$. The function $\psi(y)$ is convex and differentiable. See the following lemma for detail.
\begin{lemma}\label{le13}
For any vector $y\in\mathbb{R}^{r}$, the function $\psi(y)=\|y^+\|^2$ is convex, differentiable and its gradient function at point $y$ is $\nabla_{y}\psi(y)=2y^{+}$.
\end{lemma}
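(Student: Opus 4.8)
The plan is to exploit the fact that $\psi$ separates over coordinates. Writing $\psi(y)=\sum_{i=1}^{r}\phi(y_i)$ with the scalar function $\phi(t)=\big(\max(t,0)\big)^{2}=(t^{+})^{2}$, it suffices to prove that $\phi:\mathbb{R}\to\mathbb{R}$ is convex and continuously differentiable with $\phi'(t)=2t^{+}$; convexity, differentiability, and the gradient formula for $\psi$ then follow by summation. (One may alternatively recognize $\psi(y)=\|y-P_{\Omega}(y)\|^{2}$, the squared Euclidean distance from $y$ to the negative orthant $\Omega=\{z\in\mathbb{R}^{r}:z\le 0\}$, since $y^{+}=y-y^{-}=y-P_{\Omega}(y)$, and invoke the standard smoothness of squared distance functions to convex sets; but the coordinatewise route keeps the argument self-contained.)

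First I would treat $\phi$ on the two open half-lines, where it is polynomial: $\phi(t)=0$ for $t<0$ and $\phi(t)=t^{2}$ for $t>0$, giving $\phi'(t)=0$ and $\phi'(t)=2t$ respectively, i.e. $\phi'(t)=2t^{+}$ for $t\neq 0$. The one delicate point is the behaviour at the kink $t=0$. I would compute the difference quotient $\big(\phi(h)-\phi(0)\big)/h=\phi(h)/h$, which equals $h$ when $h>0$ and $0$ when $h<0$; both one-sided limits are $0$, so $\phi$ is differentiable at $0$ with $\phi'(0)=0=2\cdot 0^{+}$. Since $t\mapsto 2t^{+}$ is continuous, $\phi\in C^{1}(\mathbb{R})$.

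For convexity of $\phi$ I would invoke the standard criterion that a differentiable real function is convex precisely when its derivative is nondecreasing; here $\phi'(t)=2t^{+}$ is manifestly nondecreasing, so $\phi$ is convex. (A direct verification of $\phi(\gamma s+(1-\gamma)t)\le\gamma\phi(s)+(1-\gamma)\phi(t)$ is also possible but longer.)

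Finally I would assemble the claims for $\psi$. For each fixed $i$, the map $y\mapsto\phi(y_i)$ is the composition of the convex function $\phi$ with the linear map $y\mapsto y_i$, hence convex; a finite sum of convex functions is convex, so $\psi$ is convex. Likewise, since only the $i$-th summand of $\psi=\sum_{i}\phi(y_i)$ depends on $y_i$, the partial derivative is $\partial\psi/\partial y_i=\phi'(y_i)=2y_i^{+}$, and these partials are continuous, so $\psi$ is (Fr\'echet) differentiable with $\nabla_{y}\psi(y)=\big(2y_1^{+},\dots,2y_r^{+}\big)^{T}=2y^{+}$. The only genuine obstacle in the whole argument is the differentiability check at $t=0$; the remaining steps are routine.
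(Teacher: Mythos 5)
Your proof is correct, but it takes a different route from the paper's. You exploit the coordinatewise separability $\psi(y)=\sum_{i=1}^{r}\phi(y_i)$ with $\phi(t)=(t^{+})^{2}$, reduce everything to a scalar calculus check (one-sided difference quotients at the kink $t=0$, monotonicity of $\phi'(t)=2t^{+}$ for convexity), and then recover differentiability of $\psi$ from continuity of the partials. The paper instead works at the vector level: using $(y+z)^{-}=\arg\min_{v\le 0}\|(y+z)-v\|$ together with $[y^{+}]^{T}y^{-}=0$ and $[y^{+}]^{T}(y+z)^{-}\le 0$, it sandwiches $\psi(y)+2[y^{+}]^{T}z\le\psi(y+z)\le\psi(y)+2[y^{+}]^{T}z+\|z\|^{2}$, reads off the gradient from the directional-derivative limit and convexity from the lower (subgradient) inequality; this is essentially your parenthetical ``squared distance to the negative orthant'' alternative made explicit. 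Your approach is more elementary and additionally yields $\psi\in C^{1}$ with almost no effort, but it hinges on the orthant's separability and would not extend to squared distances to general convex sets; the paper's argument is slightly less transparent at the kink but gives the global quadratic upper bound (a smoothness estimate useful elsewhere) and generalizes beyond the orthant. Both establish the lemma completely.
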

\begin{proof}. For any vector $z\in\mathbb{R}^{r}$, we have $\psi(y+z)=\|(y+z)^{+}\|^{2}=\|y+z-(y+z)^{-}\|^{2}\leq\|y+z-(y)^{-}\|^{2} =\|y^{+}+z\|^{2}\leq\psi(y)+2[y^{+}]^{T}z+\|z\|^{2}$, where the first inequality follows from the fact that $(y+z)^{-} =\arg\min_{v\leq0}{\|(y+z)-v\|}$. Moreover, it holds that $\psi(y+z)=\|(y+z)-(y+z)^{-}\|^{2}=\|(y^{+}+[y^{-}+z-(y+z)^{-}]\|^{2}\geq\psi(y)+2[y^{+}]^{T}z+\|y^{-}+z-(y+z)^{-}\|^{2}\geq\psi(y)+2[y^{+}]^{T}z$, where the first inequality follows from the fact that it holds that  $[y^{+}]^{T}y^{-}=0$ and $[y^{+}]^{T}(y+z)^{-}\leq0$. Therefore, it holds that $\lim\limits_{\varepsilon\rightarrow0}\frac{\psi(y+\varepsilon\Delta y)-\psi y}{\varepsilon}=2[y^{+}]^{T}\Delta y$. This means $\nabla_{y}\psi(y)=2y^{+}$. From the fact that $\psi(y+z)\geq\psi(y)+2[y^{+}]^{T}z$, we know $\psi(y)$ is convex.
\end{proof}
Now we present the following distributed gradient-based algorithm for CFP (\ref{eq31}).

\begin{equation}\label{eq32}
\dot{x}_i(t)=\sum\limits_{i \in N_{i}} {a_{ij}} (x_j (t) - x_i (t))-\tau\left( A_i^T(A_ix_i(t)-b_i)^++x_i(t)-P_{X_i}(x_i(t))\right),~~~i=1,\cdots,n
\end{equation}
 where $\tau>0 $ is a positive coefficient, $x_{i}(t)\in\mathbb{R}^{r}$ represents the estimation value of the solutions to CFP (\ref{eq31}).
\begin{theorem}
If the graph $\mathcal{G}(\mathcal{A})$ is strongly connected, then $x_{i}(t)$ in (\ref{eq32}) converges to a fixed vector $x^*$ asymptotically for $i=1,\cdots,n$ and $x^*$ is in feasible solution set $\textbf{X}^*$ of (\ref{eq31}).
\end{theorem}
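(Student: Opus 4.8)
The plan is to treat algorithm (\ref{eq32}) as a special instance of the generic distributed scheme (\ref{eq6})--(\ref{eq7}), with nonlinear term $\phi_i(t) = -\tau\bigl(A_i^T(A_ix_i(t)-b_i)^+ + x_i(t) - P_{X_i}(x_i(t))\bigr)$, and then to reuse the architecture of the proof of Theorem 3: a weighted Lyapunov function shows $\dot V\le 0$ with an integrable running cost, and Proposition 1 converts the decay of $\phi_i$ into consensus.

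First I would record the structural fact: by Lemma \ref{le13} and the chain rule, the map $f_i(x):=\psi(A_ix-b_i)=\|(A_ix-b_i)^+\|^2$ is convex and differentiable with $\nabla f_i(x)=2A_i^T(A_ix-b_i)^+$, and $f_i(x)=0$ if and only if $A_ix\le b_i$; in particular $f_i(x_0)=0$ for any $x_0\in\textbf{X}^*$. Pick $w=[w_1\cdots w_n]^T>0$ with $w^TL=0$ (Lemma \ref{le9}) and set $V(t)=\tfrac12\sum_{i=1}^n w_i\|x_i(t)-x_0\|^2$. Differentiating along (\ref{eq32}), the pure consensus part is handled verbatim as in (\ref{eq11}) via Lemma \ref{le4} and is $\le 0$; the projection part contributes $-\tau\sum_i w_i\langle x_i(t)-x_0,\, x_i(t)-P_{X_i}(x_i(t))\rangle\le -\tau\sum_i w_i\|x_i(t)\|_{X_i}^2\le 0$ by Lemma \ref{le1}; and the gradient part is controlled by convexity of $f_i$:
\begin{equation*}
\bigl\langle x_i(t)-x_0,\ A_i^T(A_ix_i(t)-b_i)^+\bigr\rangle = \tfrac12\bigl\langle\nabla f_i(x_i(t)),\ x_i(t)-x_0\bigr\rangle \ge \tfrac12\bigl(f_i(x_i(t))-f_i(x_0)\bigr) = \tfrac12 f_i(x_i(t)) \ge 0.
\end{equation*}
Combining, $\dot V(t)\le -\tau\sum_i w_i\|x_i(t)\|_{X_i}^2 - \tfrac{\tau}{2}\sum_i w_i f_i(x_i(t))\le 0$.

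From $\dot V\le 0$ and $V\ge 0$, $V$ converges, each $\|x_i(t)-x_0\|$ converges (so each $x_i(t)$ is bounded), and integrating the inequality gives $\int_0^\infty\|x_i(t)\|_{X_i}^2\,dt<\infty$ and $\int_0^\infty f_i(x_i(t))\,dt<\infty$ for every $i$. As in Theorem 3 these force $\|x_i(t)-P_{X_i}(x_i(t))\|\to 0$ and $f_i(x_i(t))\to 0$; since $f_i(x_i)=\|(A_ix_i-b_i)^+\|^2$, the latter also gives $A_i^T(A_ix_i(t)-b_i)^+\to 0$, hence $\phi_i(t)\to 0$, and $\phi_i(t)$ is bounded because $x_i(t)$ is bounded and all maps involved are continuous. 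Proposition 1 (a strongly connected graph has a spanning tree) then yields that MAS (\ref{eq1}) with (\ref{eq32}) reaches consensus, $x_i(t)\to x^*$ for a common $x^*$. Finally, continuity of $f_i$ and of $x\mapsto\|x\|_{X_i}$ together with $f_i(x_i(t))\to 0$ and $\|x_i(t)\|_{X_i}\to 0$ give $f_i(x^*)=0$, i.e. $A_ix^*\le b_i$, and $x^*\in X_i$ for all $i$, so $x^*\in\textbf{X}^*$.

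The step I expect to be the main obstacle is the convergence endgame: finiteness of $\int_0^\infty f_i(x_i(t))\,dt$ on its own only yields $\liminf_t f_i(x_i(t))=0$, so one must first extract a subsequence $x_i(t_k)\to x^*$ with $x^*$ feasible, and then use convergence of $V$ (equivalently of $\|x_i(t)-x_0\|$) together with the consensus conclusion of Proposition 1 to upgrade this to an honest limit and to identify the common limit as a point of $\textbf{X}^*$. The sign computations for the three terms of $\dot V$ --- the gradient term via Lemma \ref{le13} and convexity, the consensus term via Lemma \ref{le4}, the projection term via Lemma \ref{le1} --- are routine and parallel Theorem 3.
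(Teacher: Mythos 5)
Your proposal is correct and takes essentially the same route as the paper: the paper's own proof of this theorem merely observes, via Lemma \ref{le13}, that $A_i^T(A_ix_i-b_i)^+$ is (up to the constant factor $2$) the gradient, hence the unique subgradient, of the convex function $\|(A_ix_i-b_i)^+\|^2$, and then states that the result follows by the same method as Theorem 3. What you have written is precisely that method carried out in detail --- the weighted Lyapunov function with the left eigenvector $w$, Lemma \ref{le4} for the consensus term, Lemma \ref{le1} for the projection term, convexity for the gradient term, and Proposition 1 plus the subsequence/limit upgrade at the end --- so there is no substantive difference in approach.
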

\begin{proof}
 By Lemma \ref{le13}, it is not difficult to prove that the term $A_i^T(A_ix_i-b_i)^+$ is the gradient of function $\|(A_ix_i-b_i)^+\|^2$. It can also be viewed as the unique subgradient of $\|(A_ix_i-b_i)^+\|^2$.  Then this result can be proved by the same method as Theorem 3 and hence it is omitted.
\end{proof}
\section{simulations}\label{se7}
In this section, we give numerical examples to illustrate the obtained results. Consider a multi-agent system consisting of five agents, the goal of the agents is to cooperatively search a feasibility $z^*=[z_1^*, z_2^*]^T$ of the CPF which includes two closed convex sets $X_1=\{(z_1,z_2)|2\leq z_1\leq4, 0\leq z_2\leq2\}$ and $X_2=\{(z_1,z_2)|2.5\leq z_1\leq4.5, 1\leq z_2\leq3\}$, and three linear inequalities $c (z)= 2z_1-3z_2-2 \leq 0$, $d (z) = 2z_1+3z_2-11 \leq 0$ and $q (z) = 8z_1-3z_2-28\leq0$. In Fig.1, the yellow region represents the feasible region. Set $X_i$ is only known to agent $i$ for $i=1, 2$, and agents 3, 4 and 5 can only have access to $c(z), d(z), q(z)$, respectively. In the following, we will present simulation results in three cases: The first two cases are for continuous-time distributed algorithms under the fixed and time-varying graphs, respectively. The third case is for the discrete-time distributed algorithm under the fixed graph. For each case, the communication graph is directed.

We first show the simulation result in the first case. The communication graph is shown in Fig. \ref{fig2}, which is strongly connected. The weight of each edge connecting different agents is 1. Set coefficient $\tau=20$ and let the initial state of each agent be $x_1(0)=[0, 5]^T, x_2(0)=[3, -2]^T, x_3(0)=[2, 3]^T, x_4(0)=[5, 1]^T, x_5(0)=[2, -3]^T$. The trajectory of MAS (\ref{eq1}) with (\ref{eq6}) is shown in Fig. \ref{fig3}. All agents also reach consensus at $z^*=[2.58, 1.23]^T$ which is a solution to the CFP. This is consistent with the result established in Theorem 3.

Now, we show the simulation result in the second case, the communication topologies switch between two bidirectional subgraphs depicted in Fig. \ref{fig4} and the switching law is given by Fig. \ref{fig5}. It is obvious that the $\delta-$graph associated with the time-varying graph is strongly connected. The weight of each edge connecting different agents is also being 1. Set coefficient $\tau=35$. Under the same initial condition as the first case, the trajectory of MAS (\ref{eq1}) with (\ref{eq6}) is shown in Fig. \ref{fig6}. All agents reach consensus at $z^*=[2.61, 1.37]^T$ while
remaining in the feasible region of the CFP. This is consistent with the result established in Theorem 4.

In addition, we show the simulation result in the third case. The communication topology in the first case is used to conduct this simulation. Set $\alpha(t)=\beta(t)=\frac{1}{0.02t+1}$ and $h=0.25$. Under the same initial condition as the last two cases, the trajectory of MAS (\ref{eq18}) with (\ref{eq23}) is shown in Fig. \ref{fig7}. All agents reach consensus at $z^*=[2.57, 1.54]^T$  which is a solution to the CFP. This accords with the result established in Theorem 5.

\begin{figure}
\begin{minipage}[t]{0.5\linewidth}
\centering
\includegraphics[width=0.85\textwidth]{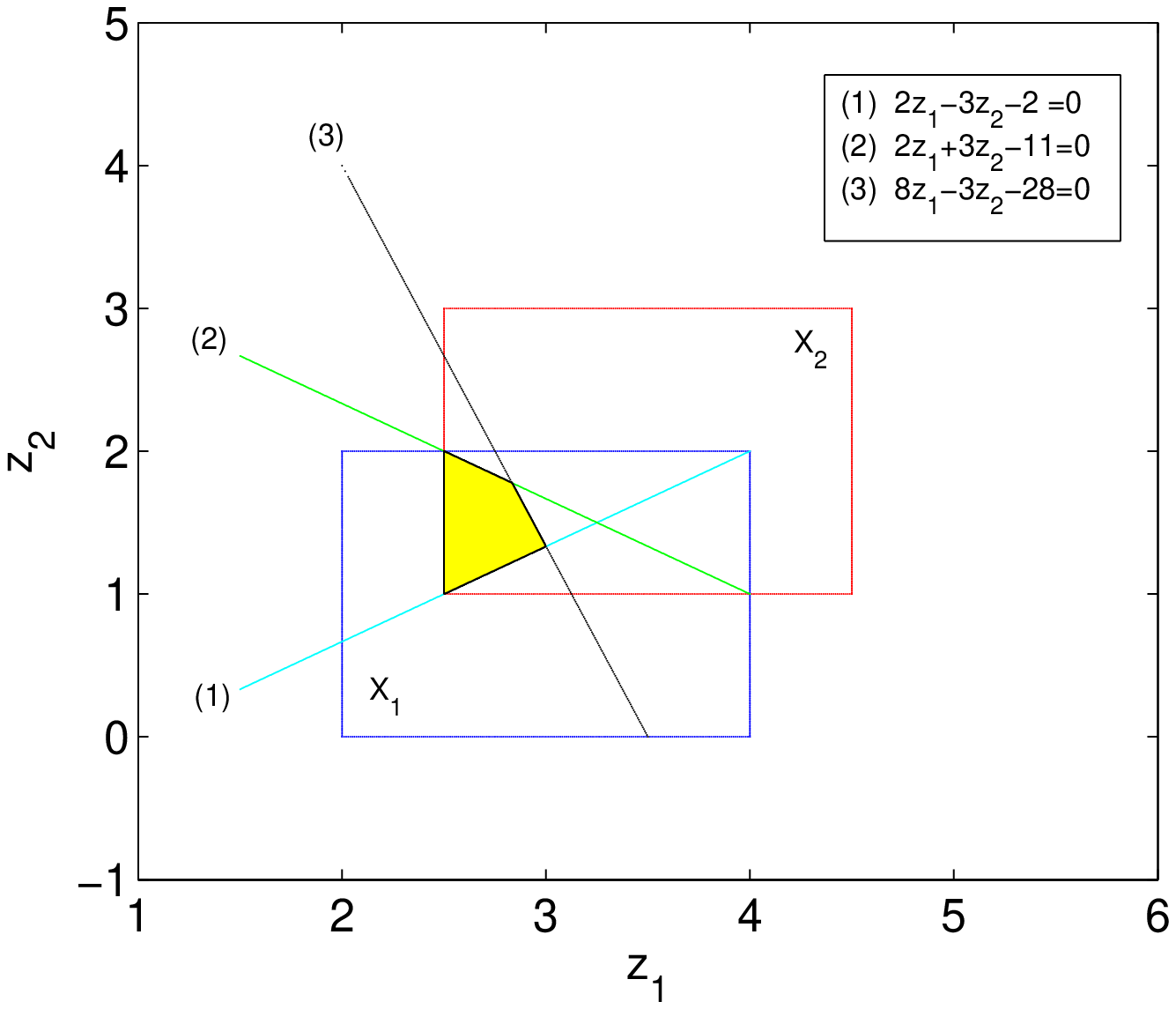}
\caption{ The feasible region of the CFP. }\label{fig1}
\end{minipage}
\begin{minipage}[t]{0.5\linewidth}
\centering
\includegraphics[width=0.4\textwidth]{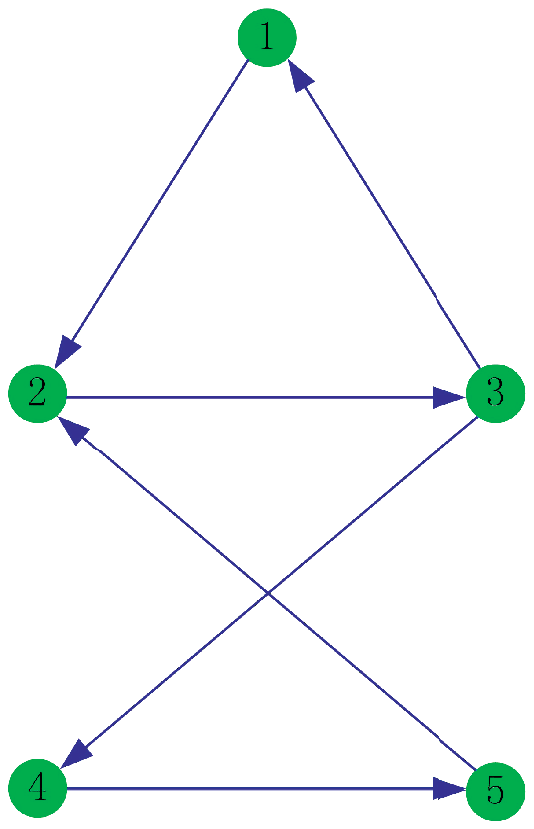}
\caption{The communication graph in the first case.}\label{fig2}
\end{minipage}
\end{figure}

\begin{figure}
\centering
\includegraphics[width=0.45\textwidth]{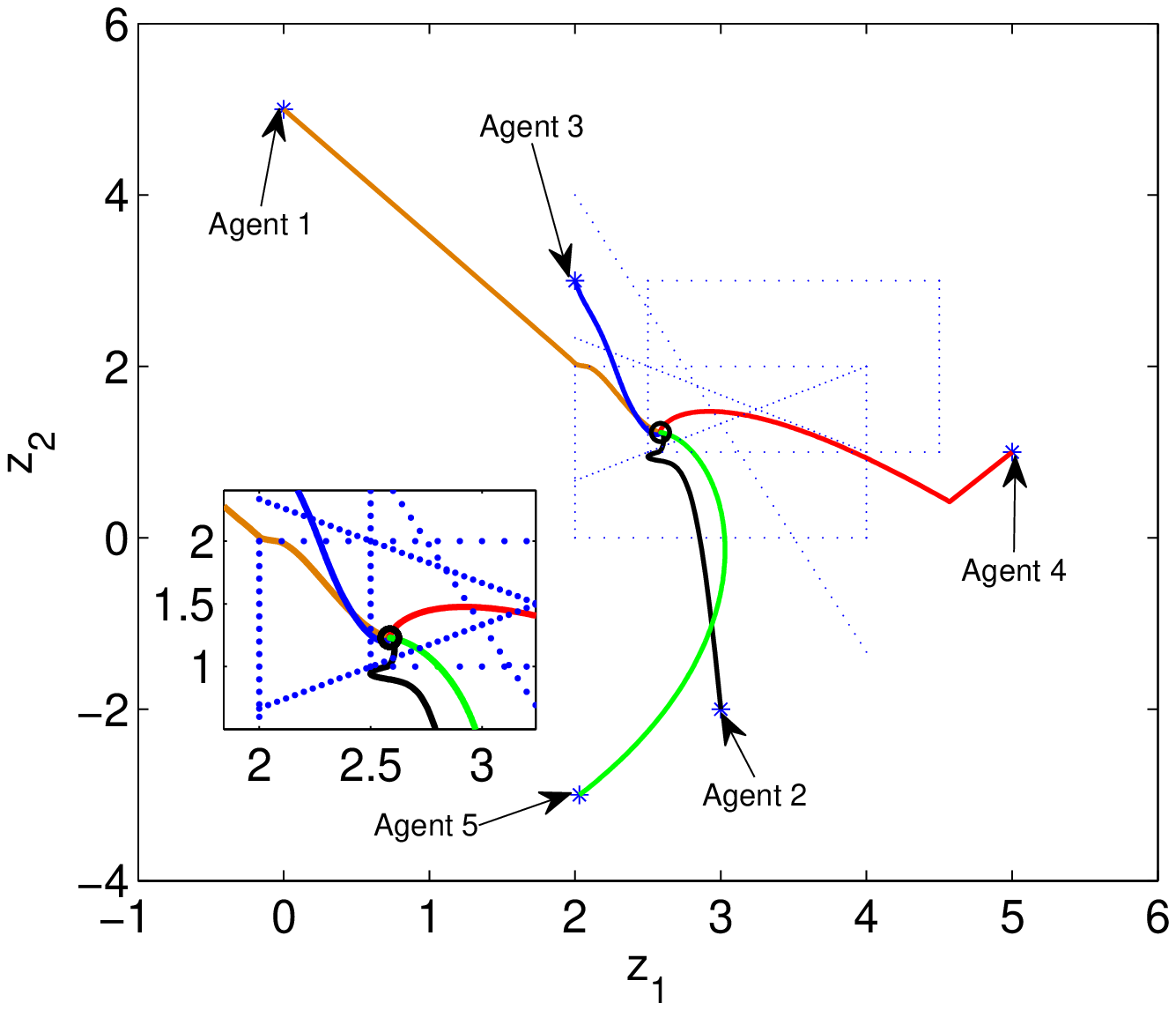}
\caption{The trajectory of the multi-agent system in the first case. Symbol ``*" represents the initial states of agents while ``$\circ$"  represents the final states of them.}\label{fig3}
\end{figure}

\begin{figure}[htbp]
\centering
\includegraphics[width=0.2\textwidth]{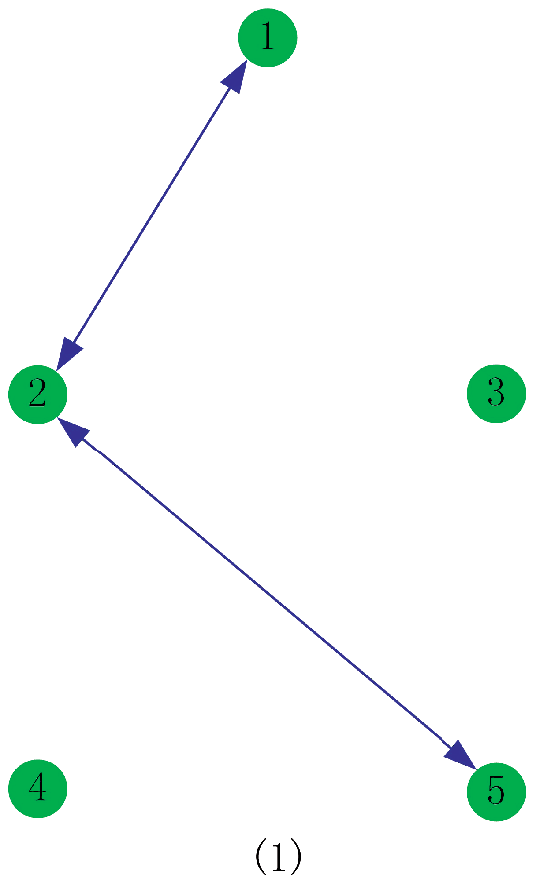}
~~~~~~~~~~~~~\includegraphics[width=0.2\textwidth]{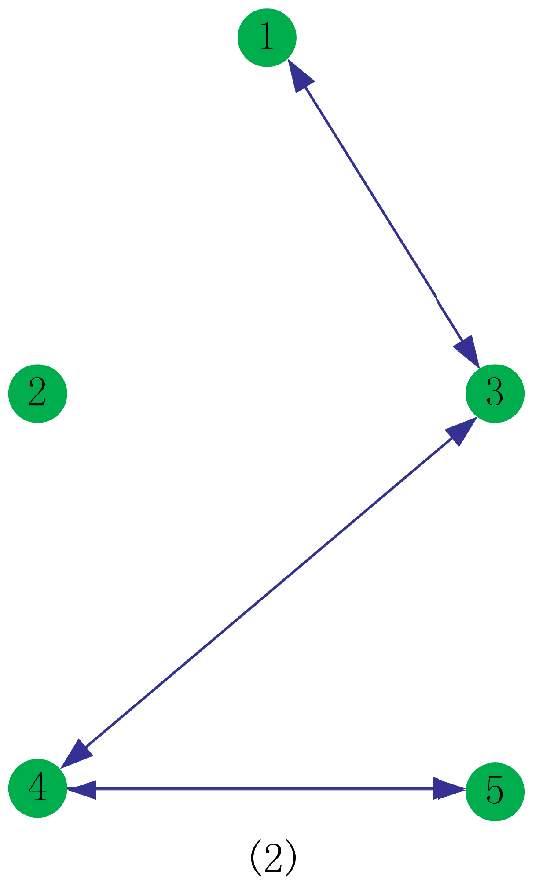}
\caption{The communication graph in the second case, which consists of  two subgraphs. The left one is labeled 1 and the right one is labeled 2.}\label{fig4}
\end{figure}

\begin{figure}
\begin{minipage}[t]{0.5\linewidth}
\centering
\includegraphics[width=0.85\textwidth]{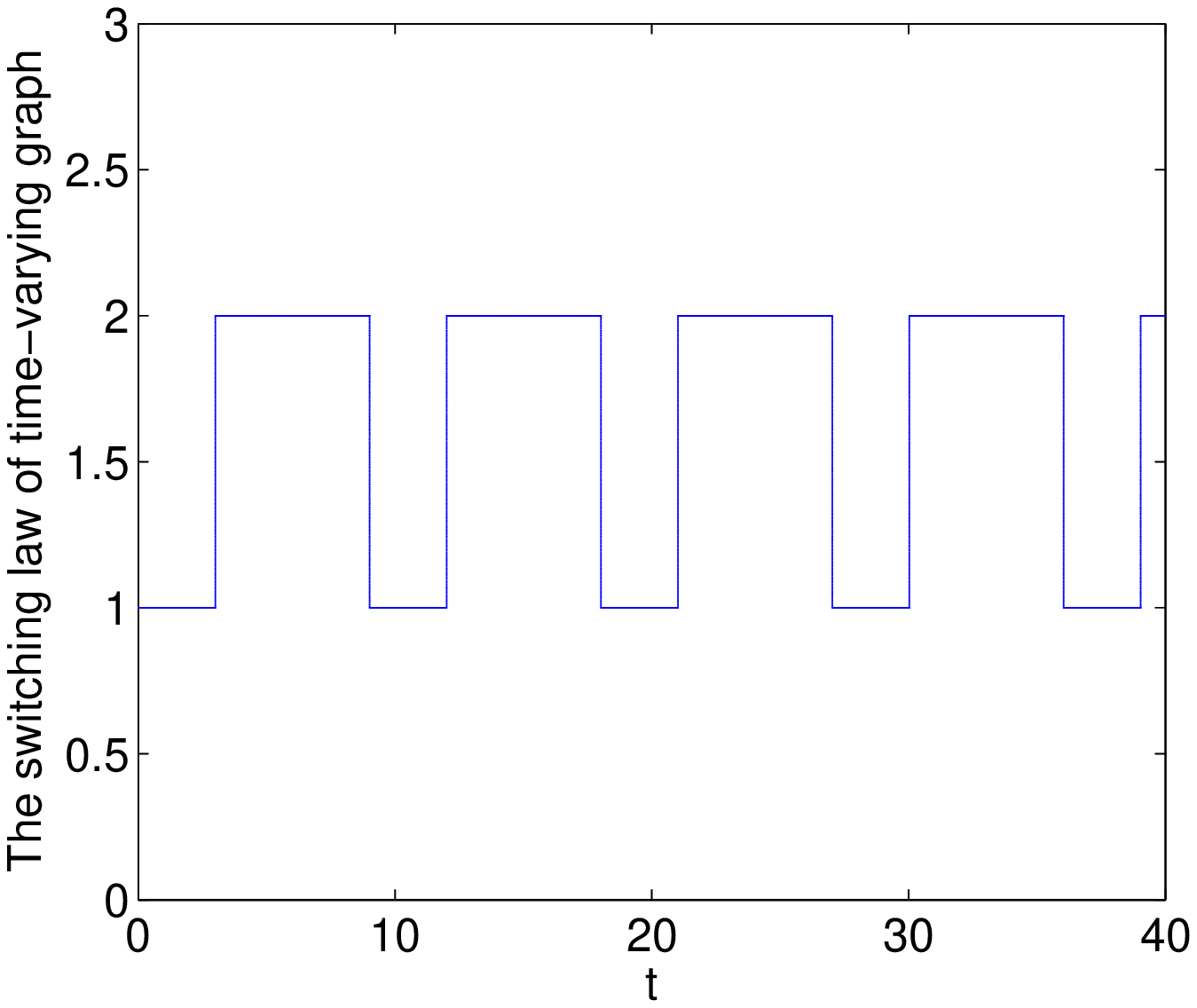}
\caption{ The switching law of the time-varying graph.}\label{fig5}
\end{minipage}
\begin{minipage}[t]{0.5\linewidth}
\centering
\includegraphics[width=0.85\textwidth]{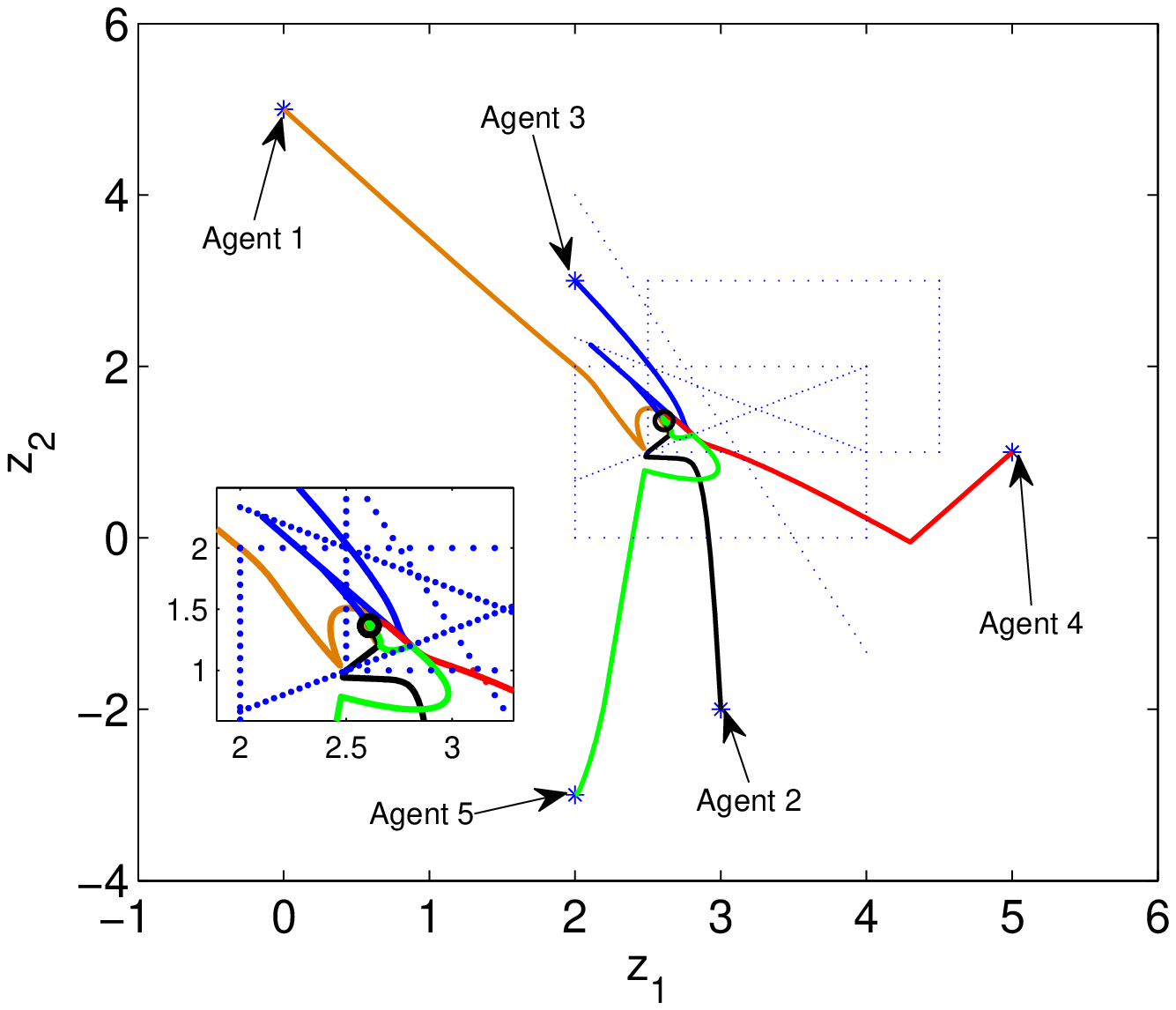}
\caption{The trajectory of the multi-agent system in the second case. Symbol ``*" represents the initial states of agents while ``$\circ$"  represents the final states of them.}\label{fig6}
\end{minipage}
\end{figure}

\begin{figure}
\centering
\includegraphics[width=0.45\textwidth]{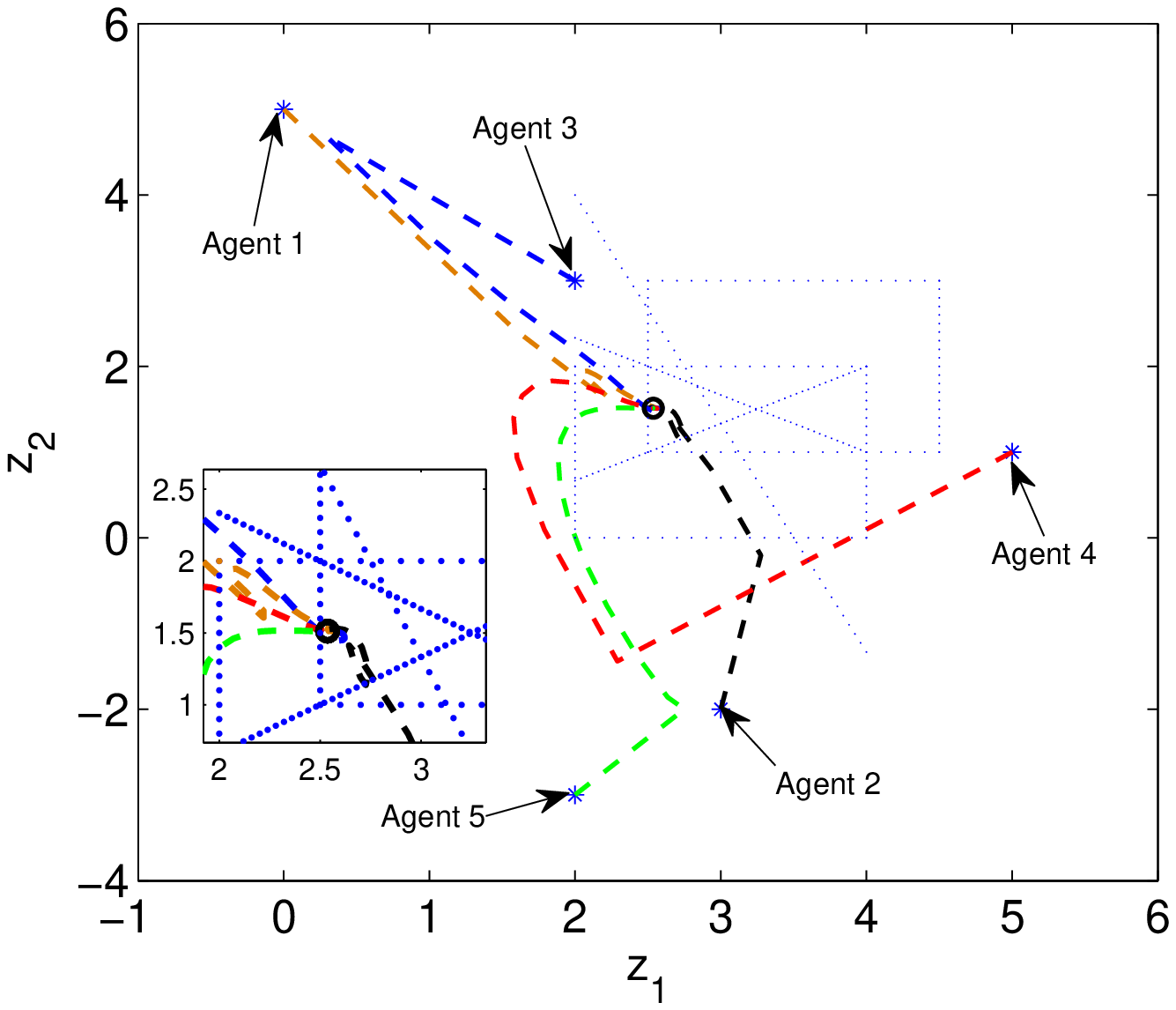}
\caption{The trajectory of the multi-agent system in the third case. Symbol ``*" represents the initial states of agents while ``$\circ$"  represents the final states of them.}\label{fig7}
\end{figure}
\section{Conclusions}\label{se8}

In this paper, the CFPs have been studied for multi-agent systems through local interactions. The distributed control algorithms were designed for both continuous- and discrete-time systems, respectively. In each case, a centralized approach was first introduced to solve the CFP. Then distributed control algorithms were proposed based on the subgradient and projection operations. The conditions associated with connectivity of the communication graph were given to ensure convergence of the distributed algorithms. The results showed that for the continuous-time case, if the communication graph is fixed and strongly connected, the MAS can reach consensus asymptotically and the consensus state is located in the solution set of the CFP. Moreover, the same result can be achieved if the $\delta-$graph associated with a time-varying graph is strongly connected. For the discrete-time case, under the condition of strong connectivity associated with the directed graph, if the control gain $h$ and the step-sizes $\alpha(t)$ and $\beta(t)$ are properly chosen, convergence of the distributed algorithm can also be guaranteed. Furthermore, a distributed gradient-based algorithm has been designed for a special case in which the CFP involves linear inequalities. Finally, simulation examples have been conducted to demonstrate the effectiveness of our results. Our future work will focus on some other interesting topics, such as the case under quantization, time delays, packet loss and communication bandwidth constraints, which will bring new challenges in solving CFPs over a network of agents.



%



\ifCLASSOPTIONcaptionsoff
  \newpage
\fi



%

\end{document}